\newcommand{\Perp}{\perp\!\!\!\perp}
\newcommand{\qed}{\hfill $\blacksquare$}
\renewcommand{\baselinestretch}{1.2}
\newtheorem{theorem}{Theorem}[section]
\newenvironment{proof}[1][Proof]{\begin{trivlist}
\item[\hskip \labelsep {\bfseries #1.}]}{\end{trivlist}}
\newtheorem{proposition}[theorem]{Proposition}
\newtheorem{lemma}[theorem]{Lemma}
\newtheorem{assumption}{Assumption}
\newtheorem{example}[theorem]{Example}
\newtheorem{definition}{Definition}
\DeclareMathOperator{\argmin}{arg\,min}
\newcommand{\cp}{\stackrel{\mathbb{P}}{\longrightarrow}}
\newcommand{\cd}{\stackrel{d}{\longrightarrow}}
\newcommand{\as}{\stackrel{a.s.}{\longrightarrow}}
\newcommand*\rel@kern[1]{\kern#1\dimexpr\macc@kerna}
\newcommand*\widebar[1]{%
  \begingroup
  \def\mathaccent##1##2{%
    \rel@kern{0.8}%
    \overline{\rel@kern{-0.8}\macc@nucleus\rel@kern{0.2}}%
    \rel@kern{-0.2}%
  }%
  \macc@depth\@ne
  \let\math@bgroup\@empty \let\math@egroup\macc@set@skewchar
  \mathsurround\z@ \frozen@everymath{\mathgroup\macc@group\relax}%
  \macc@set@skewchar\relax
  \let\mathaccentV\macc@nested@a
  \macc@nested@a\relax111{#1}%
  \endgroup
}
\begin{document}

\title{A GMM approach to estimate\\the roughness of stochastic volatility\thanks{A previous draft of this article was circulated under the title ``Roughness in spot variance? A GMM approach for estimation of fractional log-normal stochastic volatility models using realized measures''. The paper was presented at the 2019 ``Frontiers in Quantitative Finance'' workshop at Copenhagen University, Denmark, the 2021 European Winter Meeting of the Econometric Society, the 2022 DAGStat conference in Hamburg, Germany, the 2022 Aarhus--Singapore Management University (SMU) joint workshop on Volatility, the 2022 Vienna--Copenhagen (VieCo) conference on Financial Econometrics in Copenhagen, Denmark, and in seminars at McMaster University, SMU, and University of Waterloo. We appreciate input from attendees at these venues. We are also grateful for insightful comments and suggestions made by the co-editor (Torben G. Andersen), the associate editor, and two anynomous referees at the Journal of Econometrics. In addition, we thank Mikkel Bennedsen, Christa Cuchiero, Masaaki Fukasawa, Jim Gatheral, Shin Kanaya, Jia Li, Roberto Ren\`{o}, Shuping Shi and Jun Yu for feedback. Christensen was partially funded by a grant from the Independent Research Fund Denmark (DFF 1028--00030B). This work was also supported by the Center for Research in Econometric Analysis of Time Series (CREATES). Please address correspondence to: kim@econ.au.dk.}}
	
\author{Anine E. Bolko\thanks{Department of Economics and Business Economics, Aarhus University, Fuglesangs All\'{e} 4, 8210 Aarhus V, Denmark.} \!\!\!\!\!
\and Kim Christensen\footnotemark[2] \!\!\!\!\!
\and Mikko S. Pakkanen\thanks{Department of Mathematics, Imperial College London, South Kensington Campus, London SW7 2AZ, UK.} $^{\text{,}}$ \kern-0.15cm \footnotemark[2] \!\!\!\!\!
\and Bezirgen Veliyev\footnotemark[2]}

\date{September 7, 2022}

\maketitle

\vspace*{-1.00cm}

\begin{abstract}
We develop a GMM approach for estimation of log-normal stochastic volatility models driven by a fractional Brownian motion with unrestricted Hurst exponent. We show that a parameter estimator based on the integrated variance is consistent and, under stronger conditions, asymptotically normally distributed. We inspect the behavior of our procedure when integrated variance is replaced with a noisy measure of volatility calculated from discrete high-frequency data. The realized estimator contains sampling error, which skews the fractal coefficient toward ``illusive roughness.'' We construct an analytical approach to control the impact of measurement error without introducing nuisance parameters. In a simulation study, we demonstrate convincing small sample properties of our approach based both on integrated and realized variance over the entire memory spectrum. We show the bias correction attenuates any systematic deviance in the parameter estimates. Our procedure is applied to empirical high-frequency data from numerous leading equity indexes. With our robust approach the Hurst index is estimated around $0.05$, confirming roughness in stochastic volatility.

\medskip \noindent \textbf{JEL Classification}: C10; C50.

\medskip \noindent \textbf{Keywords}: fractional Brownian motion; GMM estimation; Hurst exponent; log-normal stochastic volatility; realized variance; roughness.
\end{abstract}

\vfill

\thispagestyle{empty}

\pagebreak

\section{Introduction} \setcounter{page}{1}

Stochastic volatility (SV) models are pervasive in finance. Over the years, a variety of different models---each with its own dynamics---were developed, such as the log-normal model \citep*[e.g.][]{taylor:86a}, the square-root diffusion \citep*[e.g.][]{heston:93a}, or more complicated processes where volatility is driven by a non-Gaussian pure-jump component, e.g. \citet*{barndorff-nielsen-shephard:01a,todorov-tauchen:11a}.

In this paper, we investigate the log-normal SV model, which has been extensively studied in previous work, e.g. \citet*{alizadeh-brandt-diebold:02a, christoffersen-jacobs-mimouni:10a, hull-white:87a}. This class is a promising starting point, because the unconditional distribution of realized variance is close to log-normal \citep*[see e.g.][]{andersen-bollerslev-diebold-ebens:01a, andersen-bollerslev-diebold-labys:03a, christensen-thyrsgaard-veliyev:19a}.

While there is a general acceptance that log-normal volatility offers a decent description of return variation in financial asset prices, there is no consensus on the properties of the background driving Gaussian process. In the SV setting, it is often assumed to be a standard Brownian motion. The mean-reversion and volatility-of-volatility parameters then control both the local properties of volatility and also determine its long-run persistence. There are multiple papers dealing with estimation of the parameters of the log-normal SV model, for example using method of moments- or likelihood-based approaches \citep*[e.g.][]{taylor:86a, melino-turnbull:90a, duffie-singleton:93a, harvey-ruiz-shephard:94a, gallant-hsieh-tauchen:97a, fridman-harris:98a}. In the context of generalized method of moments (GMM) estimation, \citet*{andersen-sorensen:96a} offer further advice on how to select moment criteria and the weighting matrix in order to get good results in small samples.

When the driving process is a fractional Brownian motion, which neither is a semimartingale nor has independent increments, less is known. In this setting, part of the memory in volatility is reallocated to the fractal index or \citet*{hurst:51a} exponent. \citet*{comte-renault:98a} propose a log-normal SV model, where the Hurst exponent is larger than one-half; the value implied by a standard Brownian motion. This induces positive serial correlation---or long-memory---in the increments of the process. \citet*{bennedsen-lunde-pakkanen:22a, euch-rosenbaum:18a, fukasawa-takabatake-westphal:22a, gatheral-jaisson-rosenbaum:18a}, among others, study ``roughness'' as captured by a fractal index smaller than one-half, rendering volatility highly erratic---or anti-persistent---at short time scales. In the fractional setting, the Hurst index is typically pre-estimated based on a semi-parametric procedure, before the other parameters are recovered. While this may yield consistent parameter estimates, it is generally inefficient and may be severely biased in finite samples.

In this paper, we extend the classical GMM procedure to joint estimation of the parameters of the log-normal SV model with a general fractal index. An attractive feature of our paper is that moment expressions are derived in convenient integral form facilitating the implementation without recourse to simulation-based approaches. As in many papers preceeding this one, we appeal to the time series properties of integrated variance to construct our estimator, an idea pioneered by \citet*{bollerslev-zhou:02a}.

In practice, the integrated variance is unobserved. Realized variance, which is computed from high-frequency data, is a consistent estimator of integrated variance and often replaces it in the calculations. In previous work, this is done by showing convergence in probability or distribution of the parameter estimator in a double-asymptotic in-fill and long-span setting, such that the volatility discretization error is small enough to be ignored. In the subsequent applications, however, the volatility proxy enters directly in place of integrated variance.

Substituting the latent volatility with a proxy, however, entails a measurement error in finite samples, which obfuscates the underlying integrated variance dynamics. This can be detrimental to the estimation procedure if unaccounted for \citep*[e.g.][]{meddahi:02a, hansen-lunde:14a}, because the moment conditions imposed on integrated variance are generally not valid for the noisy proxy. \citet*{barndorff-nielsen-shephard:02a} employ a state-space system and the Kalman filter to smooth out realized variance prior to maximum quasi-likelihood estimation of their SV model, see also \citet*{meddahi:03a}. In this paper, we introduce a high-level assumption employing a realized measure to proxy for integrated variance following \citet*{patton:11a}. We construct a bias correction that controls for the measurement error and show how to embed it analytically into the GMM setting without introducing additional nuisance parameters, as opposed to \citet*{bollerslev-zhou:02a}.

Our proposed estimator is consistent and asymptotically normal. The main asymptotic theory is long-span with time going to infinity but high-frequency data sampled at a fixed frequency. As an aside, we complement the analysis by deriving the double-asymptotic result, where the estimation error correction is immaterial.

We investigate our estimator in a simulation study, where various configurations of a fractional log-normal SV model with Hurst parameter covering the rough, standard and long-memory setting are inspected. The parameter estimates are centered closely around the true values---suggesting our procedure is approximately unbiased---and relatively accurate, once the bias correction is adopted. In an empirical application, we study an extensive selection of major equity indexes and confirm rough volatility, even after ironing out the effect of noise in the volatility proxy. In those data we consistently locate a roughness parameter around $0.05$.

The rest of this paper is organized as follows. Section \ref{section:setting} presents the log-normal SV model and studies the properties of integrated variance within this framework. The GMM approach is introduced in Section \ref{sec:GMM_estimation}. Section \ref{sec:simulation-study-roughnessIV} examines the performance of our estimator in a Monte Carlo study. In Section \ref{section:empirical}, we apply the procedure to real data and compare our findings with the previous literature. We conclude in Section \ref{section:conclusion} and leave some theoretical derivations to the appendix. An online-only supplemental appendix contains further Monte Carlo results and empirical analysis.

\section{The setting} \label{section:setting}

We model the log-price of a financial asset, $X = (X_t)_{t \geq 0}$, as an adapted continuous-time stochastic process defined on a filtered probability space $(\Omega, \mathcal{F}, (\mathcal{F}_{t})_{t \geq 0}, \mathbb{P})$. We suppose a standard arbitrage-free market, in which asset prices are of semimartingale form \citep*[e.g.,][]{back:91a,delbaen-schachermayer:94a}. We assume $X$ can be described by an It\^{o} process:
\begin{equation} \label{eq:LogPriceDynamics}
X_{t} = X_{0} + \int_{0}^{t} \mu_{s} \text{d}s + \int_{0}^{t} \sigma_{s} \text{d}W_{s}, \qquad t \geq 0,
\end{equation}
where $X_{0}$ is $\mathcal{F}_{0}$-measurable, $\mu = ( \mu_{t})_{t \geq 0}$ is a predictable drift process, $\sigma = ( \sigma_{t})_{t \geq 0}$ is a c\`{a}dl\`{a}g volatility process and $W = (W_{t})_{t \geq 0}$ is a standard Brownian motion.

The spot variance $\sigma^{2} = (\sigma_{t}^{2})_{t \geq 0}$ is given by:
\begin{equation} \label{eq:LognormalInstantaneousVariance}
\sigma_{t}^{2} = \xi \exp \left(Y_{t} - \frac{1}{2} \kappa(0) \right), \qquad t \geq 0,
\end{equation}
where $\xi \in \Xi \subset (0, \infty)$ represents the unconditional mean of the stochastic variance, while $Y = (Y_{t})_{t \geq 0}$ is a mean zero stationary Gaussian process with autocovariance function (acf) $\kappa(u) = \text{cov}(Y_{0},Y_{u}) = \kappa_{ \phi}(u)$, $u \geq 0$, parameterized by $\phi \in \Phi \subset \mathbb{R}^{p}$. We assume $\Xi$ and $\Phi$ are compact, so that $\Theta = \Xi \times \Phi \subset \mathbb{R}^{p+1}$ is compact, and write $\theta = (\xi, \phi) \in \Theta$.

Note that we do not restrict volatility to a Markovian or semimartingale setting.\footnote{The log-normal distribution is invariant to (non-zero) power transformations. This implies that ``volatility,'' which in financial economics is more often associated with the standard deviation---or the square-root of the variance---is also log-normal if the variance is (and vice versa). Hence, volatility is applied loosely here to mean either variance or standard deviation. In the text, the concrete meaning is apparent from the context and should not be the cause of any confusion.} This is not a problem for absence of arbitrage and existence of an equivalent risk-neutral probability measure (although it is not unique in our setup), since volatility is not the price of a tradable asset.\footnote{In 2004, CBOE launched derivatives on the VIX index, which is a weighted average of implied volatility from a basket of S\&P 500 options, rendering volatility at least partially tradable (see, e.g., the white paper available at https://www.cboe.com/micro/vix/vixwhite.pdf).} The main limitation of \eqref{eq:LognormalInstantaneousVariance} is that it has continuous sample paths. Our model therefore excludes jumps in volatility, which are empirically relevant \citep*[e.g.,][]{bandi-reno:16a, eraker-johannes-polson:03a, todorov-tauchen:11a}. We leave a theoretical development of our framework in presence of volatility jumps to future research.\footnote{In the supplemental appendix, we simulate log-volatility as a jump-diffusion process driven by a standard Brownian motion with Hurst exponent $H = 1/2$. We implement the GMM procedure developed for the continuous sample path version of the model on it and observe that jumps in volatility do not cause any discernible spurious roughness. An explanation of this effect is offered there.}

To maintain a streamlined exposition, we also exclude a jump component in $X$. The theory should at least be robust to the addition of finite-activity jumps, but then one needs to pay attention to the practical implementation.\footnote{The realized variance introduced in \eqref{equation:rv} is not a consistent estimator for integrated variance in the presence of price jumps. In such instances, the truncated realized variance of \citet*{mancini:09a} or the bipower variation of \citet*{barndorff-nielsen-shephard:04b} can be exploited. Below, we derive a bias correction for realized variance and bipower variation to account for measurement error. The correction developed for realized variance also applies to the truncated version, at least so long as the jumps are finite-activity, \citep*[see, e.g.,][Proposition 1, which does not require semimartingale volatility and therefore is applicable in our setting]{li-todorov-tauchen:17a}.}

The integrated variance on day $t$ is defined as:
\begin{equation} \label{equation:iv}
IV_{t} = \int_{t-1}^{t} \sigma_{s}^{2} \text{d}s, \qquad t \in \mathbb{N},
\end{equation}
and holds information on the parameters of the model. Our estimation procedure exploits this by measuring integrated variance on a daily basis. The subscript $t$ indicates the end of a day. We later substitute integrated variance with a realized measure of volatility computed from intraday high-frequency data of $X$.

We exploit the dynamics of integrated variance in this paper. This is in contrast to the application of spot variance in previous work, e.g. \citet*{bennedsen-lunde-pakkanen:22a, fukasawa-takabatake-westphal:22a, gatheral-jaisson-rosenbaum:18a}. While spot variance is more ideal, it is associated with numerous pitfalls in practice. Firstly, spot variance estimation requires ultra high-frequency data, which may not readily be available. Even if they are, sampling at the highest frequency may induce an accumulation of microstructure noise that can distort the analysis \citep*[e.g.,][]{hansen-lunde:06b}. The calculation of microstructure noise-robust estimators is complicated and they suffer from poor rates of convergence \citep*[e.g.,][]{barndorff-nielsen-hansen-lunde-shephard:08a, jacod-li-mykland-podolskij-vetter:09a, zhang-mykland-ait-sahalia:05a}. Secondly, intraday spot variance is driven by a pronounced deterministic diurnal pattern, which needs to be controlled for if the properties of the underlying stochastic process are to be uncovered \citep*{andersen-bollerslev:97b, andersen-bollerslev:98b}. Working at the daily frequency sidesteps this problem. Thirdly, spot variance estimators converge at a slow rate---relative to estimators of integrated variance---and, in the context of our model, often lack associated CLTs. The smoothing entailed by integrating spot variance overcomes this issue to some extent.

\subsection{Moment structure of integrated variance}

In this section, we derive the moment structure of integrated variance in \eqref{equation:iv} within the framework of the general log-normal SV process \eqref{eq:LogPriceDynamics} -- \eqref{eq:LognormalInstantaneousVariance}. This serves as the foundation of our GMM procedure to estimate the parameters of the model.

The starting point is the moment conditions:
\begin{equation} \label{equation:moment-condition}
\mathbb{E} \big[g(IV_{t}; \theta_{0} ) \big] = 0,
\end{equation}
for a measurable function $g$, where $\theta_{0} \in \Theta$ is the true parameter vector.

The expectation in \eqref{equation:moment-condition} can be hard to calculate for most choices of $g$. While spot variance is log-normal, the integrated variance is a sum of correlated log-normal random variables, and figuring out its distribution is a highly nontrivial statistical problem.\footnote{The distribution of a sum of correlated log-normals can be approximated by the Fenton-Wilkinson method, where one replaces it with a single log-normal random variable. However, the approximation is often inaccurate.} This also renders maximum likelihood estimation of the model complicated.

As the next theorem illuminates, the computation in \eqref{equation:moment-condition} is manageable for raw moments, where the order of the expectation operator and volatility integral can be reversed.

\begin{theorem} \label{theo:MomentsIVGeneral} Suppose that \eqref{eq:LogPriceDynamics} -- \eqref{eq:LognormalInstantaneousVariance} hold. Then, the integrated variance process $(IV_{t})_{t \in \mathbb N}$ is stationary with the following first and second-order moment structure:
\begin{align} \label{equation:iv-moment}
\begin{split}
\mathbb{E} [IV_{t}] &= \xi, \\[0.10cm]
\mathbb{E} [IV_{t}IV_{t+ \ell}] &= \xi^{2} \int_{0}^{1}(1 - y) \Big[ \exp \big( \kappa( \ell + y) \big) + \exp \big( \kappa( | \ell - y |) \big) \Big] \text{\upshape{d}}y,
\end{split}
\end{align}
for $\ell \in \mathbb{N} \cup \{ 0 \}$.

The third and fourth moment of integrated variance are:
\begin{align}
\mathbb{E} [IV_{t}^{3}] &= \,\,\, 6 \xi^{3} \int_{0}^{1} \int_{0}^{x} (1-x) f(x,y) \mathrm{d}y \mathrm{d}x, \\[0.10cm]
\mathbb{E} [IV_{t}^{4}] &= 24 \xi^{4} \int_{0}^{1} \int_{0}^{x} \int_{0}^{y} (1-x) g(x,y,z) \mathrm{d}z \mathrm{d}y \mathrm{d}x,
\end{align}
where
\begin{align}
\begin{split}
f(x,y) &= \exp \big( \kappa(|x-y|)+ \kappa(|x|)+ \kappa(|y|) \big), \\[0.10cm]
g(x,y,z) &= \exp \big( \kappa(|x-y|)+ \kappa(|x-z|)+ \kappa(|y-z|)+ \kappa(|x|)+ \kappa(|y|)+ \kappa(|z|) \big). \\[0.10cm]
\end{split}
\end{align}
In addition, suppose the following conditions hold:
\begin{enumerate}
\item[(a)] \label{item:lem-limit} $\lim_{\ell \rightarrow \infty} \kappa( \ell) = 0$,
\item[(b)] \label{item:lem-cont} there exists an integrable function $\phi : [-1,1] \rightarrow \mathbb{R}$ such that $\displaystyle \frac{ \kappa( \ell+y)}{ \kappa( \ell)} \rightarrow \phi(y)$ as $\ell \rightarrow \infty$ for any $y \in [-1,1]$,
\item[(c)] \label{item:lem-bound} ${\displaystyle \limsup_{\ell \rightarrow \infty} \sup_{y \in [-1,1]} \bigg| \frac{ \kappa( \ell+y)}{ \kappa(\ell)} \bigg|< \infty}$.
\end{enumerate}
Then, as $\ell \rightarrow \infty$:
\begin{equation}
\mathbb{E} \big[(IV_{t}- \xi)(IV_{t+ \ell}- \xi)] \sim \xi^{2} \kappa( \ell) \int_{-1}^{1}(1-|y|) \phi(y) \text{\upshape{d}}y,
\end{equation}
where $f( \ell) \sim g( \ell)$ denotes asymptotic equivalence, i.e. $f( \ell)/g( \ell) \rightarrow 1$ as $\ell \rightarrow \infty$.
\end{theorem}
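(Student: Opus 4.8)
The plan is to reduce every moment of $IV_t$ to an integral of $\mathbb{E}\big[\prod_i\sigma^2_{u_i}\big]$ over a cube, evaluate that expectation through the Gaussian moment generating function, and then reshape the resulting multiple integral into the stated form by a volume-preserving change of variables; the asymptotic equivalence will then drop out of a dominated-convergence argument built on hypotheses (a)--(c). First I would record stationarity: since $Y$ is strictly stationary and $\sigma^2_s=\xi\exp(Y_s-\tfrac12\kappa(0))$ is a fixed measurable transform of $Y_s$, the process $\sigma^2$ is strictly stationary, and since $IV_t$ integrates $\sigma^2$ over a unit interval, shift-invariance of the finite-dimensional laws of $\sigma^2$ passes to $(IV_t)_{t\in\mathbb N}$. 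Because $\sigma^2_s\ge 0$, Tonelli's theorem gives $\mathbb{E}[IV_t]=\int_{t-1}^{t}\mathbb{E}[\sigma^2_s]\,\mathrm{d}s$, and $Y_s\sim N(0,\kappa(0))$ makes the integrand identically $\xi$, so $\mathbb{E}[IV_t]=\xi$ and $IV_t\in L^1$.

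For the higher raw moments I would expand $IV_t^{a}IV_{t+\ell}^{b}$ (with $a+b\le 4$) as a multiple integral of $\prod_i\sigma^2_{u_i}$ over the appropriate product of unit intervals and invoke Tonelli again (the integrand is nonnegative, and $|\kappa|\le\kappa(0)$ makes it bounded on the bounded domain, so all quantities are finite). The one genuine computation is that $(Y_{u_1},\dots,Y_{u_n})$ is centered Gaussian with covariances $\kappa(|u_i-u_j|)$, so the Gaussian m.g.f. yields
\[
\mathbb{E}\!\Big[\textstyle\prod_{i=1}^{n}\sigma^2_{u_i}\Big]=\xi^{n}\exp\!\Big(\textstyle\sum_{i<j}\kappa(|u_i-u_j|)\Big),
\]
the term $-\tfrac n2\kappa(0)$ cancelling the diagonal of the covariance matrix. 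For $n=2$, shifting both intervals to $[0,1]$ reduces $\mathbb{E}[IV_tIV_{t+\ell}]$ to $\xi^2\int_{[0,1]^2}\exp(\kappa(|\ell+b-a|))\,\mathrm{d}a\,\mathrm{d}b$; the identity $\int_{[0,1]^2}h(a-b)\,\mathrm{d}a\,\mathrm{d}b=\int_{-1}^{1}(1-|y|)h(y)\,\mathrm{d}y$ together with splitting the integral at $0$ produces \eqref{equation:iv-moment}. For $n=3,4$ I would use the permutation symmetry of the integrand to restrict to the ordered simplex $\{t-1\le u_n\le\cdots\le u_1\le t\}$ (picking up a factor $n!$), substitute $x_i=u_i-u_n$ (a unit-Jacobian map), integrate out the free location variable $u_n$ over an interval of length $1-x_1$, and identify the exponent as $f(x_1,x_2)$ or $g(x_1,x_2,x_3)$; this is exactly the stated triple and quadruple integral.

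For the asymptotic equivalence, the $n=2$ formula and $\int_{-1}^{1}(1-|y|)\,\mathrm{d}y=1$ give $\mathbb{E}[(IV_t-\xi)(IV_{t+\ell}-\xi)]=\xi^2\int_{-1}^{1}(1-|y|)\big(e^{\kappa(|\ell-y|)}-1\big)\,\mathrm{d}y$. For $\ell$ large, $|\ell-y|=\ell-y\ge 0$ on $[-1,1]$, so I would divide by $\kappa(\ell)$ and pass $\ell\to\infty$ under the integral. Pointwise, $\frac{e^{\kappa(\ell-y)}-1}{\kappa(\ell)}=\frac{e^{\kappa(\ell-y)}-1}{\kappa(\ell-y)}\cdot\frac{\kappa(\ell-y)}{\kappa(\ell)}\to 1\cdot\phi(-y)$, using (a) (which forces $\kappa(\ell-y)\to0$, hence the first factor $\to1$) and (b). For the dominating function, (c) supplies $C<\infty$ and $\ell_0$ with $|\kappa(\ell-y)|\le C|\kappa(\ell)|$ for all $y\in[-1,1]$ and $\ell\ge\ell_0$, so $\big|\frac{e^{\kappa(\ell-y)}-1}{\kappa(\ell)}\big|\le Ce^{C|\kappa(\ell)|}\le 2C$ for $\ell$ large (by $|e^x-1|\le|x|e^{|x|}$ and $\kappa(\ell)\to0$), a constant integrable on $[-1,1]$; dominated convergence then gives the limit $\int_{-1}^{1}(1-|y|)\phi(-y)\,\mathrm{d}y$, which equals $\int_{-1}^{1}(1-|y|)\phi(y)\,\mathrm{d}y$ after $y\mapsto-y$, as claimed.

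The routine ingredients are the Gaussian m.g.f. identity and the Tonelli interchanges. The steps requiring care are the two changes of variables for the third and fourth moments — in particular keeping the ordered simplex and the $1-x_1$ location factor straight — and, above all, checking that (c) genuinely yields an integrable dominating function \emph{uniformly} in $\ell$, since the entire $\ell\to\infty$ conclusion rests on that uniform control of $\kappa(\ell-y)/\kappa(\ell)$ over $y\in[-1,1]$.
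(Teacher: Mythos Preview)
Your argument is correct. For the first two moments and the asymptotic equivalence, your approach coincides with the paper's: Tonelli, the Gaussian moment-generating function, the one-dimensional reduction $\int_{[0,1]^2}h(a-b)\,\mathrm{d}a\,\mathrm{d}b=\int_{-1}^{1}(1-|y|)h(y)\,\mathrm{d}y$ (which is exactly the content of the paper's auxiliary Lemma~\ref{lemma:integration}), and a dominated-convergence argument under (a)--(c). The only cosmetic difference in the asymptotic part is that the paper splits $e^{x}-1=x+r(x)$ and treats the linear piece and the quadratic remainder $|r(x)|\le 3x^{2}$ separately, whereas you handle everything at once via $|e^{x}-1|\le |x|e^{|x|}$.

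For the third and fourth moments the routes differ genuinely. The paper changes variables to differences from a \emph{fixed} coordinate (not the smallest one), producing an integral over $[-1,1]^{n-1}$ with a weight $c(\cdot)$ or $d(\cdot)$; it then partitions this cube into sign-quadrants (for $n=3$) or sign-octants (for $n=4$), computes the weight on each piece, and uses further substitutions together with the symmetries of $f$ and $g$ to show that every piece agrees with the integral over the positive ordered simplex. Your route---restrict first to the ordered simplex $t-1\le u_{n}\le\cdots\le u_{1}\le t$ by full permutation symmetry (factor $n!$), then shift by the \emph{smallest} coordinate $u_{n}$ and integrate it out over $[t-1,t-x_{1}]$---reaches the same formula in a single step and avoids the quadrant bookkeeping entirely. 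It is the cleaner derivation, and in fact it immediately \emph{proves} (rather than merely suggests) the general $r$th-moment formula that the paper states as a conjecture after the theorem.
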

The proof of Theorem \ref{theo:MomentsIVGeneral} exploits a convenient change of variables technique, which we introduce in Lemma \ref{lemma:integration} in the appendix. This allows to express the $r$th moment of integrated variance as an ($r$-1)-dimensional integral.

Without proof, we conjecture a general formula for $\mathbb{E} [IV_{t}^{r}]$ by induction:
\begin{equation}
\mathbb{E} [IV_{t}^{r}] = r! \, \xi^{r} \int_{0}^{1} \int_{0}^{x_{1}} \cdots \int_{0}^{x_{r-2}} (1-x_{1}) g(x_{1}, \ldots, x_{r-1}) \mathrm{d} x_{r-1} \ldots \mathrm{d}x_{1},
\end{equation}
where $r \geq 2$ is a positive integer, and
\begin{equation}
g(x_{1}, \ldots, x_{r-1}) = \exp \bigg( \sum_{1 \leq i < j \leq r-1} \kappa(|x_{i} - x_{j}|) + \sum_{i = 1}^{r - 1} \kappa(|x_{i}|) \bigg).
\end{equation}
In many realistic models, the above integrals do not possess analytic solutions and need to be approximated or solved numerically. In the fractional SV model entertained below, the acf of spot variance has a sharp incline near the origin, which gets steeper for smaller $H$. In that model, even the third moment can be prohibitively time-consuming to calculate, at least within the grasp of our computing powers. This makes higher-order moments unwieldy to work with in practice. Furthermore, because the distribution of integrated variance is generally heavy-tailed and highly right-skewed, such moments are also hard to estimate. As if that was not bad enough, in the noisy proxy setting the moment conditions incorporate estimation error. The updated expectations involve a convolution of the integrated variance and measurement error. This is tough to deal with even for raw moments. As a practical compromise, our estimation procedure therefore relies on low-order moments.

In the last part of Theorem \ref{theo:MomentsIVGeneral}, condition (a) is a minimal condition that is necessary and sufficient for a stationary Gaussian process to be ergodic. This follows from the classical result of \citet*{maruyama:49a}. It is evidently true for the models in this paper. Condition (b) and (c) are more technical and restrict the oscillation of $\kappa( \ell)$ as $\ell \rightarrow \infty$.

As an illustration, suppose there exists $\ell_{0}>0$ such that
\begin{equation}\label{equation:covariance-asymptotic}
\kappa( \ell) = \ell^{- \beta} e^{- \rho \ell} L( \ell), \quad \ell \geq \ell_{0},
\end{equation}
where $\beta \geq 0$ and $\rho \geq 0$ with $\min(\beta, \rho)>0$, and for some slowly varying function $L : (0, \infty) \rightarrow (0, \infty)$, i.e. a function for which $\displaystyle \lim_{x \rightarrow \infty} \frac{L(cx)}{L(x)} = 1$ for all $c>0$. For example, if $L(x)$ converges to a strictly positive limit as $x \rightarrow \infty$, then it is evidently slowly varying. Appealing to the uniform convergence theorem for slowly varying functions \citep*[Theorem 1.5.2]{bingham-goldie-teugels:89a}, condition (a)--(c) can be verified with $\phi(y) = e^{- \rho y}$.

\subsection{The fractional SV model} \label{section:fractionalSV}

While the theoretical results developed in this paper apply more broadly to general log-normal SV models, in our illustrations---both the simulation study and empirical application---we zoom in on the fractional SV (fSV) model, in which the volatility is assumed to be the exponential of a fractional Ornstein-Uhlenbeck (fOU) process:
\begin{equation} \label{equation:fOU}
Y_{t} = \nu \int_{0}^{t} e^{- \lambda (t-s)} \text{\upshape{d}}B_{s}^{H}, \quad t \geq 0,
\end{equation}
where $\nu, \lambda > 0$, and $B^{H} = (B_{t}^{H})_{t \geq0}$ is a fractional Brownian motion (fBm) with Hurst index $H \in (0,1)$.\footnote{A fBm started at the origin ($B_{0}^{H} = 0$) with Hurst exponent $H \in (0,1)$ is a centered Gaussian process with covariance function $E \big[B_{t}^{H} B_{s}^{H} \big] = \frac{1}{2} \Big( |t|^{2H} + |s|^{2H} - |t-s|^{2H} \Big)$. It can be constructed as a weighted infinite moving average of past increments to a standard Brownian motion following the \citet*[][Definition 2.1]{mandelbrot-vanness:68} representation: $\displaystyle B_{t}^{H} = \frac{1}{ \Gamma(H+1/2)} \bigg\{ \int_{- \infty}^{0} \Big[ (t-s)^{H-1/2} - (-s)^{H-1/2} \Big] \text{d}B_{s} + \int_{0}^{t} (t-s)^{H-1/2} \text{d}B_{s} \bigg\}$, where $\Gamma(\cdot)$ is the Gamma function. The process is self-similar of index $H$ and has stationary but not independent increments (except for $H = 1/2$). Its sample paths are locally H\"{o}lder continuous up to order $H$. As readily seen, the fBm reduces to a standard Brownian motion for $H = 1/2$.}
This is a standard log-normal SV model for $H = 1/2$. The fractional version was introduced by \citet*{comte-renault:98a} in a long-memory setting $(H>1/2)$ and recently in a rough setting $(H<1/2)$ by \citet*{gatheral-jaisson-rosenbaum:18a}. Our main interest in this model is that we aim to assess the empirical level of the Hurst exponent, when all parameters are estimated jointly. Hence, in our Monte Carlo analysis we pay particular attention to the accuracy of the estimation error in $H$.\footnote{In Appendix \ref{appendix:gamma-bss}, we provide the associated analysis for GMM estimation of an alternative log-normal SV model, namely the Brownian semistationary process.}

The acf of the fOU model was derived in the following convenient form by \citet*{garnier-solna:18a}, which---when adapted to the present parameterization of the model and assuming $Y_{0}$ is drawn from its stationary distribution---can be expressed as:
\begin{equation} \label{equation:fSV-covariance}
\kappa( \ell) = \frac{ \nu^{2}}{ 2 \lambda^{2H}} \left( \frac{1}{2} \int_{- \infty}^{ \infty} e^{-|y|} | \lambda \ell + y|^{2H} \mathrm{d}y - | \lambda \ell|^{2H} \right), \quad \ell \geq 0.
\end{equation}
For $H = 1/2$ this expression reduces to:
\begin{equation} \label{equation:exponential-acf}
\kappa( \ell) = \frac{ \nu^{2}}{ 2 \lambda} e^{- \lambda \ell},
\end{equation}
which is the standard formula for the log-normal SV model.

As apparent from \eqref{equation:exponential-acf}, the acf decays exponentially for $H = 1/2$. However, \eqref{equation:fSV-covariance} has a hyperbolic rate of decay for other values of $H$, since $\kappa( \ell) = O \big( \ell^{2(H-1)} \big)$ as $\ell \rightarrow \infty$. This follows from \citet*{cheridito-kawaguchi-maejima:03a}. These properties are transferred to the integrated variance process by the last result in Theorem 2.1. Moreover, the acf is integrable for $H \leq 1/2$ but not integrable for $H > 1/2$.

The variance of $Y_{t}$ is:
\begin{equation} \label{equation:variance-of-Y}
\kappa(0) = \frac{ \nu^{2}}{2 \lambda^{2H}} \Gamma(1+2H).
\end{equation}
The fSV model conforms to \eqref{equation:covariance-asymptotic} with $\beta = 2(1-H)$ and $\rho=0$  for $H \in (0,1/2) \cup (1/2,1)$, by Theorem 2.3 of \citet*{cheridito-kawaguchi-maejima:03a}, and with $\beta = 0$ and $\rho = \lambda$ for $H = 1/2$. We stress that $L$ need not be given in closed form, as the proof of \eqref{equation:covariance-asymptotic} amounts to checking that $L( \ell) \equiv \frac{ \kappa( \ell)}{ \ell^{- \beta} e^{- \rho \ell}}$ is slowly varying, based on the asymptotic behavior of $\kappa( \ell)$ as $\ell \rightarrow \infty$.

\section{GMM Estimation} \label{sec:GMM_estimation}

In this section, for technical convenience we define all processes also for negative time indices.

\subsection{Assumptions and examples}

As described above, the spot variance $\sigma^{2} = ( \sigma_{t}^{2})_{t \in \mathbb{R}}$ depends on the parameter vector $\theta = (\xi, \phi) \in \Theta$. We write $\mathbb{P}_{\theta}$ for the probability measure induced by $\theta$ and $\mathbb{E}_{ \theta}$ is the corresponding expectation operator (we sometimes omit $\theta$ when there is no risk of confusion). We denote by $\mathcal{F}^{\sigma}$ the $\sigma$-algebra generated by $\sigma^{2}$ or, equivalently, $Y$.

We now introduce our main assumption about $Y$.

\begin{assumption}\label{assum:gauss} The Gaussian process $Y$ and its covariance function $\kappa$ satisfy the following conditions:
\begin{itemize}
\item[(i)] $Y$ has continuous sample paths for any $\phi \in \Phi$,
\item[(ii)] $(u, \phi) \mapsto \kappa_{ \phi}(u)$ is a continuous function.
\end{itemize}
\end{assumption}
Condition (i) is natural for stationary Gaussian processes, since if $Y$ was discontinuous, its sample paths would in fact be \emph{unbounded} almost surely \citep*{belyaev:61a}. Condition (ii) ensures that the moments of the model are continuous with respect to $\theta$. It is worth pointing out that neither these conditions nor the stationarity of $Y$ say much about the long-term behavior of volatility. We return to this in Assumption \ref{assum:error}. In the fOU process, condition (i) has been shown in Proposition 3.4 of \citet*{kaarakka-salminen:11a}, while condition (ii) can be verified by applying the dominated convergence theorem to \eqref{equation:fSV-covariance}.

As the main object of interest, integrated variance, is not observable in practice, it needs to be estimated. We strive for a general framework applicable to realized measures \emph{at large}, while still remaining analytically tractable. We postulate that we observe a noisy proxy of $IV_{t}$ given by
\begin{equation} \label{eq:IV-proxy}
\widehat{IV}_{t} = IV_{t} + \varepsilon_{t},
\end{equation}
where $\varepsilon_{t}$ is a random variable capturing the measurement error, which needs to adhere to a set of stylized technical conditions given in Assumption \ref{assum:error}. Such a high-level approach to describing measurement error between a realized measure and the corresponding integrated variance is reminiscent to what \citet*{patton:11a} uses for the analysis of noisy volatility proxies in the context of forecast evaluation.

To formalize our assumptions about the process $( \varepsilon_{t})_{t \in \mathbb{Z}}$, we require a filtration $\mathcal{F}^{ \sigma, \varepsilon}_{t} = \mathcal{F}_t^{ \sigma} \vee \mathcal{F}_{t}^{ \varepsilon}$, where $\mathcal{F}_t^{ \varepsilon} = \sigma \big( \{ \varepsilon_{t}, \varepsilon_{t-1}, \ldots \} \big)$, $t \in \mathbb{Z}$, is the $\sigma$-algebra generated by the errors up to time $t$. We also introduce a key assumption about the joint long-term behavior of $(IV_{t})_{t \in \mathbb{Z}}$ and $( \varepsilon_{t})_{t \in \mathbb{Z}}$.

\begin{assumption}\label{assum:error} The processes $(IV_{t})_{t \in \mathbb{Z}}$ and $( \varepsilon_{t})_{t \in \mathbb{Z}}$ satisfy the following conditions:
\begin{itemize}
\item[(i)] $(IV_{t}, \varepsilon_{t})_{t \in \mathbb{Z}}$ is a stationary and ergodic process under $\mathbb{P}_{ \theta}$ for any $\theta \in \Theta$,
\item[(ii)] $\theta \mapsto c( \theta) \equiv \mathbb{E}_{ \theta}[ \varepsilon_{1}^{2}]$ is a finite-valued, continuous function on $\Theta$,
\item[(iii)] $\mathbb{E}_{ \theta}[ \varepsilon_{t} \mid \mathcal{F}^{ \sigma, \varepsilon}_{t-1}] = 0$ for any $t \in \mathbb{Z}$ and any $\theta \in \Theta$.
\end{itemize}
\end{assumption}
The $\sigma^{2}$ and $(IV_{t})_{t \in \mathbb{Z}}$ processes readily inherit the stationarity and ergodicity of $Y$. The joint ergodicity of $(IV_{t}, \varepsilon_{t})_{t \in \mathbb{Z}}$ is more delicate, since even if $(IV_{t})_{t \in \mathbb{Z}}$ and $( \varepsilon_{t})_{t \in \mathbb{Z}}$ are ergodic on their own and mutually independent, it does not follow that $(IV_{t}, \varepsilon_{t})_{t \in \mathbb{Z}}$ is ergodic \citep*[see][Exercise 5.13]{lindgren:06a}. But if additionally $(IV_{t})_{t \in \mathbb{N}}$ or $( \varepsilon_{t})_{t \in \mathbb{Z}}$ is weakly mixing, then their joint ergodicity holds \citep*[see][Exercise 5.14]{lindgren:06a}. That said, in practical applications the mutual independence of $(IV_{t})_{t \in \mathbb{Z}}$ and $( \varepsilon_{t})_{t \in \mathbb{Z}}$ is too strong an assumption, since the level of measurement error typically depends on the underlying level of volatility. Condition (iii) is a martingale-difference property for $( \varepsilon_{t})_{t \in \mathbb{Z}}$, which implies $\mathbb{E}_{ \theta}[ \varepsilon_{t}] = 0$, i.e. the proxy $\widehat{IV}_{t}$ is unbiased. This is obviously a somewhat stylized assumption, which is not exactly satisfied by many realized measures. However, we can expect it to hold approximately and, in any case, it is crucial for the analytical tractability of the setup.\footnote{\citet*{meddahi:02a} studies realized variance under a class of log-normal volatility models including drift. He finds the mean of the measurement error to be negligible at a 5-minute sampling frequency.}

We now demonstrate that a particular structural form of the error term, $\varepsilon_{t}$, conveniently accommodates concrete realized measures as proxies in an approximate sense while satisfying Assumption \ref{assum:error}. More specifically, let
\begin{equation} \label{eq:error-term-structural}
\varepsilon_{t} = h \big(Z_{t},( \sigma_{s+t-1}^{2})_{s \in [0,1]} \big), \quad t \in \mathbb{Z},
\end{equation}
where $Z_{t}$, $t \in \mathbb{Z}$, are i.i.d. $d$-dimensional random vectors, for some $d \in \mathbb{N}$, that are independent of $\mathcal{F}^{ \sigma}$ and $h : \mathbb{R}^{d} \times C([0,1]) \rightarrow \mathbb{R}$ is a continuous functional such that
\begin{equation} \label{eq:exp-zero}
\mathbb{E}_\theta[h(Z_1,f)] = 0,
\end{equation}
for any $\theta \in \Theta$ and $f \in C([0,1])$. Then condition (i) in Assumption \ref{assum:error} can be proved using standard ergodic theory arguments, see, e.g., \citet*[Section 5.4]{lindgren:06a}, while condition (iii) is readily implied by \eqref{eq:exp-zero}. We can check condition (ii) on a case-by-case basis below by computing $c( \theta) = \mathbb{E}_{ \theta}[ \varepsilon_{1}^{2}]$ explicitly and invoking condition (ii) of Assumption \ref{assum:gauss} to establish its continuity in $\theta$.

In the following examples, we construct $\varepsilon_{t}$ and $Z_{t}$ only for $t \in \mathbb{N}$, but we can extend them to negative indices by stationarity.

\begin{example}[Realized variance, CLT approximation] \label{exm:CLT}
Suppose that we estimate the integrated variance $IV_{t}$ with the realized variance \citep*[see, e.g.,][]{andersen-bollerslev:98a,barndorff-nielsen-shephard:02a}:
\begin{equation} \label{equation:rv}
RV_{t}^{n} = \sum_{i=1}^{n} \big(X_{t-1+ \frac{i}{n}} - X_{t-1+ \frac{i-1}{n}} \big)^{2},
\end{equation}
for any $t \in \mathbb{N}$. Under standard technical conditions, the central limit theorem (CLT)
\begin{equation} \label{eq:RV-CLT}
\sqrt{n}(RV_{t}^{n} - IV_{t}) \xrightarrow[n \rightarrow \infty]{d_{ \mathrm{st}}} \sqrt{2} \int_{t-1}^{t} \sigma_{s}^{2} \mathrm{d} B_{s}^{ \perp},
\end{equation}
holds jointly for all $t \in \mathbb{N}$, where $\stackrel{d_{ \mathrm{st}}}{\longrightarrow}$ denotes stable convergence in distribution and $(B^{ \perp}_{s})_{s \geq 0}$ is a Brownian motion independent of $X$ and $\sigma$. Note that the limiting random variables $\sqrt{2} \int_{t-1}^{t} \sigma_{s}^{2} \mathrm{d} B_{s}^{ \perp}$, $t \in \mathbb{N}$, are conditionally independent given $\mathcal{F}^{ \sigma}$ with
\begin{equation}
\sqrt{2} \int_{t-1}^{t} \sigma_{s}^{2} \mathrm{d} B_{s}^{ \perp} \, \bigg| \, \mathcal{F}^{ \sigma} \sim N(0,2IQ_{t}), \quad t \in \mathbb{N},
\end{equation}
where
\begin{equation}
IQ_t = \int_{t-1}^t \sigma_{s}^{4} \mathrm{d}s
\end{equation}
is the integrated quarticity. Thus,
\begin{equation}
Z_{t} = \frac{ \int_{t-1}^{t} \sigma_{s}^{2} \mathrm{d} B_{s}^{ \perp}}{IQ_{t}^{1/2}} \sim N(0,1), \quad t \in \mathbb{N},
\end{equation}
are both mutually independent and independent of $\mathcal{F}^\sigma$.

Informally, the CLT \eqref{eq:RV-CLT} says that, for any $t \in \mathbb{N}$,
\begin{equation} \label{eq:clt-asy}
RV_{t}^{n} \stackrel{d}{ \approx} IV_{t} + \left( \frac{2}{n} IQ_{t} \right)^{1/2} Z_{t}
\end{equation}
for large $n$, where ``$\stackrel{d}{ \approx}$'' denotes approximate equality in distribution, as used, e.g., in \citet*[Section 1.2]{zhang-mykland-ait-sahalia:05a}. Thus, for any $t \in \mathbb{N}$, the proxy $\widehat{IV}_{t} = IV_{t} + \varepsilon_{t}$ with $\varepsilon_{t} = \left( \frac{2}{n} IQ_{t} \right)^{1/2} Z_{t}$ approximates $RV_{t}^{n}$ for large $n$. Such a proxy is analogous to what \citet*{fukasawa-takabatake-westphal:22a} employ in their estimation framework. We can represent the error term as $\varepsilon_{t}$ in the form \eqref{eq:error-term-structural} using the continuous functional
\begin{equation}
h(z,f) =  \bigg( \frac{2}{n} \int_{0}^{1} f(s)^{2} \mathrm{d}s \bigg)^{1/2} z, \quad z \in \mathbb{R}, \quad f \in C([0,1]).
\end{equation}
Then \eqref{eq:exp-zero} holds given that $Z_{1} \sim N(0,1)$. We can compute $c( \theta)$ explicitly using Tonelli's theorem. The expression is reported in Table \ref{table:error-term} and $\theta \mapsto c( \theta)$ is evidently continuous under Assumption \ref{assum:gauss}.
\end{example}

\begin{example}[Realized variance, no drift or leverage effect] \label{exm:indep}
In general, the measurement error $RV_{t}^{n} - IV_{t}$ is analytically hard to analyze unless we resort to asymptotic approximation with $n \rightarrow \infty$ as in Example \ref{exm:CLT} (see also the comments below). However, in a simple specific case, we can actually work with the exact error $\varepsilon_{t} = RV_{t}^{n} - IV_{t}$, that is $\widehat{IV}_{t} = RV_{t}^{n}$, without losing analytical tractability.

Namely, suppose that the log-price $X = (X_{t})_{t \geq 0}$ of the asset follows a drift-free It\^{o} process
\begin{equation}
X_{t} = X_{0} + \int_{0}^{t} \sigma_{s} \mathrm{d}W_{s}, \quad t \geq 0,
\end{equation}
where  $W = (W_{t})_{t \geq 0}$ is a standard Brownian motion independent of $\mathcal{F}^{ \sigma}$, i.e. ruling out any dependence between $W$ and the spot variance process $\sigma^{2}$, stemming from the leverage effect for instance \citep*[e.g.,][]{christie:82a}. Then, for any $t \in \mathbb{N}$,
\begin{align} \label{eq:sum-gamma}
\begin{split}
RV_{t}^{n} - IV_{t} &= \sum_{i=1}^{n} \Bigg( \bigg( \int_{ \frac{i-1}{n}+t-1}^{ \frac{i}{n}+t-1} \sigma_{s} \mathrm{d}W_{s} \bigg)^{2} - \int_{ \frac{i-1}{n}+t-1}^{ \frac{i}{n}+t-1} \sigma_{s}^{2} \mathrm{d}s \Bigg) \\[0.10cm]
&= \sum_{i=1}^{n} (Z_{t,i}^{2}-1) \int_{ \frac{i-1}{n}}^{ \frac{i}{n}} \sigma_{s+t-1}^{2} \mathrm{d}s,
\end{split}
\end{align}
where
\begin{equation} \label{eq:exact-gauss}
Z_{t,i} = \frac{ \int_{ \frac{i-1}{n}+t-1}^{ \frac{i}{n}+t-1} \sigma_{s} \mathrm{d}W_{s}}{ \big( \int_{ \frac{i-1}{n}+t-1}^{ \frac{i}{n}+t-1} \sigma_{s}^{2} \mathrm{d}s \big)^{1/2}}, \quad t \in \mathbb{N}, \quad i = 1, \ldots,n.
\end{equation}
Since $W$ is independent of $\mathcal{F}^{ \sigma}$, conditional on $\mathcal{F}^{ \sigma}$ the random variables $Z_{t,i}$, $t \in \mathbb{N}$, $i = 1, \ldots, n$, are mutually independent and follow a standard normal distribution. Consequently, they are i.i.d. standard normal also unconditionally and independent of $\mathcal{F}^{ \sigma}$.

Thanks to \eqref{eq:sum-gamma}, we can represent the measurement error $\varepsilon_{t} = RV_{t}^{n} - IV_{t}$ in the form \eqref{eq:error-term-structural}
via the functional
\begin{equation}
h \big((z_{1}, \ldots, z_{n}),f \big) = \sum_{i=1}^{n} (z_{i}^{2} -1) \int_{ \frac{i-1}{n}}^{ \frac{i}{n}} f(s) \mathrm{d}s, \quad (z_{1}, \ldots, z_{n}) \in \mathbb{R}^{n}, \quad f \in C([0,1]),
\end{equation}
and i.i.d. random vectors
\begin{equation}
Z_{t} = (Z_{t,1}, \ldots, Z_{t,n}), \quad t \in \mathbb{N},
\end{equation}
with components given by \eqref{eq:exact-gauss}, so that $d = n$. The property \eqref{eq:exp-zero} then holds, while an integral functional representation of $c( \theta)$ is given in Table \ref{table:error-term} and its continuity in $ \theta$ follows from the dominated convergence theorem under Assumption \ref{assum:gauss}.
\end{example}
In the presence of a leverage effect, the moments of the measurement error $\varepsilon_{t} = RV_{t}^{n} - IV_{t}$ can be analyzed using Malliavin calculus and chaos expansions, see, e.g., \citet*{peccati-taqqu:11a}. However, the resulting formulae are not in any form convenient for numerical implementation, which is why we do not pursue this approach further here.

\begin{example}[Bipower variation, CLT approximation] \label{exm:bipower} In the context of Example \ref{exm:CLT}, the realized variance can be substituted with the bipower variation estimator of \citet*{barndorff-nielsen-shephard:04b}, which is defined as:
\begin{equation} \label{equation:bv}
BV_{t}^{n} = \frac{ \pi}{2} \sum_{i=2}^{n} \big|X_{t-1+ \frac{i}{n}}-X_{t-1+ \frac{i-1}{n}} \big| \big|X_{t-1+ \frac{i-1}{n}}-X_{t-1+ \frac{i-2}{n}} \big|, \quad n \in \mathbb{N},
\end{equation}
for any $t \in \mathbb{N}$. Under standard technical conditions
\begin{equation} \label{eq:BV-CLT}
\sqrt{n}(BV_{t}^{n} - IV_{t}) \xrightarrow[n \rightarrow \infty]{d_{ \mathrm{st}}} \sqrt{ \frac{ \pi^{2}}{4}+ \pi-3} \int_{t-1}^{t} \sigma_{s}^{2} \mathrm{d} B_{s}^{ \perp},
\end{equation}
jointly for all $t \in \mathbb{N}$, where the structure of the limit is identical to the one in \eqref{eq:RV-CLT}. $BV_{t}^{n}$ is then approximated for large $n$ by the proxy $\widehat{IV}_{t} = IV_{t} + \varepsilon_{t}$ with error term $\varepsilon_{t} = \Big( \frac{ \frac{ \pi^{2}}{4}+ \pi-3}{n}IQ_{t} \Big)^{1/2}Z_{t}$, where $Z_t$, $t \in \mathbb{N}$, are as in Example \ref{exm:CLT}. Retracing the arguments in Example \ref{exm:CLT}, we can then show that $\varepsilon_{t}$ can be cast in the form \eqref{eq:exact-gauss}, so Assumption \ref{assum:error} holds.
\end{example}

\begin{table}
\begin{center}
\caption{Formulae for $c( \theta) = \mathbb{E}_{ \theta}[ \varepsilon_{1}^{2}]$. \label{table:error-term}}
\smallskip
\begin{tabular}{llcc}
\hline \hline
Proxy & Setting & Example & $c( \theta) = c( \xi, \phi)$ \\
\hline
Realized variance & CLT & \ref{exm:CLT} & $\frac{2 \xi^{2}}{n} \exp \big( \kappa_{ \phi}(0) \big)$ \\[0.10cm]
                  & No drift or leverage & \ref{exm:indep} & $\frac{4 \xi^{2}}{n} \int_{0}^{1} (1-y) \exp \big( \kappa_{ \phi} ( \frac{y}{n}) \big) \text{d}y$ \\[0.10cm]
Bipower variation & CLT & \ref{exm:bipower} & $\frac{\big( \frac{ \pi^{2}}{4}+ \pi -3 \big) \xi^{2}}{n} \exp \big( \kappa_{ \phi}(0) \big)$ \\[0.10cm]
\hline \hline
\end{tabular}
\smallskip
\begin{scriptsize}
\parbox{0.92\textwidth}{\emph{Note.}
In the case of Example \ref{exm:indep}, we use Theorem \ref{theo:MomentsIVGeneral} to derive the expression.}
\end{scriptsize}
\end{center}
\end{table}

The appealing feature of bipower variation is that it remains consistent for integrated variance in the presence of price jumps. However, in the latter setting the CLT in \eqref{eq:BV-CLT} ceases to hold and the limiting distribution of $BV_{t}^{n}$ is not mixed Gaussian \citep*{vetter:10a}. Even without price jumps the asymptotic theory of bipower variation generally needs to impose smoothness conditions on volatility, typically in the form of an It\^{o} semimartingale structure or, possibly, a long-memory process. To our knowledge, roughness is ruled out or at least remains undetermined. Hence, the approximate bias correction from Example \ref{exm:bipower} should be applied with caution. Nevertheless, it can serve as a heuristic double check.\footnote{In the supplemental appendix, we repeat the simulation analysis and empirical application with bipower variation. The results do not change much compared to those for realized variance reported below.}

Table \ref{table:error-term} summarizes the derivations for the measurement error. We reiterate that Example \ref{exm:CLT} and Example \ref{exm:bipower} are based on $n \rightarrow \infty$, while any implementation is always done with a finite $n$. In contrast, Example \ref{exm:indep} is valid for all $n$ in absence of drift and leverage. In the simulations, we compare both expressions for realized variance to gauge their impact on the estimation.

\subsection{Consistency} \label{section:consistency}

In this section, we turn to the consistency of our GMM estimator. We take $\theta \in \Theta$, $t \in \mathbb{Z}$ and $\ell \in \mathbb{Z}$ and introduce the moment structure of the $IV_{t}$ process, which is defined by:
\begin{equation}
g_{0}^{(1)}( \theta) = \mathbb{E}_{ \theta}[IV_{t}], \quad g_{0}^{(2)}( \theta) = \mathbb{E}_{ \theta}[IV_{t}^{2}], \quad
g_{ \ell}( \theta) = \mathbb{E}_{ \theta}[IV_{t} IV_{t- \ell}],
\end{equation}
for $\ell = 1, \ldots, k$, which we collect in the column vector
\begin{equation}
G( \theta) = \big(g_{0}^{(1)}( \theta),g_{0}^{(2)}( \theta),g_{1}( \theta), \ldots,g_{k}( \theta) \big)^{ \intercal}.
\end{equation}
Here, $^{ \intercal}$ is the transpose operator. We also define
\begin{align}
\begin{split}
\mathbb{IV}_{t} &= (IV_{t}, IV_{t}^{2}, IV_{t} IV_{t-1}, \ldots,IV_{t} IV_{t-k} \big)^{ \intercal}, \\[0.10cm]
\widehat{ \mathbb{IV}}_{t} &= \big( \widehat{IV}_{t}, \widehat{IV}_{t}^{2}, \widehat{IV}_{t} \widehat{IV}_{t-1}, \ldots, \widehat{IV}_{t} \widehat{IV}_{t-k} \big)^{ \intercal},
\end{split}
\end{align}
which by condition (i) of Assumption \ref{assum:error} are stationary and ergodic processes.

By condition (ii)-(iii) of Assumption \ref{assum:error}:
\begin{align} \label{equation:proxy-moment}
\begin{split}
\mathbb{E}_{ \theta} \big[ \widehat{IV}_{t} \big] &= g_{0}^{(1)}( \theta), \\[0.10cm]
\mathbb{E}_{ \theta} \big[ \widehat{IV}_{t} \widehat{IV}_{t- \ell} \big] &=
\begin{cases}
g_{0}^{(2)}( \theta) + c( \theta), & \ell = 0, \\[0.10cm]
g_{ \ell}( \theta), & \ell \neq 0.
\end{cases}
\end{split}
\end{align}
The noisy proxy changes the second moment in \eqref{equation:proxy-moment} compared to integrated variance in \eqref{equation:iv-moment}. The term $c( \theta)$ is induced by the noise. This observation was also made by \citet*{bollerslev-zhou:02a}. They account for the error-in-variables by including an additive nuisance parameter to the second-order moment condition, which is estimated independently of the structural parameters. In contrast, we incorporate the error variation directly into the model as a function of $\theta$, which avoids the need of an extra parameter.

The first- and other second-order moments are unbiased, due to the linearity of the expectation operator and because the errors are mean zero and serially uncorrelated. In principle, we can thus avoid the negative impact of measurement errors by excluding $g_{0}^{(2)}( \theta)$ from the moment selection. More generally, however, it is often preferable to add the variance or absolute value to the moment conditions, because low-order moments are highly informative about the parameters of SV models \citep*{andersen-sorensen:96a}. To avoid any systematic deviance in the estimated values of the parameters, it is then necessary to correct the appropriate entries in the moment vector as detailed above (dealing with the measurement error is of course much more complicated for the absolute value than for the square).
	
We propose to compare the sample moments of $\widehat{IV}_{t}$ to a corrected moment function
\begin{equation} \label{equation:bias-correction}
G_{c}( \theta) = G( \theta) + \big(0,c( \theta),0, \ldots,0 \big)^{ \intercal}.
\end{equation}
We define a random function:
\begin{equation}
\widehat{m}_{T}( \theta) = \frac{1}{T} \sum_{t=1}^{T} \widehat{\mathbb{IV}}_{t} - G_{c}( \theta),
\end{equation}
which, in view of \eqref{equation:proxy-moment}, has
\begin{equation}
\mathbb{E}_{ \theta_{0}} \big[ \widehat{m}_{T}( \theta) \big] = G_{c}( \theta_0) - G_{c}( \theta) \equiv m( \theta),
\end{equation}
so that
\begin{equation}
m( \theta_{0}) = 0.
\end{equation}
Our GMM estimator is then given by
\begin{equation} \label{equation:gmm-estimator}
\widehat{ \theta}_{T} = \underset{ \theta}{ \argmin} \ \widehat{m}_{T}( \theta)^{ \intercal} \mathbb{W}_{T} \widehat{m}_{T}( \theta),
\end{equation}
where $\mathbb{W}_{T}$ is a random $(k+2) \times (k+2)$ weight matrix.

We need additional conditions for the consistency of $\widehat{ \theta}_{T}$. Firstly, we introduce a standard assumption about the limiting behavior of $\mathbb{W}_{T}$.

\begin{assumption} \label{assum:weight-mat-limit}
$\mathbb{W}_{T} = A_{T}^{ \intercal} A_{T}$ for a random $(k+2) \times (k+2)$ matrix $A_{T}$, which under $\mathbb{P}_{ \theta_{0}}$ converges almost surely to a non-random matrix $A$ as $T \rightarrow \infty$.
\end{assumption}
Secondly, we assume that the parameters are identifiable.

\begin{assumption} \label{assumption:identification}
$A m( \theta) = 0$ if and only if $\theta = \theta_{0}$.
\end{assumption}
Assumption \ref{assumption:identification} is a standard identification condition in GMM that ensures uniqueness of the solution. It is hard to check when moments are not given in algebraic form, see e.g. \citet*{barboza-viens:17a, newey-mcfadden:94a} in the GMM setting or \citet*{corradi-distaso:06a, todorov:09a} in the context of estimation of SV models.

An alternative route to inspect identification is to perform a rank test on the Jacobian matrix. This amounts to verifying that $\nabla_{ \theta} A m( \theta_{0})$ has full column rank. The latter is equivalent to Assumption \ref{assumption:identification} if the moment conditions are linear in the parameters. In our setting, this is not the case (except for the mean). Hence, the rank condition can only help to identify parameters locally in a neighbourhood of a solution candidate. Nevertheless, developing a formal rank test for local identification in the fSV model presents a challenge. This requires that we derive the asymptotic distribution of $T^{-1/2} \nabla_{ \theta} A_{T} \widehat{m}_{T}( \theta)$, e.g. \citet*{wright:03a}. We do not pursue the idea here but leave it to future research. Instead, in Appendix \ref{appendix:identification} we offer some alternative insights about identification in the fSV model based on the notion of equality of an sequence of moment conditions. As noted there, $\xi$ and $H$ are identified, whereas $\nu$ and $\lambda$ are identified only through their ratio.

\begin{theorem} \label{consistency:gmmcorrection}
Suppose Assumptions \ref{assum:gauss} -- \ref{assumption:identification} hold. As $T \rightarrow \infty$
\begin{equation}
\widehat{ \theta}_{T} \as \theta_{0}.
\end{equation}
\end{theorem}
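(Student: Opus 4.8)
The plan is to follow the classical extremum-estimator route of \citet*{newey-mcfadden:94a}: establish uniform convergence of the GMM objective function to a non-random limit, verify that the limit is continuous and uniquely minimized at $\theta_0$, and invoke the standard consistency argument for minimizers of uniformly convergent criteria. The only genuinely non-standard element is that the sample moment vector involves the noisy proxy $\widehat{IV}_t$ rather than $IV_t$, which is precisely why the correction term $c(\theta)$ appears in $G_c(\theta)$; the argument must therefore keep track of the fact that $\mathbb{E}_{\theta_0}[\widehat{m}_T(\theta)] = m(\theta)$ with $m(\theta_0)=0$, as recorded in the excerpt.

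First I would show pointwise (in $\theta$) almost-sure convergence of the sample average: by condition (i) of Assumption \ref{assum:error}, the process $\widehat{\mathbb{IV}}_t$ is stationary and ergodic under $\mathbb{P}_{\theta_0}$, so Birkhoff's ergodic theorem gives $\frac{1}{T}\sum_{t=1}^T \widehat{\mathbb{IV}}_t \as \mathbb{E}_{\theta_0}[\widehat{\mathbb{IV}}_1] = G_c(\theta_0)$, provided the relevant first absolute moments are finite — which follows from the moment formulas in Theorem \ref{theo:MomentsIVGeneral} together with condition (ii) of Assumption \ref{assum:error} (finiteness of $c(\theta_0)$) and the Cauchy–Schwarz bound on the cross terms $\widehat{IV}_t\widehat{IV}_{t-\ell}$. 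Consequently $\widehat{m}_T(\theta) = \frac{1}{T}\sum_{t=1}^T\widehat{\mathbb{IV}}_t - G_c(\theta) \as G_c(\theta_0) - G_c(\theta) = m(\theta)$ for each fixed $\theta$. Combined with Assumption \ref{assum:weight-mat-limit} ($A_T \to A$ a.s.), this yields pointwise a.s. convergence of the objective $Q_T(\theta) \equiv \widehat{m}_T(\theta)^{\intercal}\mathbb{W}_T\widehat{m}_T(\theta)$ to $Q(\theta) \equiv \|A m(\theta)\|^2$.

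To upgrade pointwise to uniform convergence I would use the fact that the $\theta$-dependence of $\widehat{m}_T(\theta)$ enters only through the non-random, continuous function $G_c(\theta)$: the stochastic part $\frac{1}{T}\sum_{t=1}^T\widehat{\mathbb{IV}}_t$ does not depend on $\theta$ at all. Since $\Theta$ is compact and $\theta\mapsto G_c(\theta)$ is continuous — the moment components $G(\theta)$ are continuous by Assumption \ref{assum:gauss}(ii) and dominated convergence applied to the integral representations in Theorem \ref{theo:MomentsIVGeneral}, and $c(\theta)$ is continuous by Assumption \ref{assum:error}(ii) — the map $\theta\mapsto m(\theta)$ is continuous, hence uniformly continuous on $\Theta$. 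Together with $A_T\to A$ and the boundedness of $\frac{1}{T}\sum_{t=1}^T\widehat{\mathbb{IV}}_t$ on the a.s. event where it converges, this gives $\sup_{\theta\in\Theta}|Q_T(\theta) - Q(\theta)| \as 0$. The limit $Q$ is continuous, non-negative, and by Assumption \ref{assumption:identification} satisfies $Q(\theta)=0 \iff Am(\theta)=0 \iff \theta=\theta_0$, so $\theta_0$ is its unique minimizer. The standard consistency lemma for M-estimators (e.g.\ \citet*[Theorem 2.1]{newey-mcfadden:94a}) then delivers $\widehat{\theta}_T \as \theta_0$.

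The main obstacle is not conceptual but bookkeeping: one must carefully justify the interchange of limit and integral needed for continuity of $G(\theta)$ (this is where Assumption \ref{assum:gauss}(ii) and the explicit integral forms from Theorem \ref{theo:MomentsIVGeneral} do the work, via a dominating function built from $\sup_{\phi}\kappa_\phi$ over the compact $\Phi$) and verify the moment bounds that legitimize the ergodic theorem and the uniform-convergence step. A secondary subtlety is that consistency is stated for the \emph{almost sure} mode rather than in probability; this is exactly why Assumption \ref{assum:weight-mat-limit} is phrased as a.s.\ convergence of $A_T$ and why the ergodic theorem (an a.s.\ statement) is the right tool, so no extra work is needed there beyond intersecting the relevant probability-one events.
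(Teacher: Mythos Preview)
Your proposal is correct and follows essentially the same route as the paper: both arguments reduce consistency to a classical extremum-estimator theorem (you cite \citet*{newey-mcfadden:94a}, the paper cites \citet*{hansen:82a}), and the key verification in both is continuity of $G_c(\theta)$, obtained from Assumption \ref{assum:gauss}(ii) via dominated convergence applied to the integral moment formulas in Theorem \ref{theo:MomentsIVGeneral}, plus continuity of $c(\theta)$ from Assumption \ref{assum:error}(ii). Your observation that the $\theta$-dependence of $\widehat{m}_T(\theta)$ is entirely non-random is precisely what the paper phrases as ``the moduli of continuity of $\widehat{m}_T(\theta)$ and $G_c(\theta)$ coincide'' when verifying Hansen's first-moment-continuity condition; you are simply more explicit about the ergodic-theorem and uniform-convergence mechanics that are packaged inside the cited consistency theorems.
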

In the above, our analysis assumed that the number of observations per day, $n$, is fixed and then relies on the noisy proxy idea. Now, following \citet*{bollerslev-zhou:02a, corradi-distaso:06a, todorov:09a}, we also cover the theory of the GMM estimator in a double-asymptotic setting with $T \rightarrow \infty$ and $n \rightarrow \infty$.

To this end, we denote with $V_{t}^n$ some consistent realized measure of integrated variance (e.g., realized variance, bipower variation, or truncated realized variance). For fixed $k \in \mathbb{N}$, we set
\begin{align}
\mathbb{V}_{t}^{n} &= \big(V_{t}^{n}, (V_{t}^{n})^{2}, V_{t}^{n} V_{t-1}^{n}, \ldots, V_{t}^{n} V_{t-k}^{n} \big)^{ \intercal},
\end{align}
with associated sample moments
\begin{equation}
\widetilde{m}_{n,T}( \theta) = \frac{1}{T} \sum_{t=1}^{T} \mathbb{V}_{t}^{n} - G( \theta),
\end{equation}
where we employ the moments of $G( \theta)$ instead of the corrected version $G_{c}( \theta)$, which is of no consequence for the following result since $n \rightarrow \infty$.

Then,
\begin{equation}
\widetilde{ \theta}_{n,T} = \argmin_{ \theta \in \Theta} \widetilde{m}_{n,T}( \theta)^{ \intercal} \mathbb{W}_{n,T} \widetilde{m}_{n,T}( \theta),
\end{equation}
is our GMM estimator.

In this setting, we replace Assumption \ref{assum:error} with the following requirement.

\begin{assumption} \label{assum:doubleasymp} The processes $(IV_{t})_{t \in \mathbb{Z}}$ and $(V_{t}^{n})_{t \in \mathbb{Z}, n \in \mathbb{N}}$ admit the following:
\begin{itemize}
\item[(i)] $(IV_{t})_{t \in \mathbb{Z}}$ is a stationary and ergodic process under $\mathbb{P}_{ \theta}$ for any $\theta \in \Theta$,
\item[(ii)] $\sup_{t \in \mathbb{Z}} \mathbb{E}_{ \theta_{0}}[(V_{t}^{n} - IV_{t})^{2}] \rightarrow 0$ as $n \rightarrow \infty$.
\end{itemize}
\end{assumption}

\begin{theorem} \label{theorem:GMM}
Suppose Assumptions \ref{assum:gauss} and \ref{assum:weight-mat-limit} -- \ref{assum:doubleasymp} hold. As $T \rightarrow \infty$ and $n \rightarrow \infty$
\begin{equation}
\widetilde{ \theta}_{n,T} \cp \theta_{0}.
\end{equation}
\end{theorem}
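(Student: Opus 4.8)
The plan is to run the classical consistency argument for extremum estimators \citep*{newey-mcfadden:94a}: show that the GMM criterion converges in probability, uniformly over the compact set $\Theta$, to a non-random limit uniquely minimized at $\theta_{0}$, and then invoke the standard theorem. The structural simplification---shared with the proof of Theorem~\ref{consistency:gmmcorrection}---is that in $\widetilde{m}_{n,T}(\theta) = \bar{\mathbb{V}}_{n,T} - G(\theta)$, with $\bar{\mathbb{V}}_{n,T} = T^{-1}\sum_{t=1}^{T}\mathbb{V}_{t}^{n}$, the only dependence on $\theta$ sits in the deterministic map $G$, which is continuous on $\Theta$ by condition (ii) of Assumption~\ref{assum:gauss} (continuity of $(u,\phi)\mapsto\kappa_{\phi}(u)$ plus dominated convergence in the integral formulas of Theorem~\ref{theo:MomentsIVGeneral}, using $|\kappa_{\phi}(u)|\le\kappa_{\phi}(0)<\infty$). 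Hence uniformity over $\Theta$ is automatic once the $\theta$-free quantities $\bar{\mathbb{V}}_{n,T}$ and $\mathbb{W}_{n,T}$ are controlled.

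First I would prove $\bar{\mathbb{V}}_{n,T}\cp G(\theta_{0})$ in the double limit by splitting $\bar{\mathbb{V}}_{n,T}-G(\theta_{0}) = \big(\bar{\mathbb{V}}_{n,T}-\bar{\mathbb{IV}}_{T}\big) + \big(\bar{\mathbb{IV}}_{T}-G(\theta_{0})\big)$, where $\bar{\mathbb{IV}}_{T}=T^{-1}\sum_{t=1}^{T}\mathbb{IV}_{t}$. For the second term, Theorem~\ref{theo:MomentsIVGeneral} gives $\mathbb{E}_{\theta_{0}}[IV_{t}^{2}]<\infty$, so every entry of $\mathbb{IV}_{t}$ is integrable (Cauchy--Schwarz for the products $IV_{t}IV_{t-\ell}$); since $(IV_{t})$ is stationary and ergodic under $\mathbb{P}_{\theta_{0}}$ by Assumption~\ref{assum:doubleasymp}(i), the ergodic theorem yields $\bar{\mathbb{IV}}_{T}\to G(\theta_{0})$ a.s. as $T\to\infty$, for each fixed $n$. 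For the first term I would not invoke an ergodic theorem for $(V_{t}^{n})_{t}$---whose stationarity is not assumed---but bound it in $L^{1}$ uniformly in $T$: $\mathbb{E}_{\theta_{0}}\big[\|\bar{\mathbb{V}}_{n,T}-\bar{\mathbb{IV}}_{T}\|\big] \le \sup_{t}\mathbb{E}_{\theta_{0}}\big[\|\mathbb{V}_{t}^{n}-\mathbb{IV}_{t}\|\big]$, and estimate each coordinate via Assumption~\ref{assum:doubleasymp}(ii): the linear one through $\mathbb{E}_{\theta_{0}}[|V_{t}^{n}-IV_{t}|]\le\big(\mathbb{E}_{\theta_{0}}[(V_{t}^{n}-IV_{t})^{2}]\big)^{1/2}$, and the quadratic and cross ones through the factorizations $(V_{t}^{n})^{2}-IV_{t}^{2}=(V_{t}^{n}-IV_{t})(V_{t}^{n}+IV_{t})$ and $V_{t}^{n}V_{t-\ell}^{n}-IV_{t}IV_{t-\ell}=(V_{t}^{n}-IV_{t})V_{t-\ell}^{n}+IV_{t}(V_{t-\ell}^{n}-IV_{t-\ell})$, Cauchy--Schwarz, and the uniform bound $\sup_{t,\,n}\mathbb{E}_{\theta_{0}}[(V_{t}^{n})^{2}]<\infty$ (valid once $\sup_{t}\mathbb{E}_{\theta_{0}}[(V_{t}^{n}-IV_{t})^{2}]\le 1$, since $\mathbb{E}_{\theta_{0}}[IV_{t}^{2}]$ is a finite constant by stationarity). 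Thus $\sup_{t}\mathbb{E}_{\theta_{0}}[\|\mathbb{V}_{t}^{n}-\mathbb{IV}_{t}\|]\to 0$ as $n\to\infty$, so the first term $\cp 0$ as $n\to\infty$, uniformly in $T$; combining the two terms (choose $n$ large first, uniformly in $T$, then $T$ large) gives $\bar{\mathbb{V}}_{n,T}\cp G(\theta_{0})$.

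Next I would assemble the limiting criterion. Writing $r_{n,T}=\bar{\mathbb{V}}_{n,T}-G(\theta_{0})\cp 0$ and $m(\theta)=G(\theta_{0})-G(\theta)$, we have $\widetilde{m}_{n,T}(\theta)=m(\theta)+r_{n,T}$, so with $\mathbb{W}_{n,T}=A_{n,T}^{\intercal}A_{n,T}\cp A^{\intercal}A$ (the double-asymptotic counterpart of Assumption~\ref{assum:weight-mat-limit}) the criterion $Q_{n,T}(\theta)=\widetilde{m}_{n,T}(\theta)^{\intercal}\mathbb{W}_{n,T}\widetilde{m}_{n,T}(\theta)$ obeys $\sup_{\theta\in\Theta}\big|Q_{n,T}(\theta)-\|A m(\theta)\|^{2}\big|\cp 0$, using $\sup_{\theta}\|m(\theta)\|<\infty$ (continuity of $G$ on compact $\Theta$), $\|\mathbb{W}_{n,T}-A^{\intercal}A\|\cp 0$, $\|r_{n,T}\|\cp 0$ and $\|\mathbb{W}_{n,T}\|=O_{\mathbb{P}}(1)$. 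The limit $\theta\mapsto\|A m(\theta)\|^{2}$ is continuous, nonnegative, and by Assumption~\ref{assumption:identification} vanishes only at $\theta_{0}$, hence is uniquely minimized there; since $\widetilde{\theta}_{n,T}$ minimizes $Q_{n,T}$ over the compact $\Theta$, the extremum-estimator consistency theorem \citep*[Theorem~2.1]{newey-mcfadden:94a} delivers $\widetilde{\theta}_{n,T}\cp\theta_{0}$. The step I expect to be the main obstacle is precisely the uniform-in-$T$ control of $T^{-1}\sum_{t=1}^{T}(\mathbb{V}_{t}^{n}-\mathbb{IV}_{t})$ as $n\to\infty$: because the proxy process is not assumed stationary or ergodic, one must reduce everything to the $L^{2}$ estimate in Assumption~\ref{assum:doubleasymp}(ii) and propagate it to the quadratic and product coordinates by Cauchy--Schwarz; the remainder is the boilerplate extremum-estimator machinery.
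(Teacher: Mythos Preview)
Your proposal is correct and follows essentially the same route as the paper: both invoke Theorem~2.1 of \citet*{newey-mcfadden:94a}, reduce uniform convergence of the criterion to the $\theta$-free quantity $\bar{\mathbb{V}}_{n,T}-G(\theta_{0})$ (since only $G$ depends on $\theta$ and is continuous), split this into an ergodic piece $\bar{\mathbb{IV}}_{T}-G(\theta_{0})$ and a discretization piece $\bar{\mathbb{V}}_{n,T}-\bar{\mathbb{IV}}_{T}$, and control the latter in $L^{1}$ via the same factorizations $V_{t}^{n}V_{t-\ell}^{n}-IV_{t}IV_{t-\ell}=(V_{t}^{n}-IV_{t})V_{t-\ell}^{n}+IV_{t}(V_{t-\ell}^{n}-IV_{t-\ell})$ together with Cauchy--Schwarz and Assumption~\ref{assum:doubleasymp}(ii). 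The only difference is organizational: the paper first expands $|\widetilde{Q}_{n,T}(\theta)-Q(\theta)|$ by Cauchy--Schwarz and then establishes $\sup_{\theta}\|\widetilde{m}_{n,T}(\theta)-m(\theta)\|\cp 0$, whereas you first isolate the scalar $r_{n,T}=\bar{\mathbb{V}}_{n,T}-G(\theta_{0})$ and then assemble the criterion---but the underlying estimates are identical.
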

This result is equivalent to Theorem 1 (and Corollary 1) in \citet*{todorov:09a} and Theorem 1 in \citet*{corradi-distaso:06a}.

In Appendix \ref{remark:uniform-bound}, we show that under a boundedness condition on the drift and volatility:
\begin{equation}
\sup_{t \in \mathbb{Z}} \mathbb{E}[(RV_{t}^{n}-IV_{t})^{2}] \leq C n^{-1},
\end{equation}
for some $C>0$. Hence, in this setting Assumption \ref{assum:doubleasymp} holds for $RV_{t}^{n}$.

\subsection{Asymptotic normality}

To establish asymptotic normality of our GMM estimator, for technical reasons we assume that under $\mathbb{P}_{ \theta_{0}}$ the Gaussian process $Y$ admits a causal moving average representation
\begin{equation} \label{equation:moving-average}
Y_{t} = \int_{- \infty}^{t} K(t-u) \text{d}B_{u}, \quad t \in \mathbb{R},
\end{equation}
for a two-sided standard Brownian motion $B = (B_{t})_{t \in \mathbb{R}}$ and measurable kernel $K : (0, \infty) \rightarrow \mathbb{R}$ such that $\int_{0}^{ \infty} K(u)^{2} \text{d}u < \infty$. We can extend $K$ to the entire real line by setting $K(u) = 0$ for $u \leq 0$ when necessary. \eqref{equation:moving-average} is not restrictive, since a stationary Gaussian process admits such a representation under weak conditions. In particular, the moving average structure exists if and only if $Y$ satisfies a mild, albeit technical, condition known as \textit{pure non-determinism}, see \citet*[Satz 5]{karhunen:50a} and \citet*[Section 4.5]{dym-mckean:76a}. The fSV model incorporated in this paper adheres to form \eqref{equation:moving-average}, since the fOU process has such a representation \citep*[e.g.,][]{barndorff-nielsen-basse-o-connor:11a}.

The asymptotic behavior of $K(u)$ as $u \rightarrow \infty$ governs the long-term memory of $Y$. To derive the asymptotic normality of our GMM estimator, we need to constrain that memory.

\begin{assumption} \label{assumption:memory}
$K(u) = O(u^{- \gamma})$ as $u \rightarrow \infty$ for some $\gamma > 1$.
\end{assumption}
\citet*{garnier-solna:18a} showed that the kernel in the moving average representation of the fOU process is asymptotically, as $u \rightarrow \infty$, proportional to $u^{H-3/2}$ for $H (0, 1/2) \cup (1/2, 1)$. Moreover, the definition in \eqref{equation:fOU} with $H = 1/2$ implies $K(u) = \nu e^{-\lambda u} = o(u^{- \gamma})$, for all $\gamma > 1$, as $u \rightarrow \infty$. Thereby, the fSV model requires $H \leq 1/2$ to be covered by Assumption \ref{assumption:memory}, allowing for rough volatility but ruling out the long-memory version.

We believe the constraint in Assumption \ref{assumption:memory} is nearly optimal in the sense that if $K(u)$ is asymptotically proportional to $u^{- \gamma}$ for $\gamma \in (0,1)$, e.g. with the fSV model for $H > 1/2$, then asymptotic normality with a standard rate of convergence ceases to hold. In this case, we can show that the expression for the asymptotic covariance matrix in our central limit theorem of Proposition \ref{prop:clt} does not converge. It is possible that a non-central limit theorem with a non-standard rate of convergence holds, as commonly encountered in the realm of long-memory processes, see, e.g., \citet*{taqqu:75a}. Proving such an extension is rather non-trivial, however, and therefore beyond the scope of the present exposition.\footnote{It may be possible to extend the permissible range of $H$ values by first-differencing the realized measure
series. However, since the present estimator is consistent for any $H \in (0,1)$ and the empirical results presented below do not contain estimates near the region $H>1/2$, we do not pursue this extension here.}

Additionally, we introduce stronger assumptions about the error process $(\varepsilon_{t})_{t \in \mathbb{Z}}$. In what follows, we write $\| X \|_{L^{2}( \mathbb{P}_{ \theta})} = \mathbb{E}_{ \theta} \big[X^{2} \big]^{1/2}$ for any square integrable random variable $X$ and work with the filtrations $\mathcal{F}^{ \widehat{ \mathbb{IV}}}_{t} = \sigma \big\{ \widehat{ \mathbb{IV}}_{t}, \widehat{ \mathbb{IV}}_{t-1}, \ldots \big \}$ and $\mathcal{F}^{B, \varepsilon}_{t} = \sigma\{ \varepsilon_{t}, \varepsilon_{t-1}, \ldots \} \vee \sigma\{ B_{u} : u \leq t \}$, $t \in \mathbb{Z}$.

\begin{assumption} \label{assum:error-clt} The processes $B$ and $( \varepsilon_{t})_{t \in \mathbb{Z}}$ satisfy the following conditions:
\begin{itemize}
\item[(i)] $\mathbb{E}[ \varepsilon_{1}^{4}] < \infty$,
\item[(ii)] \label{item:epsilon-memory} $\left\| \mathbb{E}_{ \theta_{0}} \Big[ \varepsilon_{r}^{2} \mid \mathcal{F}_{0}^{ \widehat{ \mathbb{IV}}} \Big] - \mathbb{E}_{ \theta_{0}}[ \varepsilon_{1}^{2}] \right\|_{L^{2}( \mathbb{P}_{ \theta_{0}})} = O(r^{- \gamma+1/2})$ as $r \rightarrow \infty$,
\item[(iii)] \label{item:W-incr} $B$ has independent increments with respect to $( \mathcal{F}^{B, \varepsilon}_{t})_{t \in \mathbb{Z}}$ (i.e., for any $t \in \mathbb{Z}$ the process $(B_{u}-B_{t})_{u \geq t}$ is independent of $\mathcal{F}^{B, \varepsilon}_{t}$).
\end{itemize}
\end{assumption}

Condition (ii) constrains the memory in the squared measurement error. In the high-frequency setting, the measurement error usually depends on volatility (as demonstrated in Example \ref{exm:CLT} -- \ref{exm:bipower}). So here Assumption \ref{assumption:memory} implies condition (ii), see Proposition \ref{prop:exm-memory} in Appendix \ref{app:prop-clt}. Condition (iii) ensures that the measurement error does not anticipate future increments of the driving Brownian motion $B$, which is not very restrictive.

The next result presents the CLT for the sample mean of our statistic.

\begin{proposition}\label{prop:clt}
Suppose that Assumptions \ref{assum:gauss}, \ref{assum:error}, \ref{assumption:memory}, and \ref{assum:error-clt} hold. Then, as $T \rightarrow \infty$, under $\mathbb{P}_{ \theta_{0}}$,
\begin{equation} \label{eq:clt-conv}
\sqrt{T} \widehat{m}_{T} \big( \theta_{0} \big) \overset{d}{ \longrightarrow} N \big(0, \Sigma_{ \widehat{\mathbb{IV}}} \big),
\end{equation}
where
\begin{equation}
\Sigma_{ \widehat{ \mathbb{IV}}} = \sum_{ \ell = -\infty}^{ \infty} \Gamma_{ \widehat{ \mathbb{IV}}}( \ell),
\end{equation}
and
\begin{equation}
\Gamma_{ \widehat{ \mathbb{IV}}}( \ell) = \mathbb{E}_{ \theta_{0}} \Big[ \big( \widehat{ \mathbb{IV}}_{1} - G_{c}( \theta_{0}) \big) \big( \widehat{ \mathbb{IV}}_{1+ \ell} - G_{c}( \theta_{0}) \big)^{ \intercal} \Big].
\end{equation}
\end{proposition}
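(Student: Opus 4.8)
The plan is to prove \eqref{eq:clt-conv} as a central limit theorem for the partial sums of the stationary sequence $\xi_{t} := \widehat{\mathbb{IV}}_{t} - G_{c}(\theta_{0})$, $t \in \mathbb{Z}$, for which $\widehat{m}_{T}(\theta_{0}) = T^{-1}\sum_{t=1}^{T}\xi_{t}$. By condition (i) of Assumption \ref{assum:error} the process $(\xi_{t})$ is stationary and ergodic under $\mathbb{P}_{\theta_{0}}$; by the bias correction in \eqref{equation:bias-correction} together with conditions (ii)--(iii) of Assumption \ref{assum:error} (see \eqref{equation:proxy-moment}) it is centered; and it is square integrable, since $IV_{t}$, being log-normal, has moments of every order while $\mathbb{E}_{\theta_{0}}[\varepsilon_{t}^{4}] < \infty$ by condition (i) of Assumption \ref{assum:error-clt}, the cross terms in $\widehat{\mathbb{IV}}_{t}$ being handled by the Cauchy--Schwarz inequality. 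A Cram\'{e}r--Wold argument then reduces the claim to the scalar CLT for $T^{-1/2}\sum_{t=1}^{T}\lambda^{\intercal}\xi_{t}$ with $\lambda \in \mathbb{R}^{k+2}$ arbitrary.

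For the scalar statement I would apply a central limit theorem for stationary ergodic sequences under a projective (Maxwell--Woodroofe-type) condition, equivalently a martingale approximation in the spirit of Gordin, relative to the filtration $(\mathcal{F}^{B,\varepsilon}_{t})_{t\in\mathbb{Z}}$ (or $(\mathcal{F}^{\widehat{\mathbb{IV}}}_{t})_{t\in\mathbb{Z}}$, to which $(\xi_{t})$ is adapted). The decisive step is to control the memory of $\xi_{0}$, i.e., the rate at which $\mathbb{E}_{\theta_{0}}[\xi_{0} \mid \mathcal{F}^{B,\varepsilon}_{-r}]$ tends to $0$ in $L^{2}(\mathbb{P}_{\theta_{0}})$ as $r \to \infty$. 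Splitting $\xi_{0}$ into its integrated-variance part $\mathbb{IV}_{0} - G(\theta_{0})$ and its measurement-error part, I would treat the former via the causal moving-average representation \eqref{equation:moving-average}: conditioning on the information available at time $-r$ decomposes each $Y_{s}$, $s \in [-1,0]$, into a measurable component of variance $\int_{r}^{\infty} K(u)^{2}\,\mathrm{d}u = O(r^{1-2\gamma})$ --- using $K(u) = O(u^{-\gamma})$ from Assumption \ref{assumption:memory} --- and an independent Gaussian component, after which the log-normal form of $\sigma_{s}^{2}$ lets me evaluate the conditional expectations of $IV_{0}$, $IV_{0}^{2}$ and $IV_{0}IV_{-\ell}$ through Gaussian moment generating functions; together with the elementary estimate $e^{x}-1 = O(x)$ near $x=0$ this yields $\|\mathbb{E}_{\theta_{0}}[\mathbb{IV}_{0}-G(\theta_{0}) \mid \mathcal{F}^{B,\varepsilon}_{-r}]\|_{L^{2}(\mathbb{P}_{\theta_{0}})} = O(r^{1/2-\gamma})$. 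For the measurement-error part, condition (ii) of Assumption \ref{assum:error-clt} supplies exactly the matching rate $O(r^{1/2-\gamma})$ for $\mathbb{E}_{\theta_{0}}[\varepsilon_{0}^{2}\mid\mathcal{F}_{-r}] - c(\theta_{0})$, while the martingale-difference property (condition (iii) of Assumption \ref{assum:error}) and the independent-increments property (condition (iii) of Assumption \ref{assum:error-clt}) make the cross terms in $\widehat{IV}_{0}\widehat{IV}_{-\ell}$ involving $\varepsilon$ uncorrelated at nonzero lags; combining these gives $\|\mathbb{E}_{\theta_{0}}[\xi_{0}\mid\mathcal{F}^{B,\varepsilon}_{-r}]\|_{L^{2}(\mathbb{P}_{\theta_{0}})} = O(r^{1/2-\gamma})$.

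Since $\gamma > 1$, interchanging the order of summation gives $\sum_{n\ge1}n^{-3/2}\sum_{t=1}^{n}\|\mathbb{E}_{\theta_{0}}[\xi_{0}\mid\mathcal{F}^{B,\varepsilon}_{-t}]\|_{L^{2}(\mathbb{P}_{\theta_{0}})} = O\!\big(\sum_{t\ge1}t^{1/2-\gamma}\sum_{n\ge t}n^{-3/2}\big) = O\!\big(\sum_{t\ge1}t^{-\gamma}\big) < \infty$, which is the Maxwell--Woodroofe projective condition for the partial sums of $(\lambda^{\intercal}\xi_{t})$. Hence $T^{-1/2}\sum_{t=1}^{T}\lambda^{\intercal}\xi_{t}$ admits a stationary square-integrable martingale approximation and, by the martingale central limit theorem, converges to a centered Gaussian law whose variance is the long-run variance $\lim_{T}T^{-1}\mathrm{Var}_{\theta_{0}}\!\big(\sum_{t=1}^{T}\lambda^{\intercal}\xi_{t}\big)$. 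The same memory estimate, combined with the last part of Theorem \ref{theo:MomentsIVGeneral} --- which gives $\|\Gamma_{\widehat{\mathbb{IV}}}(\ell)\| = O(\kappa(\ell)) = O(\ell^{1-2\gamma})$, the error cross-covariances vanishing at nonzero lags --- shows that $\sum_{\ell=-\infty}^{\infty}\Gamma_{\widehat{\mathbb{IV}}}(\ell)$ converges and coincides with that long-run variance; undoing the Cram\'{e}r--Wold reduction then delivers \eqref{eq:clt-conv} with $\Sigma_{\widehat{\mathbb{IV}}} = \sum_{\ell}\Gamma_{\widehat{\mathbb{IV}}}(\ell)$.

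I expect the main obstacle to be the joint treatment of the nonlinear (exponential) functional of the long-memory Gaussian process $Y$ and the measurement-error terms, and in particular obtaining the rate $O(r^{1/2-\gamma})$ sharply enough that $\gamma > 1$ --- rather than the $\gamma > 3/2$ that a crude application of Gordin's summability condition would require --- suffices; this is precisely why the Maxwell--Woodroofe criterion, with its $n^{-3/2}$ weighting, is the natural vehicle. A secondary, more bookkeeping, difficulty is tracking which $\sigma$-field ($\mathcal{F}^{B,\varepsilon}$ versus $\mathcal{F}^{\widehat{\mathbb{IV}}}$) each estimate is stated over, passing between them via the tower property, and checking that the approximating martingale inherits stationarity and ergodicity.
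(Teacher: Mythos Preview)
Your proposal is correct and follows essentially the same route as the paper. Both arguments reduce to the scalar case via Cram\'{e}r--Wold, establish the projective estimate $\|\mathbb{E}_{\theta_{0}}[\xi_{r}\mid\mathcal{F}_{0}]\|_{L^{2}}=O(r^{1/2-\gamma})$ from the moving-average representation \eqref{equation:moving-average} together with log-normal moment computations (the paper packages these as Lemmas~\ref{lem:lognormal} and~\ref{lem:cond-exp-decay}), kill the $\varepsilon$-cross terms by the tower property and Assumption~\ref{assum:error}(iii), invoke Assumption~\ref{assum:error-clt}(ii) for the $\varepsilon^{2}$ contribution, and then appeal to a projective-condition CLT. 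The only cosmetic difference is that the paper phrases the final step as an $L^{2}$-mixingale of size $-\tfrac{1}{2}$ and cites Theorem~4.10 of \citet{merlevede-peligrad-utev:19a} (whose hypothesis is $\sum_{r\ge1}r^{-1/2}\|\mathbb{E}[\xi_{r}\mid\mathcal{F}_{0}]\|_{L^{2}}<\infty$), whereas you invoke the Maxwell--Woodroofe criterion; under the rate $O(r^{1/2-\gamma})$ with $\gamma>1$ these summability conditions are equivalent, so the distinction is purely one of presentation.
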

The proof of Proposition \ref{prop:clt} builds on Theorem 4.10 from \citet*[][Section 4.2]{merlevede-peligrad-utev:19a}, where a martingale approximation CLT is derived for an $L^{2}$-mixingale of size -1/2. Theorem 24.5 in \citet*{davidson:94a} states a related CLT for an $L^{1}$-mixingale of size -1. In other words, we impose a slightly stronger moment restriction to attain a better size condition. The rate improvement is important in our context, as a mixingale of size -1 does not cover the fOU with $H \in (0,1/2)$. Moreover, $L^{1}$ integrability does not add anything here, because the process is always square integrable. That is, in our setting \citet*{merlevede-peligrad-utev:19a} deliver the right foundation. Conversely, with a CLT that only requires a mixingale of size -1, we need to additionally assume that $\gamma > 3/2$ in Assumption \ref{assumption:memory}. As the decay exponent of the fOU is $H - 3/2$, such a theory does not here apply for any $H > 0$.

A final assumption for the CLT of our GMM estimator is presented next. Here, we introduce the function $\mathbf{g}: \mathbb{R}^{k+2} \times \Theta \rightarrow \mathbb{R}$ via $\mathbf{g}(x, \theta) = x-G_{c}( \theta)$.

\begin{assumption} \label{Ass:GMMAssumptionsCLT}
It holds that

\begin{itemize}
\item[(i)] $\theta_{0}$ is an interior point of $\Theta$.
\item[(ii)] $J^{ \intercal} \mathbb{W} J$ is non-singular, where $J = \mathbb{E}_{ \theta_{0}} \Big[ \nabla_{ \theta} \mathbf{g}( \widehat{ \mathbb{IV}}_{1}, \theta_{0}) \Big]$ and $\mathbb{W}=A^{ \intercal}A$.
\item[(iii)] The function $\theta \mapsto \mathbf{g}(x, \theta)$ is continuously differentiable. In addition, $\mathbb{E}_{ \theta_{0}} \Big[ \| \mathbf{g}( \widehat{ \mathbb{IV}}_{1}, \theta_{0}) \|^{2} \Big] < \infty$ and $\mathbb{E}_{ \theta_{0}} \Big[ \sup_{ \theta \in \Theta} \| \nabla_{ \theta} \mathbf{g}( \widehat{ \mathbb{IV}}_{1},\theta) \| \Big] < \infty$.
\end{itemize}
\end{assumption}
Now, we are ready to present the asymptotic distribution of $\widehat{ \theta}_{T}$.

\begin{theorem} \label{Thm:GMMCLTCorrection}
Suppose Assumptions \ref{assum:gauss} --  \ref{assumption:identification} and \ref{assumption:memory} -- \ref{Ass:GMMAssumptionsCLT} hold. As $T \rightarrow \infty$,
\begin{equation} \label{equation:clt}
\sqrt{T} \big( \widehat{ \theta}_{T}- \theta_{0} \big) \overset{d}{ \longrightarrow} N \big(0,(J^{ \intercal} \mathbb{W} J)^{-1} J^{ \intercal} \mathbb{W} \Sigma_{ \widehat{ \mathbb{IV}}} \mathbb{W} J (J^{ \intercal} \mathbb{W} J)^{-1} \big).
\end{equation}
\end{theorem}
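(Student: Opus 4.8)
The plan is to run the standard GMM linearization argument (cf.\ \citet*{newey-mcfadden:94a}), now that the two substantive ingredients are in place: a.s.\ consistency of $\widehat{\theta}_T$ from Theorem \ref{consistency:gmmcorrection}, and the CLT $\sqrt{T}\,\widehat{m}_T(\theta_0)\cd N(0,\Sigma_{\widehat{\mathbb{IV}}})$ from Proposition \ref{prop:clt}. A convenient simplification here is that $\widehat{m}_T(\theta)=T^{-1}\sum_{t=1}^{T}\widehat{\mathbb{IV}}_t-G_c(\theta)$ depends on $\theta$ only through $G_c$, so its Jacobian $\nabla_\theta\widehat{m}_T(\theta)=-\nabla_\theta G_c(\theta)$ is non-random and continuous on the compact set $\Theta$ (Assumption \ref{Ass:GMMAssumptionsCLT}(iii), which also follows from Assumptions \ref{assum:gauss}(ii) and \ref{assum:error}(ii)); in particular no uniform law of large numbers for the Jacobian is needed, and $J=-\nabla_\theta G_c(\theta_0)$.

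First I would observe that, since $\widehat{\theta}_T\as\theta_0$ and $\theta_0$ is interior to $\Theta$ (Assumption \ref{Ass:GMMAssumptionsCLT}(i)), for all large $T$ the minimizer is interior, so the first-order condition $\nabla_\theta G_c(\widehat{\theta}_T)^\intercal\mathbb{W}_T\widehat{m}_T(\widehat{\theta}_T)=0$ holds. Next I would apply a component-wise mean-value expansion, writing $\widehat{m}_T(\widehat{\theta}_T)=\widehat{m}_T(\theta_0)+D_T(\widehat{\theta}_T-\theta_0)$, where the $j$-th row of $D_T$ is $-\nabla_\theta G_{c,j}(\bar{\theta}_{T,j})^\intercal$ for some $\bar{\theta}_{T,j}$ on the segment joining $\widehat{\theta}_T$ and $\theta_0$; since each $\bar{\theta}_{T,j}\to\theta_0$ a.s.\ and $\nabla_\theta G_c$ is continuous, $D_T\to J$ a.s. Substituting the expansion into the first-order condition and rearranging gives
\begin{equation*}
\sqrt{T}\,(\widehat{\theta}_T-\theta_0)=-\big(\nabla_\theta G_c(\widehat{\theta}_T)^\intercal\mathbb{W}_T D_T\big)^{-1}\nabla_\theta G_c(\widehat{\theta}_T)^\intercal\mathbb{W}_T\,\sqrt{T}\,\widehat{m}_T(\theta_0),
\end{equation*}
where the inverse exists for large $T$ because $\nabla_\theta G_c(\widehat{\theta}_T)^\intercal\mathbb{W}_T D_T\to(-J)^\intercal\mathbb{W}J=-J^\intercal\mathbb{W}J$ a.s., which is non-singular by Assumption \ref{Ass:GMMAssumptionsCLT}(ii), using also $\mathbb{W}_T\to\mathbb{W}=A^\intercal A$ a.s.\ from Assumption \ref{assum:weight-mat-limit}.

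Finally I would combine the two limits: the random prefactor converges a.s.\ to $-(J^\intercal\mathbb{W}J)^{-1}J^\intercal\mathbb{W}$, while $\sqrt{T}\,\widehat{m}_T(\theta_0)\cd N(0,\Sigma_{\widehat{\mathbb{IV}}})$ by Proposition \ref{prop:clt}, so Slutsky's theorem yields $\sqrt{T}\,(\widehat{\theta}_T-\theta_0)\cd-(J^\intercal\mathbb{W}J)^{-1}J^\intercal\mathbb{W}\,\zeta$ with $\zeta\sim N(0,\Sigma_{\widehat{\mathbb{IV}}})$; computing the covariance of this affine transform of $\zeta$ and using the symmetry of $\mathbb{W}$ and of $J^\intercal\mathbb{W}J$ reproduces the sandwich matrix in \eqref{equation:clt}. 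The substantive work — the sample-moment CLT, which must accommodate the long memory of the fOU process, together with a.s.\ consistency — is already done; what remains is essentially bookkeeping, and the only points that need care are establishing that the first-order condition is actually available (interior consistency), that the mean-value expansion is legitimate ($C^1$-ness of $G_c$), and that the limiting Jacobian product is invertible so that the displayed formula for $\sqrt{T}\,(\widehat{\theta}_T-\theta_0)$ is well defined.
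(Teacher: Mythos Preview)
Your proposal is correct and follows essentially the same standard GMM linearization argument as the paper: first-order conditions, mean-value expansion, then Slutsky's theorem combined with the consistency result and Proposition \ref{prop:clt}. The only cosmetic difference is that the paper applies the mean-value theorem to the gradient $\nabla_\theta Q_T$ of the full objective (so the Hessian $\nabla^2_{\theta\theta}Q_T(\bar\theta_T)$ appears and is shown to converge to $J^\intercal\mathbb{W}J$), whereas you apply it directly to $\widehat{m}_T$ after writing out the first-order condition; your route is arguably a touch cleaner because it uses only the $C^1$ smoothness of $G_c$ stated in Assumption \ref{Ass:GMMAssumptionsCLT}(iii) and exploits the fact, which you correctly highlight, that $\nabla_\theta\widehat{m}_T(\theta)=-\nabla_\theta G_c(\theta)$ is deterministic.
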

As usual, to minimize the asymptotic variance in \eqref{equation:clt} and derive the efficient GMM estimator---for given moment conditions---we choose an optimal weight matrix as (the inverse of) a consistent estimator of $\Sigma_{ \widehat{ \mathbb{IV}}}$. We propose a \citet*{newey-west:87a} HAC-type estimator:
\begin{equation} \label{equation:hac}
\widehat{ \Sigma}_{T} = \widehat{ \Gamma}(0)+ \sum_{\ell = 1}^{T-1} w( \ell/L) \left[ \widehat{ \Gamma}( \ell)+ \widehat{ \Gamma}( \ell)^{ \intercal} \right],
\end{equation}
where $w$ is a weight function, $L = o(T^{1/2})$ is the lag length, and
\begin{equation} \label{equation:autocovariance}
\widehat{ \Gamma}( \ell) = \frac{1}{T} \sum_{t=1}^{T-l} \big( \widehat{ \mathbb{IV}}_{t} - G_{c}( \widehat{ \theta}_{T}) \big) \big( \widehat{ \mathbb{IV}}_{t+ \ell} - G_{c}( \widehat{ \theta}_{T}) \big).
\end{equation}
Following \citet{davidson:20a}, we impose a weak regularity condition on $w$ that is fulfilled by a large class of common choices in practice, such as the Bartlett or Parzen kernel.
\begin{assumption} \label{assumption:kernel}
It holds that
\begin{itemize}
\item[(i)] $w(0) = 1$ and $\sup_{x \geq 0} |w(x)| < \infty$,
\item[(ii)] $w$ is continuous at 0,
\item[(iii)] $\int_{0}^{ \infty} \bar{w}(x) \mathrm{d}x < \infty$, where $\bar{w}(x)=\sup_{y \geq x} |w(y)|$.
\end{itemize}
\end{assumption}

\begin{theorem} \label{Thm:GMMCLTHAC}
Suppose Assumptions \ref{assum:gauss} --  \ref{assumption:identification} and \ref{assumption:memory} -- \ref{assumption:kernel} hold. As $T \rightarrow \infty$,
\begin{equation}
\widehat{ \Sigma}_{T} \overset{ \mathbb{P}}{ \longrightarrow} \Sigma_{ \widehat{ \mathbb{IV}}}.
\end{equation}
\end{theorem}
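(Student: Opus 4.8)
The plan is to split $\widehat\Sigma_T$ into an ``infeasible'' version built from the true parameter $\theta_0$ plus a remainder generated by estimating $\theta_0$. Write $U_t = \widehat{\mathbb{IV}}_t - G_c(\theta_0)$, which by condition (i) of Assumption \ref{assum:error} is stationary and ergodic under $\mathbb P_{\theta_0}$ and by \eqref{equation:proxy-moment}--\eqref{equation:bias-correction} has mean zero. Let $\widetilde\Sigma_T$ denote the statistic in \eqref{equation:hac}--\eqref{equation:autocovariance} with $G_c(\theta_0)$ in place of $G_c(\widehat\theta_T)$, i.e.\ built from $\widetilde\Gamma(\ell) = T^{-1}\sum_{t=1}^{T-\ell} U_t U_{t+\ell}^{\intercal}$. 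I would prove (i) $\widetilde\Sigma_T \overset{\mathbb P}{\longrightarrow} \Sigma_{\widehat{\mathbb{IV}}}$ and (ii) $\widehat\Sigma_T - \widetilde\Sigma_T \overset{\mathbb P}{\longrightarrow} 0$.

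For step (i) I would invoke a general consistency theorem for kernel long-run-variance (HAC) estimators of the type in \citet{davidson:20a}, and check its hypotheses against what is available here. Stationarity and ergodicity of $(U_t)$ come from Assumption \ref{assum:error}(i); the second-moment condition $\mathbb E_{\theta_0}\|U_t\|^2 < \infty$ follows from Theorem \ref{theo:MomentsIVGeneral} (which furnishes finite moments of $IV_t$ of every order) together with $\mathbb E[\varepsilon_1^4] < \infty$ from Assumption \ref{assum:error-clt}(i), using Cauchy--Schwarz on the product entries of $\widehat{\mathbb{IV}}_t$; the memory condition --- under which $\Sigma_{\widehat{\mathbb{IV}}} = \sum_{\ell=-\infty}^{\infty} \Gamma_{\widehat{\mathbb{IV}}}(\ell)$ is finite with absolutely summable terms --- is exactly the one established in the proof of Proposition \ref{prop:clt}, namely that $(U_t)$ is an $L^{2}$-mixingale of size $-1/2$ with respect to $\mathcal F^{\widehat{\mathbb{IV}}}_t$, driven by the kernel $K(u) = O(u^{-\gamma})$, $\gamma>1$ (Assumptions \ref{assumption:memory} and \ref{assum:error-clt}); the kernel is admissible by Assumption \ref{assumption:kernel}; and the bandwidth satisfies $L = o(T^{1/2})$. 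Granting these, consistency of $\widetilde\Sigma_T$ is the usual three-part argument: the truncation tail $\sum_{|\ell| \ge T}\Gamma_{\widehat{\mathbb{IV}}}(\ell) \to 0$ by summability, the weighting bias $\sum_{|\ell| < T}\big(w(\ell/L)-1\big)\Gamma_{\widehat{\mathbb{IV}}}(\ell) \to 0$ by dominated convergence since $w(\ell/L) \to w(0) = 1$ and $|w(\ell/L)| \le \bar w(0) < \infty$, and the stochastic part $\widetilde\Sigma_T - \mathbb E_{\theta_0}[\widetilde\Sigma_T] \overset{\mathbb P}{\to} 0$ because $L/T \to 0$ and the mixingale-plus-moment structure bounds the variance of the weighted autocovariance sum.

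For step (ii), set $\delta_T = G_c(\theta_0) - G_c(\widehat\theta_T)$, so $\widehat{\mathbb{IV}}_t - G_c(\widehat\theta_T) = U_t + \delta_T$ and $\widehat\Gamma(\ell) - \widetilde\Gamma(\ell) = S_\ell^{(1)}\delta_T^{\intercal} + \delta_T \big(S_\ell^{(2)}\big)^{\intercal} + \tfrac{T-\ell}{T}\,\delta_T\delta_T^{\intercal}$ with $S_\ell^{(1)} = T^{-1}\sum_{t=1}^{T-\ell}U_t$ and $S_\ell^{(2)} = T^{-1}\sum_{t=1}^{T-\ell}U_{t+\ell}$. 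By Theorem \ref{consistency:gmmcorrection} and the continuous differentiability of $G_c$ (Assumption \ref{Ass:GMMAssumptionsCLT}(iii)) we have $\delta_T \to 0$ a.s., and in fact $\delta_T = O_p(T^{-1/2})$ by Theorem \ref{Thm:GMMCLTCorrection}. The ergodic theorem gives $T^{-1}\sum_{t=1}^{T}\|U_t\| \to \mathbb E_{\theta_0}\|U_1\| < \infty$ a.s., whence $\sup_{0 \le \ell \le T}\big(\|S_\ell^{(1)}\| + \|S_\ell^{(2)}\|\big) = O(1)$ a.s.; and Assumption \ref{assumption:kernel}(iii) yields $\sum_{\ell=0}^{T-1}|w(\ell/L)| \le |w(0)| + L\int_0^\infty \bar w(x)\,\mathrm dx = O(L)$. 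Combining, $\|\widehat\Sigma_T - \widetilde\Sigma_T\| \le O_p(T^{-1/2})\cdot O(1)\cdot O(L) + O_p(T^{-1})\cdot O(L) = O_p(L/\sqrt T) = o_p(1)$ since $L = o(T^{1/2})$. Together with step (i) this gives $\widehat\Sigma_T \overset{\mathbb P}{\longrightarrow} \Sigma_{\widehat{\mathbb{IV}}}$.

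The main obstacle is step (i), specifically verifying the dependence hypotheses of the kernel-HAC consistency theorem at the level needed here. Consistency of $\widehat\Sigma_T$ is in effect a uniform law of large numbers for the \emph{products} $U_tU_{t+\ell}^{\intercal}$ over a growing lag set, which is a stronger requirement than the partial-sum CLT of Proposition \ref{prop:clt}; one must propagate the kernel decay $K(u) = O(u^{-\gamma})$ and the memory control on $\varepsilon_t^2$ in Assumption \ref{assum:error-clt}(ii) into near-epoch-dependence and moment bounds on these products, which is delicate because $\widehat{\mathbb{IV}}_t$ enters quadratically and only $\mathbb E[\varepsilon_1^4] < \infty$ is assumed --- though this is slack in the high-frequency examples (Examples \ref{exm:CLT}--\ref{exm:bipower}), where the measurement error has moments of all orders. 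Step (ii), by contrast, is routine once the $\sqrt T$-rate of Theorem \ref{Thm:GMMCLTCorrection} and the bandwidth bound $L = o(T^{1/2})$ are in hand.
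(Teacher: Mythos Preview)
Your approach is essentially the same as the paper's. Both route the argument through \citet*{davidson:20a} and reduce the stochastic part to showing that, for each fixed $\ell$, the centered product sequence $\big\{(\widehat{\mathbb{IV}}_t - G_c(\theta_0))(\widehat{\mathbb{IV}}_{t+\ell} - G_c(\theta_0))^{\intercal} - \Gamma_{\widehat{\mathbb{IV}}}(\ell)\big\}$ is an $L^2$-mixingale of size $-\tfrac12$; the paper simply asserts this ``based on Lemma \ref{lem:cond-exp-decay} and the proof of Proposition \ref{prop:clt}'' and then applies the maximal inequality \citep*[Corollary 16.10]{davidson:94a} to obtain the $O(LT^{-1/2})$ bound, exactly the step you flag as the main obstacle. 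The minor organizational differences are that the paper writes the decomposition multiplicatively as $\widehat\Sigma_T/\Sigma_T - 1 = A_1 + A_2$ (with $\Sigma_T$ the Ces\`aro truncation of $\Sigma_{\widehat{\mathbb{IV}}}$) rather than via your infeasible $\widetilde\Sigma_T$, and does not isolate your step (ii)---the plug-in correction from $G_c(\widehat\theta_T)$ to $G_c(\theta_0)$---which it either absorbs into the Davidson framework or leaves implicit; your explicit $O_p(L/\sqrt T)$ bound for that piece is a clean addition. Your moment caveat is well taken: placing the \emph{products} in $L^2$ formally requires eighth moments of $\widehat{IV}_t$, hence of $\varepsilon_t$, which is stronger than Assumption \ref{assum:error-clt}(i) but, as you note, is available in all the concrete examples.
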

Setting $\mathbb{W}_{T} = \widehat{ \Sigma}_T^{-1}$, it follows that
\begin{equation} \label{equation:efficient-gmm}
\sqrt{T} \big( \widehat{ \theta}_{T}- \theta_{0} \big) \overset{d}{ \longrightarrow} N \Big(0, \big(J^{ \intercal} \Sigma_{ \widehat{ \mathbb{IV}}}^{-1} J \big)^{-1}  \Big).
\end{equation}
To make inference on the parameters we plug-in $\widehat{ \Sigma}_T^{-1}$ in \eqref{equation:efficient-gmm}.

By standard results for quadratic forms of multivariate normal variables, the minimized objective function value times the sample size has an asymptotic chi-square distribution:
\begin{equation} \label{equation:J-test}
\mathcal{J}_{ \text{HS}} = T \widehat{m}_{T}( \widehat{ \theta}_{T} )^{ \intercal} \widehat{ \Sigma}_T^{-1} \widehat{m}_{T}( \widehat{ \theta}_{T} ) \overset{d}{ \longrightarrow} \chi^{2}(k-p+1).
\end{equation}
where $k-p+1$ is the number of overidentifying restrictions. This facilitates a Sargan-Hansen omnibus specification test of the model.

To finish this section, we study the CLT of our GMM estimator in the double-asymptotic setting, where $T \rightarrow \infty$ and $n \rightarrow \infty$, such that the discretization error is negligible.

\begin{assumption} \label{assum:doubleasympCLT} The processes $(IV_{t})_{t \in \mathbb{Z}}$ and $(V_{t}^{n})_{t \in \mathbb{Z}, n \in \mathbb{N}}$ satisfy the following conditions:
\begin{itemize}
\item[(i)] $(IV_{t})_{t \in \mathbb{Z}}$ is a stationary and ergodic process under $\mathbb{P}_{ \theta}$ for any $\theta \in \Theta$,
\item[(ii)] $\sup_{t \in \mathbb{Z}} \mathbb{E}_{ \theta_{0}} \Big[ \big( \sqrt{T}(V_{t}^{n}-IV_{t}) \big)^{2} \Big] \rightarrow 0$ as $T \rightarrow \infty$ and $n \rightarrow \infty$.
\end{itemize}
\end{assumption}
In this setting, we again introduce a HAC-type estimator $\widehat{ \Sigma}_{n,T}$, which merely substitutes $\widehat{ \mathbb{IV}}_{t} - G_{c}$ with $\mathbb{V}_{t}^{n} - G$ in \eqref{equation:autocovariance}, and then we take $\mathbb{W}_{n,T} = \widehat{ \Sigma}_{n,T}^{-1}$.

\begin{theorem} \label{Thm:GMMCLTdouble}
Suppose Assumptions \ref{assum:gauss}, \ref{assum:weight-mat-limit} --\ref{assumption:identification}, \ref{assumption:memory}, \ref{Ass:GMMAssumptionsCLT} -- \ref{assum:doubleasympCLT} hold. As $T \rightarrow \infty$ and $n \rightarrow \infty$,
\begin{equation}
\sqrt{T} \big( \widetilde{ \theta}_{n,T} - \theta_{0} \big) \overset{d}{ \longrightarrow} N \big(0, (\widetilde{J}^{ \intercal} \Sigma_{\mathbb{IV}}^{-1} \widetilde{J})^{-1} \big),
\end{equation}
where
\begin{equation}
\Sigma_{ \mathbb{IV}} = \sum_{ \ell = - \infty}^{ \infty} \Gamma_{ \mathbb{IV}}( \ell),
\end{equation}
and
\begin{equation}
\Gamma_{ \mathbb{IV}}( \ell) = \mathbb{E}_{ \theta_{0}} \big[( \mathbb{IV}_{1}-G( \theta_{0}))( \mathbb{IV}_{1+ \ell}-G( \theta_{0}))^{ \intercal} \big]
\end{equation}
with $\widetilde{J} = \mathbb{E}_{ \theta_{0}} \big[ \nabla_{ \theta} \mathbf{g}( \mathbb{IV}_{1}, \theta_{0}) \big]$.
\end{theorem}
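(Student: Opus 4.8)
The plan is to run the standard GMM delta‑method argument, but with the measurement error switched off — so that $\widehat{\mathbb{IV}}_t$, $G_c$ and $\Sigma_{\widehat{\mathbb{IV}}}$ are replaced by $\mathbb{IV}_t$, $G$ and $\Sigma_{\mathbb{IV}}$ — and with the discretization error $V_t^n-IV_t$ absorbed into an asymptotically negligible remainder via Assumption \ref{assum:doubleasympCLT}(ii). Hence the proof parallels those of Theorems \ref{Thm:GMMCLTCorrection} and \ref{Thm:GMMCLTHAC}, and breaks into three steps: (Step 1) a CLT for $\sqrt T\,\widetilde m_{n,T}(\theta_0)$; (Step 2) consistency of the HAC estimator $\widehat{\Sigma}_{n,T}$ for $\Sigma_{\mathbb{IV}}$; (Step 3) a mean‑value expansion of the first‑order condition. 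Consistency $\widetilde\theta_{n,T}\cp\theta_0$ is already available from Theorem \ref{theorem:GMM}, since Assumption \ref{assum:doubleasympCLT} implies Assumption \ref{assum:doubleasymp}.

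For Step 1, I would write
\begin{equation}
\sqrt T\,\widetilde m_{n,T}(\theta_0) = \sqrt T\Big(\frac1T\sum_{t=1}^T\big(\mathbb{IV}_t - G(\theta_0)\big)\Big) + \frac{1}{\sqrt T}\sum_{t=1}^T\big(\mathbb{V}_t^n - \mathbb{IV}_t\big).
\end{equation}
For the first term, taking $\varepsilon_t\equiv 0$ in the noisy‑proxy framework makes Assumptions \ref{assum:error} and \ref{assum:error-clt} trivially true and reduces $\Sigma_{\widehat{\mathbb{IV}}}$ to $\Sigma_{\mathbb{IV}}$, so Proposition \ref{prop:clt} gives $\sqrt T\big(\frac1T\sum_{t=1}^T(\mathbb{IV}_t - G(\theta_0))\big)\cd N(0,\Sigma_{\mathbb{IV}})$. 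For the second term I would show it is $o_{\mathbb{P}}(1)$: using the identities $(V_t^n)^2-IV_t^2=(V_t^n-IV_t)^2+2IV_t(V_t^n-IV_t)$ and, for the lagged entries, $V_t^nV_{t-\ell}^n-IV_tIV_{t-\ell}=(V_t^n-IV_t)(V_{t-\ell}^n-IV_{t-\ell})+IV_t(V_{t-\ell}^n-IV_{t-\ell})+(V_t^n-IV_t)IV_{t-\ell}$, the $L^1$‑norm of each coordinate of $\frac{1}{\sqrt T}\sum_{t=1}^T(\mathbb{V}_t^n-\mathbb{IV}_t)$ is bounded — via Cauchy–Schwarz, stationarity of $(IV_t)$, and the finiteness of $\mathbb{E}_{\theta_0}[IV_1^2]$ from Theorem \ref{theo:MomentsIVGeneral} — by a constant multiple of $\delta_{n,T}+\delta_{n,T}^2/\sqrt T$, where $\delta_{n,T}:=\big(\sup_{t}\mathbb{E}_{\theta_0}[(\sqrt T(V_t^n-IV_t))^2]\big)^{1/2}\to 0$ in the joint limit by Assumption \ref{assum:doubleasympCLT}(ii). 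Slutsky's theorem then yields $\sqrt T\,\widetilde m_{n,T}(\theta_0)\cd N(0,\Sigma_{\mathbb{IV}})$.

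For Step 2, I would note that the proof of Theorem \ref{Thm:GMMCLTHAC} only uses the consistency of the plug‑in estimator together with ergodicity and moment control for the underlying vector process; run with $\varepsilon_t\equiv 0$ and plug‑in $\widetilde\theta_{n,T}$ it shows the Newey–West estimator built from $(\mathbb{IV}_t)$ is consistent for $\Sigma_{\mathbb{IV}}$, and replacing $\mathbb{IV}_t$ by $\mathbb{V}_t^n$ in each sample autocovariance \eqref{equation:autocovariance} produces extra terms controlled exactly as in Step 1, which do not accumulate across lags because $L=o(T^{1/2})$. Thus $\widehat{\Sigma}_{n,T}\cp\Sigma_{\mathbb{IV}}$ and $\mathbb{W}_{n,T}=\widehat{\Sigma}_{n,T}^{-1}\cp\Sigma_{\mathbb{IV}}^{-1}$. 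For Step 3, since $\theta_0$ is interior (Assumption \ref{Ass:GMMAssumptionsCLT}(i)) and $\widetilde\theta_{n,T}\cp\theta_0$, the first‑order condition holds with probability tending to one, and because $\nabla_\theta\widetilde m_{n,T}(\theta)=-\nabla_\theta G(\theta)$ is non‑random and continuous (Assumptions \ref{assum:gauss}(ii) and \ref{Ass:GMMAssumptionsCLT}(iii)), a componentwise mean‑value expansion around $\theta_0$ gives
\begin{equation}
\sqrt T\big(\widetilde\theta_{n,T}-\theta_0\big)=\big(\nabla_\theta G(\widetilde\theta_{n,T})^{\intercal}\mathbb{W}_{n,T}\,\bar{\mathcal G}_{n,T}\big)^{-1}\nabla_\theta G(\widetilde\theta_{n,T})^{\intercal}\mathbb{W}_{n,T}\,\sqrt T\,\widetilde m_{n,T}(\theta_0),
\end{equation}
with $\bar{\mathcal G}_{n,T}$ having rows equal to $\nabla_\theta G$ at intermediate points converging to $\theta_0$. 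Combining $\widetilde\theta_{n,T}\cp\theta_0$, continuity of $\nabla_\theta G$, Steps 1–2, the non‑singularity in Assumption \ref{Ass:GMMAssumptionsCLT}(ii), and $\widetilde J=\mathbb{E}_{\theta_0}[\nabla_\theta\mathbf g(\mathbb{IV}_1,\theta_0)]=-\nabla_\theta G(\theta_0)$ (the $O(n^{-1})$ correction $c$ vanishing as $n\to\infty$), Slutsky's theorem delivers $\sqrt T(\widetilde\theta_{n,T}-\theta_0)\cd N\big(0,(\widetilde J^{\intercal}\Sigma_{\mathbb{IV}}^{-1}\widetilde J)^{-1}\big)$, the general sandwich form collapsing to this because the weight matrix is the efficient one $\Sigma_{\mathbb{IV}}^{-1}$, exactly as in the derivation of \eqref{equation:efficient-gmm}.

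The one genuinely delicate point is the vector‑valued negligibility in Step 1 (and its reuse in Step 2): Assumption \ref{assum:doubleasympCLT}(ii) controls only the \emph{level} error $\sqrt T(V_t^n-IV_t)$ in $L^2$, so the squared and lagged‑product coordinates of $\mathbb{V}_t^n-\mathbb{IV}_t$ must first be linearized and then bounded by pairing that $L^2$ control against the finite second moment of $IV_t$ (Theorem \ref{theo:MomentsIVGeneral}) through Cauchy–Schwarz, all while tracking the $1/\sqrt T$ normalization and routing the argument through a bona fide joint $T,n\to\infty$ limit. Everything else is a routine adaptation of the fixed‑$n$ proofs.
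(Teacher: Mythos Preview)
Your proposal is correct and follows essentially the same route as the paper: decompose $\sqrt{T}\,\widetilde m_{n,T}(\theta_0)$ into the $\mathbb{IV}$-based CLT term (handled by Proposition~\ref{prop:clt} with $\varepsilon_t\equiv 0$) plus a discretization remainder shown to be $o_{\mathbb{P}}(1)$ via Assumption~\ref{assum:doubleasympCLT}(ii) and Cauchy--Schwarz, then apply the standard GMM mean-value expansion and Slutsky's theorem, with HAC consistency following by the same negligibility argument. The only cosmetic difference is that the paper expands the criterion $\widetilde Q_{n,T}$ to second order (using $\nabla_{\theta\theta}^2\widetilde Q_{n,T}$) whereas you work directly with the first-order condition and a mean-value expansion of its gradient; since $\nabla_\theta\widetilde m_{n,T}(\theta)=-\nabla_\theta G(\theta)$ is non-random, these two formulations are equivalent and your version is arguably cleaner.
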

In closing, we remark that Theorem \ref{theorem:GMM} and \ref{Thm:GMMCLTdouble} also hold for the bias-corrected estimator $\hat{ \theta}_{T}$ provided it fulfills the additional conditions imposed by Assumption \ref{assum:doubleasymp}(ii) for consistency or Assumption \ref{assum:doubleasympCLT}(ii) for asymptotic normality, as $T \rightarrow \infty$ and $n \rightarrow \infty$. If that is so, the bias correction vanishes under the double-asymptotic framework. In practice, even if $n$ is large and the bias term is small, we recommend to add the correction as a precaution, as nothing is lost by doing so.

\bigskip

\noindent \textbf{Remark} \ \textit{In comparison to \citet*{fukasawa-takabatake-westphal:22a} (FTW), our article differs in several aspects. Firstly, FTW employ quasi-likelihood based on a Whittle approximation, while we propose GMM estimation. Secondly, FTW show consistency, but we further prove asymptotic normality of our estimator. Thirdly, many of our theoretical results apply more broadly to general log-normal SV models, whereas FTW concentrate on the fSV model. Fourthly, our approach is fully parametric, whereas FTW is a semi-parametric procedure, where the dynamic of the drift is left unspecified. Moreover, their implementation relies on the first-difference of log-realized variance to enforce mean zero. Hence, FTW estimate two parameters (volatility-of-volatility and the Hurst index), whereas we recover a four-dimensional parameter vector (including the mean and mean reversion coefficient). Fifthly, FTW also control for measurement error in the volatility proxy, but they employ a double-asymptotic CLT for log-realized variance with the length of the time interval approaching zero. In the implementation, however, they still rely on daily realized variance with 5-minute sampling frequency, thereby mixing properties of spot and integrated variance.}

\section{Simulation study} \label{sec:simulation-study-roughnessIV}

In the above, we developed a full-blown large sample GMM framework for estimation of the log-normal fSV model with a general Hurst index. We now review the finite sample properties of our approach. The aim is to assess the accuracy of the procedure in a realistic setup. We inspect both the infeasible setting where estimation is based on integrated variance and a feasible implementation relying on realized variance. For the latter, we gauge the performance both with and without the quarticity correction in \eqref{equation:bias-correction}.

We assume the log-price, $X_{t}$, evolves as a driftless It\^{o} process:
\begin{equation}
\text{d}X_{t} = \sigma_{t} \text{d}W_{t}, \quad t \geq 0, \label{equation:sim-X}
\end{equation}
with initial condition $X_{0} \equiv 0$. Here, $\sigma_{t}$ is the spot volatility and $W_{t}$ is a standard Brownian motion. We discretize $X$ via an Euler scheme.

The log-variance, $Y_{t} = \ln( \sigma_{t}^{2})$, is a fOU process:
\begin{equation}
\text{d}Y_{t} = -\lambda(Y_{t}- \eta) \text{d}t + \nu \text{d}B_{t}^{H} \label{equation:sim-Y},
\end{equation}
where $B_{t}^{H}$ is a fbM. We assume $W \Perp B^{H}$, so there is no leverage effect.

The SDE in \eqref{equation:sim-Y} is solved to get a more convenient expression for $Y$:
\begin{equation}
Y_{t} = \eta + (Y_{t- \Delta} - \eta) e^{- \lambda \Delta} + \nu \int_{t- \Delta}^{t} e^{- \lambda (t-s)} \text{d}B_{s}^{H}.\footnote{The ploy is as always to use It\^{o}'s Lemma with the integrating factor $e^{\lambda t} Y_{t}$. The math is a bit more involved here though, since we are dealing with a fBm, where a stochastic calculus may not exist. Nevertheless, it goes through in this particular instance, see, e.g., \citet*{cheridito-kawaguchi-maejima:03a}.}
\end{equation}
The stochastic integral is approximated as $\int_{t- \Delta}^{t} e^{- \lambda (t-s)} \text{d}B_{s}^{H} \simeq e^{- \lambda \Delta / 2} (B_{t}^{H} - B_{t- \Delta}^{H})$ meaning that increments to a discretely sampled fBm are required. These can be produced in many ways to get an exact discretization, e.g. Cholesky factorization or circulant embedding \citep*[see][]{asmussen-glynn:07a}. While the former has complexity $O(x^{3})$, the latter entails a markedly lower budget of $O(x\log(x))$ and is our preferred algorithm.

We draw 10,000 independent replications of this model with a path length of $T = \text{4,000}$ days as a default.
In each simulation, the log-variance process is started at random from its stationary distribution, $Y_{0} \sim N \big( \eta, \text{var}(Y_{t}) \big)$, where $\text{var}(Y_{t})$ is given by \eqref{equation:variance-of-Y}. To get an almost continuous-time realization of the processes and minimize discretization bias, we partition $[t-1,t]$, for $t = 1, \ldots T$, into $N = \text{23,400}$ discrete points of length $\Delta = 1/N$. In the US equity market, this roughly amounts to a 16-year sample of the stock price recorded every second in a $6.5$-hour trading day.

Our procedure is inspected on several distinct sets of parameters to gauge its robustness. Throughout, we set $\eta = \ln( \xi) - 0.5 \text{var}(Y_{t})$, where $\xi = \mathbb{E}(\sigma_{t}^{2}) = 0.0225$. This ensures the unconditional mean of the variance process is identical across settings and implies an annualized standard deviation $\sigma_{t}$ of about 15\% on average, close to the aggregate level of volatility in the empirical data analyzed in Section \ref{section:empirical}. As we are particularly attentive to estimation of the Hurst index, we choose $H = [0.05,0.10,0.30,0.50,0.70]$ as in \citet*{fukasawa-takabatake-westphal:22a}, thus covering both the rough, standard and long-memory case. We calibrate $\lambda$ and $\nu$ to minimize the distance from the model-implied autocovariance of integrated variance at lag 0, 20 and 50 to the associated sample autocovariances of realized variance of the .SPX (that is, the S\&P 500) equity index, after controlling for sampling error in the noisy proxy, which are subsequently rounded off to the nearest convenient values.

The parameters are presented in Table \ref{table:sim-parameter-estimation}. A realization of the spot and integrated variance processes from each model are plotted in Figure \ref{figure:sigma}. While the pathwise properties of volatility are notably different at a microscopic scale, they are much harder to discriminate after we integrate them up to the daily horizon.

\begin{figure}[t!]
\begin{center}
\caption{Sample path of spot and integrated variance. \label{figure:sigma}}
\begin{tabular}{cc}
\small{Panel A: log(spot variance).} & \small{Panel B: log(integrated variance).} \\
\includegraphics[height=0.4\textwidth,width=0.48\textwidth]{{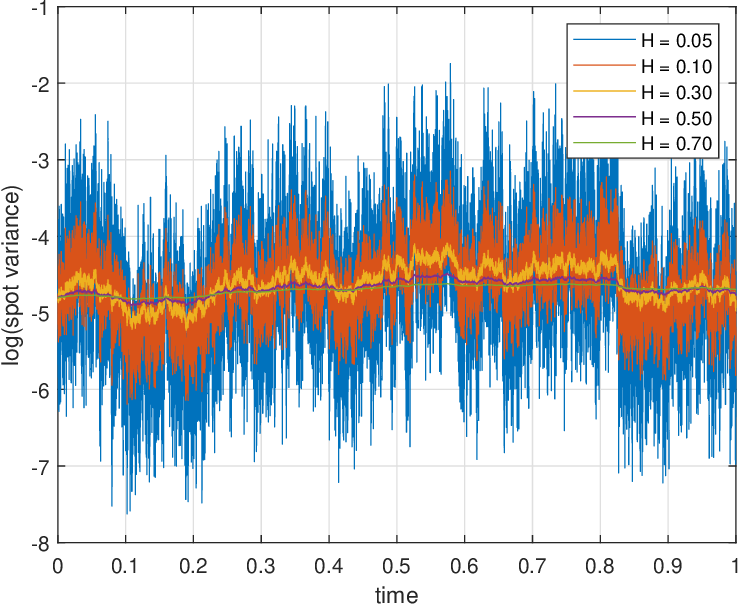}} &
\includegraphics[height=0.4\textwidth,width=0.48\textwidth]{{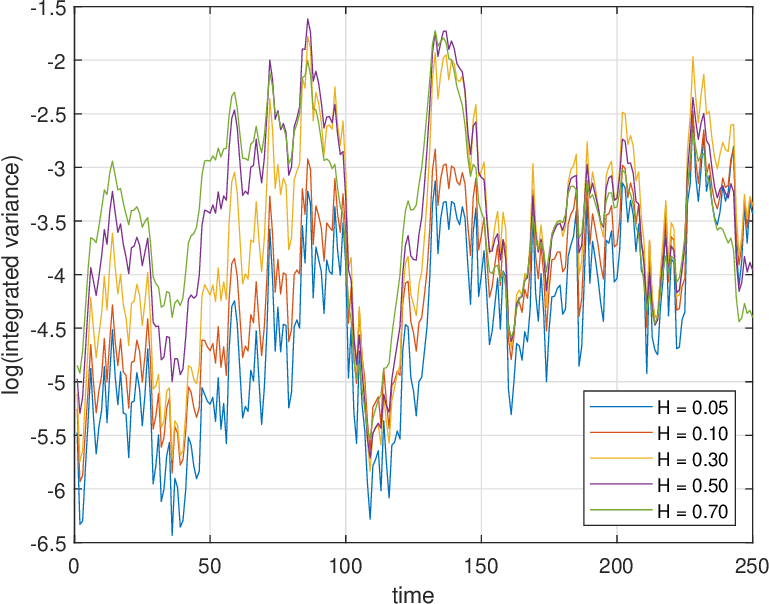}} \\
\end{tabular}
\begin{scriptsize}
\parbox{\textwidth}{\emph{Note.} In Panel A, we simulate a sample path of the log-spot variance for a single day as a function of $H$. In Panel B, we show the associated integrated variance dynamics over 250 trading days.}
\end{scriptsize}
\end{center}
\end{figure}

In addition to integrated variance we also collect realized variance with $n = 78$, i.e. with $5$-minute data. The advantage of this choice is that there is no concern about microstructure noise at this sampling frequency in practice. The input to the optimizer is therefore either $(IV_{t})_{t=1}^{T}$ or $(RV_{t}^{n})_{t=1}^{T}$. We restrict the description of the implementation details below to the feasible setting with realized variance.

The parameter vector is $\theta_{0} = (\xi, \lambda, \nu, H)$, which we estimate via the gradient-based non-linear least squares Matlab function \texttt{lsqnonlin}. We employ the default search algorithm ``trust-region-reflective'' with a tolerance level of $10^{-6}$.

We launch the engine at initial values determined as follows: $\xi$ is started at the average realized variance, i.e. $\widebar{RV} = T^{-1} \sum_{t=1}^{T} RV_{t}^{n}$. To set $H$ and $\nu$ we exploit the auxiliary two-stage procedure proposed in \citet*{gatheral-jaisson-rosenbaum:18a}, which relies on the scaling law:
\begin{equation} \label{equation:scaling-law}
\gamma_{h} \equiv \mathbb{E} \big[ | Y_{t+h} - Y_{t} |^{q} \big] \rightarrow K_{q} \nu^{q} |h|^{qH},
\end{equation}
as $h \rightarrow 0$, where $\displaystyle K_{q} = 2^{q/2} \frac{\Gamma( \frac{q+1}{2})}{ \sqrt{ \pi}}$ is the $q$'th moment of the absolute value of a standard normal random variable. This entails a log-linear relationship between $\gamma_{h}$ and $|h|$: $\ln \left( \gamma_{h} \right) = \ln \left( K_{q} \nu^{q} \right) + qH \ln \left( |h| \right)$. We employ $RV_{t}^{n}$ as a proxy for the instantaneous variance and substitute the left-hand side of \eqref{equation:scaling-law} by the sample mean:
\begin{equation}
\hat{ \gamma}_{h} = \frac{1}{T-m} \sum_{t=1}^{T-m} \lvert \ln(RV_{t+h}^{n}) - \ln(RV_{t}^{n}) \rvert^{q},
\end{equation}
for $h = 1, \dots, m$. $H$ and $\nu$ are then estimated by OLS with $q = 2$ and $m = 6$. The results are rather robust against this configuration. At last, $\lambda$ is pre-estimated such that the theoretical variance of $Y_{t}$ equals the sample variance of $\ln(RV_{t}^{n})$.

As shown in Table \ref{table:sim-parameter-estimation}, the initial values display very low variation between replications, but they are often highly biased. For instance, using $IV_{t}$ the starting point of $H$ increases with the true value, but as expected it is too high on average, whereas for $RV_{t}^{n}$ it is largely unaffected by the actual roughness of the model.

As such, there is a lot of work left to the GMM procedure. We match the sample average of $RV_{t}^{n}$ with the mean of $IV_{t}$ and the $\ell$'th sample autocovariance of $RV_{t}^{n}$ with the autocovariance structure of $IV_{t}$---available based on numerical integration of \eqref{equation:iv-moment} together with \eqref{equation:fSV-covariance}---with $\ell = [0,1,2,3,5,20,50]$. To motivate this choice, note that the short-term behavior of integrated variance and the impact of measurement error are described by rapid changes of autocovariances at a short time scale. Hence, we select 0, 1, 2, 3 and 5 into the set of lags. Additionally, to capture the medium- to long-term persistence of volatility, we include autocovariances at large lags. Their variation is slower, however, and it suffices to select a sparser subset, so we add lags 20 and 50.\footnote{In a robustness check, and to better capture the persistence of log-variance with $H = 0.7$, we also attempted to include lag 100 and 200. However, the results did not change much. This is consistent with \citet*{andersen-sorensen:96a}, who note that estimation of SV models does not always improve by adding more information.} In total, this yields eight moment conditions with four overidentifying restrictions.

While the above may seem arbitrary, it follows the previous literature both in terms of number of moment conditions per parameter and with its emphasis on first- and second-order moments \citep*[see, e.g.,][]{bollerslev-zhou:02a, corradi-distaso:06a, todorov:09a}. The selected autocovariances---taken from an infinite set of possibilities---are also meant to maximize the efficiency of our GMM estimator, while minimizing partially redundant information in the moment conditions \citep*[e.g.,][]{breusch-qian-schmidt-wyhowski:99a, hall-inoue-jana-shin:07a}. On the one hand, we perceive short-run autocovariances to be very informative about the parameters of the model. On the other hand, we expect nearby autocovariances to exhibit higher correlation. By picking a variety of mostly short lags and employing a few lags farther out as a variance reduction device, we attempt to exploit the structure of the problem as much as possible without resorting to formal econometric analysis. An alternative approach to address this problem is \citet*{carrasco-florens:00a}, who extend GMM to a continuum of moment conditions. However, the latter entails additional technical details and complicates the implementation further, and we therefore postpone it to future research.

We employ iterated GMM with a maximum of three iterations. In the first step, a preliminary estimate of $\theta_{0}$ is acquired by setting $\mathbb{W}_{T}$ equal to the identity matrix. In subsequent stages, in accordance with Theorem \ref{Thm:GMMCLTHAC}, the weight matrix is recalculated as in \eqref{equation:hac} with a Parzen kernel to ensure positive semi-definiteness and automatic lag selection based on \citet*{andrews:91a} using an approximating ARMA(1,1) structure for realized variance. The benefit of iterated GMM is its invariance to the initial weighting matrix. Typically, only a single iteration is required to converge, but when volatility is really rough, an extra computation is sometimes helpful due to poor starting values.

\begin{sidewaystable}
\begin{small}
\setlength{ \tabcolsep}{0.25cm}
\begin{center}
\caption{Parameter estimation of the log-normal fSV model in simulated data. \label{table:sim-parameter-estimation}}
\vspace*{-0.25cm}
\begin{tabular}{lccccccccc}
\hline \hline
parameter & value && \multicolumn{2}{c}{integrated variance} && \multicolumn{4}{c}{realized variance} \\
&&&&&&& uncorrected & \multicolumn{2}{c}{corrected estimate} \\
&&& initial value & estimate && initial value & estimate & exact & approximate \\ \cline{4-5} \cline{7-10}
Panel A: \\
$\xi$ & 0.0225 && 0.0225 (0.0014) & 0.0214 (0.0014) && 0.0225 (0.0014) & 0.0207 (0.0015) & 0.0213 (0.0014) & 0.0213 (0.0014) \\
$\lambda$ & 0.0050 && 0.0326 (0.0068) & 0.0072 (0.0057) && 0.0260 (0.0063) & 0.0075 (0.0062) & 0.0072 (0.0056) & 0.0081 (0.0061) \\
$\nu$ & 1.2500 && 0.4580 (0.0058) & 1.3806 (0.3543) && 0.5233 (0.0066) & 2.6245 (0.7784) & 1.3834 (0.3942) & 1.0704 (0.1129) \\
$H$ & 0.0500 && 0.2661 (0.0101) & 0.0460 (0.0202) && 0.2208 (0.0098) & 0.0168 (0.0121) & 0.0466 (0.0212) & 0.0646 (0.0171) \\
Panel B: \\
$\xi$ & 0.0225 && 0.0225 (0.0009) & 0.0216 (0.0010) && 0.0225 (0.0009) & 0.0211 (0.0011) & 0.0216 (0.0009) & 0.0216 (0.0010) \\
$\lambda$ & 0.0100 && 0.0381 (0.0065) & 0.0117 (0.0059) && 0.0286 (0.0059) & 0.0084 (0.0049) & 0.0117 (0.0055) & 0.0124 (0.0056) \\
$\nu$ & 0.7500 && 0.3683 (0.0046) & 0.7767 (0.0871) && 0.4390 (0.0055) & 1.7914 (0.5549) & 0.7750 (0.1066) & 0.7275 (0.0700) \\
$H$ & 0.1000 && 0.2998 (0.0104) & 0.0939 (0.0212) && 0.2363 (0.0099) & 0.0258 (0.0143) & 0.0950 (0.0242) & 0.1036 (0.0215) \\
Panel C: \\
$\xi$ & 0.0225 && 0.0226 (0.0031) & 0.0190 (0.0025) && 0.0226 (0.0031) & 0.0182 (0.0025) & 0.0189 (0.0025) & 0.0189 (0.0025) \\
$\lambda$ & 0.0150 && 0.0318 (0.0054) & 0.0187 (0.0150) && 0.0225 (0.0046) & 0.0098 (0.0087) & 0.0187 (0.0149) & 0.0188 (0.0149) \\
$\nu$ & 0.5000 && 0.3624 (0.0045) & 0.5019 (0.0454) && 0.4282 (0.0052) & 0.7131 (0.1794) & 0.4998 (0.0621) & 0.4982 (0.0594) \\
$H$ & 0.3000 && 0.4380 (0.0108) & 0.2781 (0.0490) && 0.3620 (0.0108) & 0.1759 (0.0530) & 0.2793 (0.0589) & 0.2802 (0.0583) \\
Panel D: \\
$\xi$ & 0.0225 && 0.0226 (0.0037) & 0.0191 (0.0031) && 0.0226 (0.0037) & 0.0168 (0.0028) & 0.0190 (0.0031) & 0.0190 (0.0031) \\
$\lambda$ & 0.0350 && 0.0424 (0.0053) & 0.0437 (0.0264) && 0.0240 (0.0040) & 0.0161 (0.0163) & 0.0473 (0.0357) & 0.0473 (0.0357) \\
$\nu$ & 0.3000 && 0.2484 (0.0032) & 0.2990 (0.0236) && 0.3320 (0.0039) & 0.4712 (0.1947) & 0.2982 (0.0332) & 0.2981 (0.0331) \\
$H$ & 0.5000 && 0.5715 (0.0106) & 0.4839 (0.0638) && 0.4256 (0.0114) & 0.2725 (0.0948) & 0.4912 (0.0933) & 0.4913 (0.0932) \\
Panel E: \\
$\xi$ & 0.0225 && 0.0226 (0.0057) & 0.0195 (0.0049) && 0.0226 (0.0057) & 0.0159 (0.0040) & 0.0193 (0.0049) & 0.0193 (0.0049) \\
$\lambda$ & 0.0700 && 0.0521 (0.0056) & 0.0828 (0.0315) && 0.0207 (0.0037) & 0.0188 (0.0258) & 0.0794 (0.0373) & 0.0794 (0.0373) \\
$\nu$ & 0.2000 && 0.1691 (0.0023) & 0.2004 (0.0147) && 0.2749 (0.0032) & 0.3571 (0.2121) & 0.2055 (0.0208) & 0.2054 (0.0208) \\
$H$ & 0.7000 && 0.6777 (0.0098) & 0.6803 (0.0670) && 0.4254 (0.0120) & 0.3079 (0.1318) & 0.6580 (0.1000) & 0.6580 (0.1000) \\
\hline \hline
\end{tabular}
\smallskip
\begin{scriptsize}
\parbox{0.98\textwidth}{\emph{Note.} 
We simulate 10,000 replications of a fractional Ornstein-Uhlenbeck process $\text{d}Y_{t} = -\lambda(Y_{t}- \eta) \text{d}t + \nu \text{d}B_{t}^{H}$ on $[0,T]$ with $T = \text{4,000}$ and a discretization step of $\Delta = 1/\text{23,400}$. 
The true model parameters $\theta_{0} = (\xi, \lambda, \nu, H)$ appear in Panel A -- E, where $\xi = e^{ \eta + 0.5\text{var}(Y_{t})}$. 
We estimate $\theta_{0}$ with the GMM procedure developed in the main text, where the theoretical mean and autovariance (at lag 0, 1, 2, 3, 5, 20, and 50) of integrated variance is matched with the sample. 
The optimizer is launched with initial values from the two-stage procedure in \citet*{gatheral-jaisson-rosenbaum:18a}. 
We report the average of the initial values and the associated parameter estimates based on integrated variance (left) and realized variance (right). 
The latter is computed both excluding (``uncorrected'') and including (``corrected'') the bias correction in \eqref{equation:bias-correction}. The form of $c(\theta)$ is available in Table \ref{table:error-term} and is either based on the no drift and leverage assumption (``exact'') or the CLT-based approximation (``approximate''). 
Standard deviation across simulations appear in parenthesis. 
}
\end{scriptsize}
\end{center}
\end{small}
\end{sidewaystable}

The results are presented in Table \ref{table:sim-parameter-estimation}. We report the mean estimate (standard error in parenthesis) both for the initial and final parameter value, where all calculations are done across replica. The left-hand side shows the outcome based on integrated variance, whereas the right-hand side is for realized variance with and without the correction in \eqref{equation:bias-correction}. As a robustness check, we gauge both the exact solution and CLT-based approximation available in Table \ref{table:error-term}. The former applies, since there is no drift nor leverage in the model.

We begin by commenting on the estimation based on integrated variance. Several interesting findings emerge. In the infeasible setting, the GMM procedure returns parameter estimates that are close to their population counterparts, thus verifying the robustness and accuracy of our approach. Across the board, the typical estimate of $H$ tracks the true value with minor deviations. We do notice a minuscule underestimation of $\xi$ and overestimation of $\lambda$. The drift parameters pull the acf of integrated variance in opposite directions, causing an offsetting impact on the moment matching. Nevertheless, both estimates remain within about a Monte Carlo standard error of the true value. $\nu$ is typically recovered with little bias, but for $H = 0.05$ the estimates are shifted upward and exhibit high variation. The intuition is that it is tougher to recover the volatility-of-volatility parameter with severe roughness.

Turning attention to the feasible results for realized variance, we record a significant deterioration in the estimation of $H$ without the quarticity adjustment. As explained, realized variance is a noisy proxy for integrated variance, which induces ``illusive roughness'' and yields $H$ estimates that are vastly below target, when the measurement error is unaccounted for. Also, $\nu$ increases while $\lambda$ decreases to compensate for this effect. Including the bias correction rectifies this problem and leads to a huge improvement in all parameter estimates. It is assuring to see how the analysis for the exact correction aligns that integrated variance. Moreover, the distinction between the exact and approximate correction is often immaterial, but for $H \leq 0.10$ notable differences start to creep in. In particular, the latter produces smaller $\nu$ and larger $H$ estimates. This indicates that the CLT-based correction may be somewhat inaccurate in finite samples when volatility is very erratic, which is not entirely unexpected. However, it is important to note that this amounts to \textit{less} roughness. Overall, our simulations suggest the CLT-based approach is accurate enough even with 5-minute sampling.

\begin{figure}[t!]
\begin{center}
\caption{Kernel density estimate of standardized $H$. \label{figure:H}}
\begin{tabular}{cc}
\small{Panel A: integrated variance.} & \small{Panel B: realized variance.} \\
\includegraphics[height=0.4\textwidth,width=0.48\textwidth]{{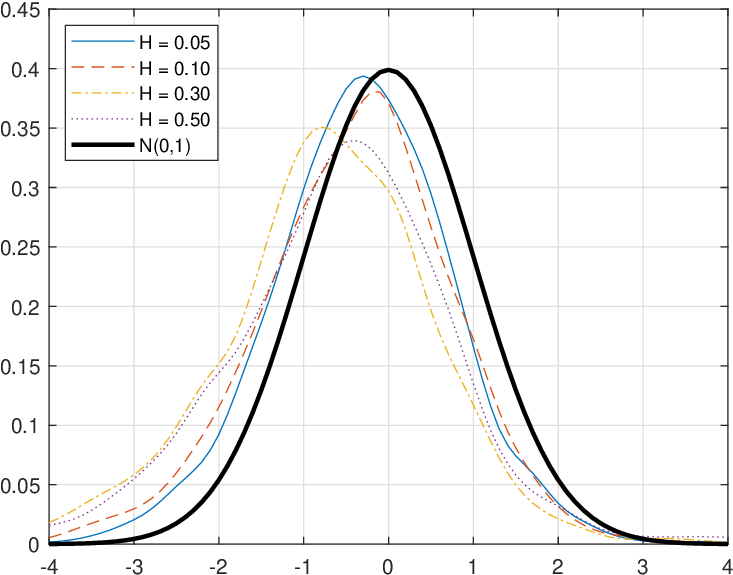}} &
\includegraphics[height=0.4\textwidth,width=0.48\textwidth]{{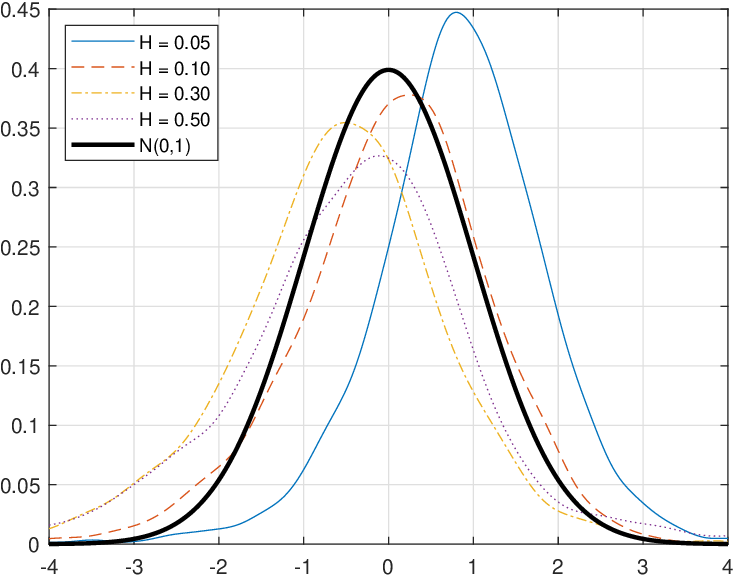}} \\
\end{tabular}
\begin{scriptsize}
\parbox{\textwidth}{\emph{Note.} We construct kernel smoothed densities of the statistic $(\widehat{H} - H) / \widehat{ \text{se}} ( \widehat{H})$, where $\widehat{ \text{se}}( \widehat{H})$ is the estimated standard error of $\widehat{H}$. In Panel A, we show the analysis for integrated variance, whereas Panel B is for realized variance with the CLT-based bias correction. The density function of a standard normal random variable is superimposed as a reference point.}
\end{scriptsize}
\end{center}
\end{figure}

To gauge the accuracy of our asymptotic theory, Figure \ref{figure:H} portrays kernel smoothed densities of the statistic $( \widehat{H} - H) / \widehat{ \text{se}} ( \widehat{H}) \overset{d}{ \longrightarrow} N(0,1)$, where $\widehat{ \text{se}}( \widehat{H})$ is standard error estimate extracted from \eqref{equation:hac} and the Jacobi matrix provided by the optimizer. The left-hand side depicts the outcome for integrated variance, whereas the right-hand side contains the normalized realized variance estimates with the CLT-based bias correction. The exact correction is omitted, as its graphs are virtually identical to those reported for integrated variance.\footnote{$H = 0.7$ is excluded from the figure as well, because the distribution theory does not cover the long-memory model. Indeed, the approximation is markedly worse in that setting.} As readily seen, the standard normal is a good description of the variation in the parameter estimates of $H$, apart from a modest off-centering with realized variance for $H = 0.05$.

In Figure \ref{figure:J-test}, we inspect the finite sample behavior of the Sargan-Hansen J-test of overidentifying restrictions. We contrast the empirical distribution function of the test statistic, $\mathcal{J}_{ \text{HS}}$, against its asymptotic distribution under the null hypothesis, which is $\chi^{2}(4)$.  Again, we show the integrated variance in Panel A and realized variance with the CLT-based bias correction in Panel B. The impression is that the test statistic lines up fairly well with the predicted values. However, the curves are located to the right of the reference line in the lower half of diagram and vice versa in the upper half. This implies an overconcentration of mass in the center of the empirical distribution, meaning the test statistic is slightly underdispersed. This leads to a conservative test with rejection rates that are below the nominal level. At the 5\% significance level, for example, the test statistic exceeds the critical value of 9.4877 about 3.6\% of the times for $H = 0.05$, which drops to 0.28\% for $H = 0.50$. Interestingly, the approximation is better for smaller values of $H$.

\begin{figure}[t!]
\begin{center}
\caption{Distribution of the J-test of overidentifying restrictions. \label{figure:J-test}}
\begin{tabular}{cc}
\small{Panel A: integrated variance.} & \small{Panel B: realized variance.} \\
\includegraphics[height=0.4\textwidth,width=0.48\textwidth]{{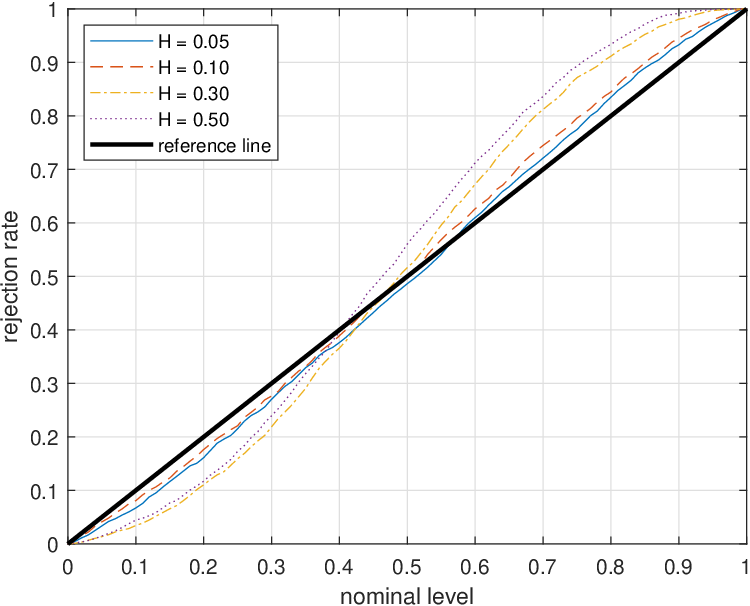}} &
\includegraphics[height=0.4\textwidth,width=0.48\textwidth]{{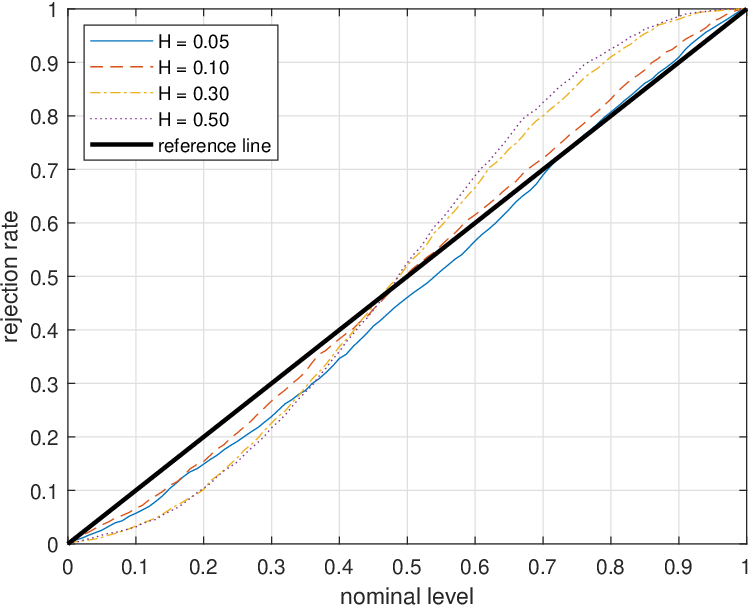}} \\
\end{tabular}
\begin{scriptsize}
\parbox{\textwidth}{\emph{Note.} We construct P-P plots to describe the variation in the J-test of overidentifying restrictions. The test statistic, $\mathcal{J}_{ \text{HS}}$, is asymptotically $\chi^{2}(4)$-distributed under the null. In Panel A, we show the analysis for integrated variance, whereas Panel B is for realized variance with the CLT-based bias correction. The 45-degree line is superimposed as a reference point.}
\end{scriptsize}
\end{center}
\end{figure}

\section{Empirical application} \label{section:empirical}

The log-normal fSV model is estimated from empirical high-frequency data covering a comprehensive selection of asset return series. We downloaded version 0.3 of the Oxford-Man Institute's ``realized library'' via: \texttt{https://realized.oxford-man.ox.ac.uk/}. The website tracks thirty-one leading stock indexes covering major financial markets. At the end of each trading day, the library is refreshed with information from Thomson Reuters DataScope Tick History and several nonparametric volatility estimators are calculated and appended to the database. We here employ the daily realized variance defined in \eqref{equation:rv}. In line with our comments above, we decide on a 5-minute sampling frequency to suppress microstructure noise. As the trading hours of each stock exchange varies, this corresponds to $n$ falling between 78 and 102 for most indexes, equivalent to an opening period of 6.5 -- 8.5 hours.

As it is, the database contains several data entries we suspect are erroneous. For instance, realized variance is occasionally identically equal to zero. While not impossible due to price discreteness, it is implausible for liquid securities. We therefore remove these from the sample. In addition, on February 7 2010 a realized variance corresponding to an annualized volatility of 250\% is reported for the .OMXHPI equity index. We searched both Factiva and Google for relevant news articles. There is nothing immediate to suggest anything out of the extraordinary occurred in the Finnish stock market that day. However, without access to the underlying high-frequency data, it is difficult to reconcile what is causing such an abnormal deviation in realized variance. The volatility in the subsequent days are back to normal levels, so it is probably an isolated outlier. Removal of this single observation is enough to raise the first-order sample autocorrelation of the .OMXHPI realized variance series from 0.209 to 0.592.

To account for such irregularities, we further discard realized variance estimates differing more than 30 mean absolute deviations from the average realized variance calculated from 50 observations on a rolling window centered around, but excluding, the data point under investigation. This is a light filter that removes none but the most egregious data. The median number of data deleted for each index with this algorithm is two. In the US equity market (.DJI, .IXIC, .RUT and .SPX) nothing from the financial crisis is flagged as outlying. Only June 24 2015 and August 24 2016 are deleted. Both these days correspond to infamous flash crashes that ravage volatility estimates \citep*[e.g.][]{christensen-oomen-reno:22a}.

An overview of the remaining data is presented in Table \ref{table:hurst-exponent}. It reports the starting date of each index and the sample size. We include information up to 31 July 2019 and exclude .KSE og .STI from our investigation, as there are sizable gaps in their data series.

\begin{sidewaystable}
\begin{footnotesize}
\setlength{ \tabcolsep}{0.19cm}
\begin{center}
\caption{Parameter estimation of the log-normal fSV model in stock index data.
\label{table:hurst-exponent}}
\vspace*{-0.25cm}
\begin{tabular}{lllccccccccccccc}
\hline \hline
&&&&&&&&&& \multicolumn{4}{c}{GMM estimate} \\
\cline{11-14}
code & index & location & start date & sample size & $n$ & $\widebar{RV}$ && $\rho_{1}$ && $\xi$ & $\lambda \times 100$ & $\nu$ & $H$ && $\mathcal{J}_{ \text{HS}}$ \\
\hline 
.AEX & AEX & Netherlands & 2000--01 & 4,990 & 102 & 0.028 && 0.743 && 0.024 & 0.026 & 1.800 & 0.032 && 0.726 \\
.AORD & All Ordinaries & Australia & 2000--01 & 4,941 & 72 & 0.011 && 0.575 && 0.008 & 4.873 & 2.345 & 0.035 && 0.128 \\
.BFX & BEL 20 & Belgium & 2000--01 & 4,987 & 102 & 0.022 && 0.644 && 0.018 & 0.012 & 1.938 & 0.021 && 0.603 \\
.BSESN & BSE Sensex & India & 2000--01 & 4,852 & 75 & 0.035 && 0.625 && 0.030 & 0.012 & 2.029 & 0.021 && 0.621 \\
.BVLG & PSI All-Share & Portugal & 2012--10 & 1,731 & 96 & 0.011 && 0.630 && 0.010 & 0.027 & 1.893 & 0.015 && 0.764 \\
.BVSP & Bovespa & Brazil & 2000--01 & 4,819 & 84 & 0.038 && 0.718 && 0.032 & 0.103 & 1.852 & 0.025 && 0.847 \\
.DJI & DJIA & USA & 2000--01 & 4,907 & 78 & 0.027 && 0.677 && 0.021 & 0.010 & 2.008 & 0.021 && 0.773 \\
.FCHI & CAC 40 & France & 2000--01 & 4,989 & 102 & 0.033 && 0.668 && 0.029 & 0.108 & 1.837 & 0.035 && 0.789 \\
.FTMIB & FTSE MIB & Italy & 2009--06 & 2,581 & 90 & 0.029 && 0.658 && 0.023 & 0.018 & 1.877 & 0.016 && 0.303 \\
.FTSE & FTSE 100 & United Kingdom & 2000--01 & 4,933 & 102 & 0.028 && 0.541 && 0.022 & 0.020 & 2.211 & 0.019 && 0.645 \\
.GDAXI & DAX & Germany & 2000--01 & 4,965 & 102 & 0.041 && 0.702 && 0.037 & 0.021 & 2.009 & 0.027 && 0.922 \\
.GSPTSE & TSX Composite & Canada & 2002--05 & 4,314 & 78 & 0.020 && 0.599 && 0.012 & 0.001 & 2.254 & 0.017 && 0.464 \\
.HSI & Hang Seng & Hong Kong & 2000--01 & 4,791 & 78 & 0.024 && 0.677 && 0.017 & 0.001 & 2.102 & 0.008 && 0.322 \\
.IBEX & IBEX 35 & Spain & 2000--01 & 4,957 & 102 & 0.035 && 0.657 && 0.033 & 0.172 & 1.894 & 0.030 && 0.878 \\
.IXIC & Nasdaq 100 & USA & 2000--01 & 4,906 & 78 & 0.030 && 0.696 && 0.020 & 0.002 & 1.445 & 0.029 && 0.361 \\
.KS11 & KOSPI & South Korea & 2000--01 & 4,814 & 72 & 0.030 && 0.763 && 0.024 & 0.003 & 1.643 & 0.027 && 0.661 \\
.MXX & IPC Mexico & Mexico & 2000--01 & 4,907 & 78 & 0.020 && 0.487 && 0.016 & 0.007 & 2.442 & 0.010 && 0.478 \\
.N225 & Nikkei 225 & Japan & 2000--02 & 4,762 & 72 & 0.025 && 0.683 && 0.023 & 0.183 & 1.847 & 0.029 && 0.564 \\
.NSEI & Nifty 50 & India & 2000--01 & 4,849 & 75 & 0.030 && 0.598 && 0.029 & 0.092 & 2.146 & 0.032 && 0.955 \\
.OMXC20 & OMXC20 & Denmark & 2005--10 & 3,431 & 96 & 0.030 && 0.641 && 0.021 & 0.306 & 2.247 & 0.026 && 0.671 \\
.OMXHPI & OMX Helsinki & Finland & 2005--10 & 3,467 & 102 & 0.028 && 0.592 && 0.018 & 0.001 & 1.475 & 0.028 && 0.360 \\
.OMXSPI & OMX Stockholm & Sweden & 2005--10 & 3,468 & 102 & 0.024 && 0.562 && 0.015 & 0.001 & 2.432 & 0.012 && 0.487 \\
.OSEAX & Oslo Exchange & Norway & 2001--09 & 4,462 & 101 & 0.031 && 0.634 && 0.023 & 0.001 & 2.317 & 0.012 && 0.379 \\
.RUT & Russel 2000 & USA & 2000--01 & 4,907 & 78 & 0.018 && 0.663 && 0.013 & 0.051 & 2.180 & 0.019 && 0.674 \\
.SMSI & Madrid General & Spain & 2005--07 & 3,587 & 101 & 0.031 && 0.648 && 0.029 & 0.218 & 2.008 & 0.030 && 0.591 \\
.SPX & S\&P 500 & USA & 2000--01 & 4,911 & 78 & 0.026 && 0.699 && 0.019 & 0.100 & 1.610 & 0.043 && 0.641 \\
.SSEC & Shanghai Composite & China & 2000--01 & 4,724 & 66 & 0.042 && 0.674 && 0.036 & 0.015 & 1.865 & 0.025 && 0.692 \\
.SSMI & Swiss Market Index & Switzerland & 2000--01 & 4,904 & 102 & 0.020 && 0.732 && 0.017 & 0.022 & 1.613 & 0.037 && 0.747 \\
.STOXX50E & EURO STOXX 50 & Europe & 2000--01 & 4,989 & 102 & 0.039 && 0.585 && 0.033 & 0.092 & 2.013 & 0.030 && 0.783 \\
\cline{1-1} \cline{7-7}  \cline{9-9} \cline{11-14} \cline{16-16}
Average &&&&&& 0.028 && 0.640 && 0.022 & 0.214 & 1.980 & 0.024 &&0.607\\
\hline \hline
\end{tabular}
\smallskip
\begin{scriptsize}
\parbox{0.98\textwidth}{\emph{Note.} 
``code'' is based on the Oxford-Man Institute's naming convention. 
``sample size' is the number of observations for the stock index. 
$n$ is the number of intraday returns at the 5-minute sampling frequency. 
$\widebar{RV}$ is the sample average realized variance. 
$\rho_{1}$ is the first-order autocorrelation of realized variance. 
$\xi$ is the average level of the spot variance process, 
$\lambda$ is the speed of mean reversion (multiplied by 100), 
$\nu$ is the volatility-of-volatility, 
while $H$ is the Hurst exponent. 
The cross-sectional average of each descriptive statistic is shown in the bottow row. 
$\mathcal{J}_{ \text{HS}}$ reports the $P$-value of the Sargan-Hansen test of overidentifying restrictions, which is asymptotically $\chi^{2}(4)$-distributed. 
}
\end{scriptsize}
\end{center}
\end{footnotesize}
\end{sidewaystable}

The GMM estimation follows the setup from the simulation section. Since we are dealing with equity data, we can expect both drift and leverage to be present, so the approximate correction from Table \ref{table:error-term} is employed. The right-hand side of Table \ref{table:hurst-exponent} shows the outcome for individual stock indexes, where the bottom row presents the cross-sectional average of each parameter. Looking at the table, the results are remarkably stable across assets. The $\bar{ \xi} = 0.022$ estimate corresponds to about 14.91\% annualized volatility in the aggregate stock market. We do observe a slight underestimation of the mean volatility level compared to the sample average realized variance, which is consistent with the Monte Carlo analysis.

The reported Hurst exponents suggest a very rough volatility process with an average level of $\bar{H} = 0.024$. This is on par with \citet*{bayer-friz-gatheral:16a} and \citet*{fukasawa-takabatake-westphal:22a} but slightly smaller compared to \citet*{bennedsen-lunde-pakkanen:22a} and \citet*{gatheral-jaisson-rosenbaum:18a}. A possible reason for this discrepancy is that the latter employ realized variance as a proxy for spot variance. However, the former is a consistent estimator of the integrated variance, which is much smoother than instantaneous variance (see Figure \ref{figure:sigma}). This ought to bias their $H$ estimates upwards. Our procedure does not suffer from that problem, as we directly compare realized variance to the dynamics of integrated variance in the fSV model, so the averaging ``cancels out.''

\begin{figure}[t!]
\begin{center}
\caption{Properties of the roughness estimate. \label{figure:roughness}}
\begin{tabular}{cc}
\small{Panel A: 90\% confidence interval.} & \small{Panel B: $\hat{H}$ versus $\rho_{1}$.} \\
\includegraphics[height=0.4\textwidth,width=0.48\textwidth]{{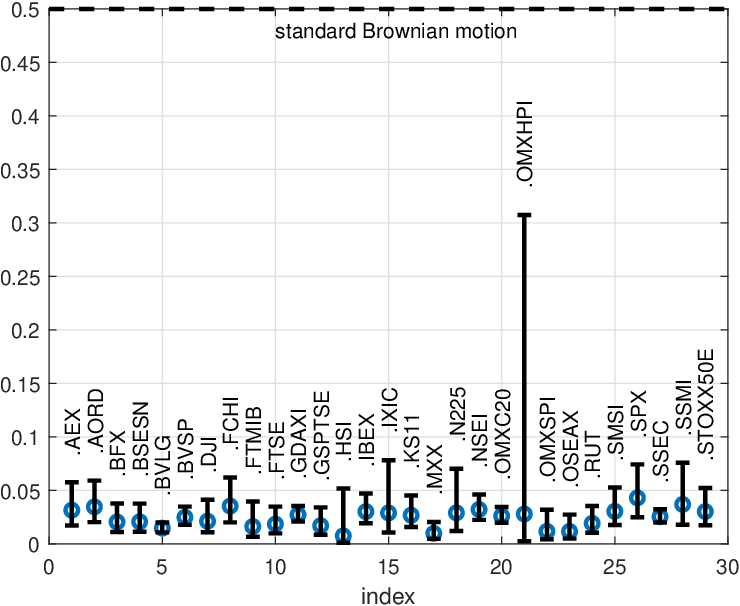}} &
\includegraphics[height=0.4\textwidth,width=0.48\textwidth]{{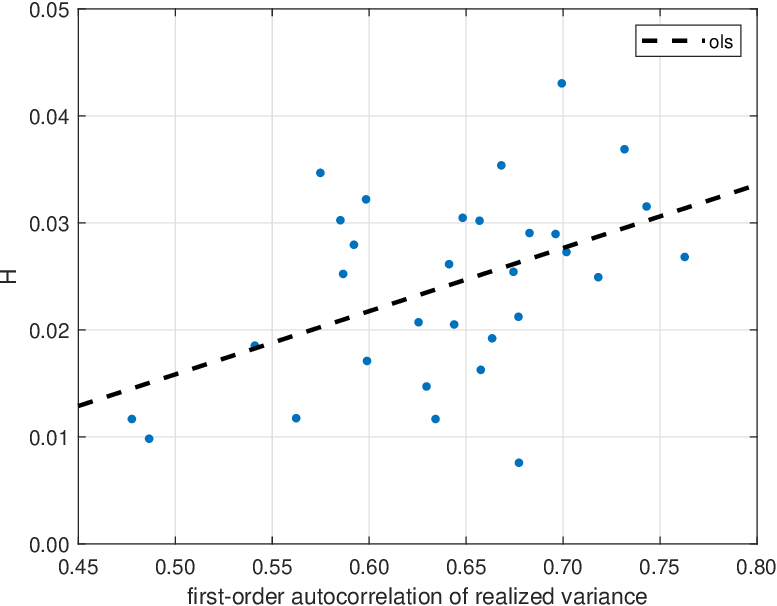}} \\
\end{tabular}
\begin{scriptsize}
\parbox{\textwidth}{\emph{Note.} In Panel A, we report 90\% CIs for $H$. The CIs are based on the log-based distribution theory to enforce non-negativity. This also makes the CIs asymmetric. In Panel B, we show a scatter plot of $\widehat{H}$ versus $\rho_{1}$ together with the fitted regression line $\widehat{H} = \hat{a} + \hat{b} \rho_{1}$.}
\end{scriptsize}
\end{center}
\end{figure}

In Panel A of Figure \ref{figure:roughness}, we report the empirical estimates of $H$ together with 90\% confidence intervals (CIs). The latter are constructed by exponentiating the CIs associated with the log-based distribution theory, where we apply the delta rule to conclude that $\displaystyle \frac{\sqrt{T}( \ln(\widehat{H}) - \ln(H))}{\widehat{ \text{se}}( \widehat{H})/ \widehat{H}} \overset{d}{ \longrightarrow} N(0,1)$ as $T \rightarrow \infty$.\footnote{This approach has the advantage of enforcing non-negativity on the CI for $H$.} Here, $\widehat{ \text{se}}( \widehat{H})$ is the standard error of $\widehat{H}$ extracted from the estimate of the asymptotic variance-covariance matrix in \eqref{equation:efficient-gmm}. Looking at the graph, we discover that $H$ is typically estimated very accurately, meaning realized variance is informative about the true level of roughness in the data. The exception is the Finnish stock market, where the range of plausible values is rather extensive. Irrespective of this, the upper bound of the CIs are far away from the level implied by a standard Brownian motion. In Panel B, we add a scatter plot of $\widehat{H}$ versus the persistence of realized variance, as measured by its first-order autocorrelation coefficient, $\rho_{1}$. This reveals a pronounced positive association between the series. A linear regression $\widehat{H} = a + b \rho_{1} + \epsilon$ shows an $R^{2} = 0.2120$ and a slope estimate $\hat{b} = 0.0591$ that is statistically significant with a $P$-value of 0.0091 (the estimated intercept $\hat{a} = -0.0137$ is highly insignificant).

To gauge the statistical fit of the model, we calculate the J-test for overidentifying restrictions. To reiterate, the test statistic $\mathcal{J}_{ \text{HS}}$ has an asymptotic $\chi^{2}(4)$-distribution under $\mathcal{H}_{0}$. The results appear in the last column of Table \ref{table:hurst-exponent}. Overall, the $P$-values are relatively high, so the fSV process does a good job in describing the data.

\begin{figure}[t!]
\begin{center}
\caption{Properties of .SPX realized variance. \label{figure:spx-realized-variance}}
\begin{tabular}{cc}
\small{Panel A: Time series.} & \small{Panel B: Autocorrelation function.} \\
\includegraphics[height=0.4\textwidth,width=0.48\textwidth]{{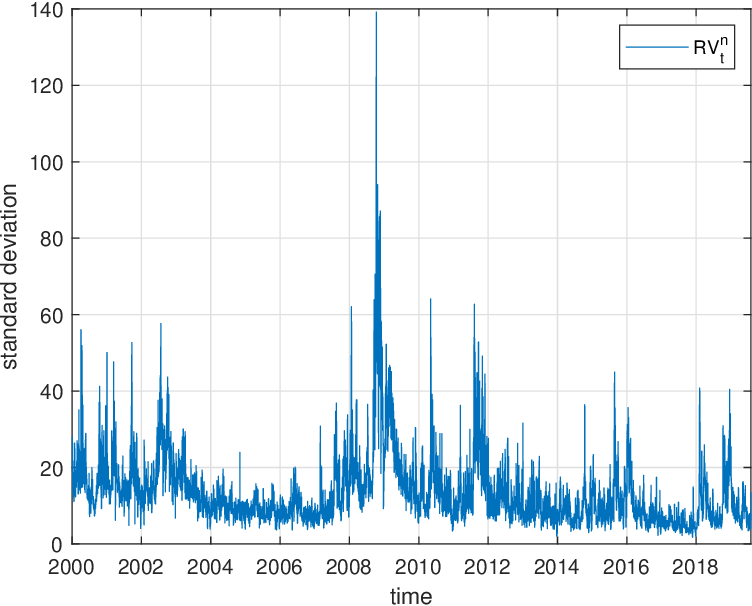}} &
\includegraphics[height=0.4\textwidth,width=0.48\textwidth]{{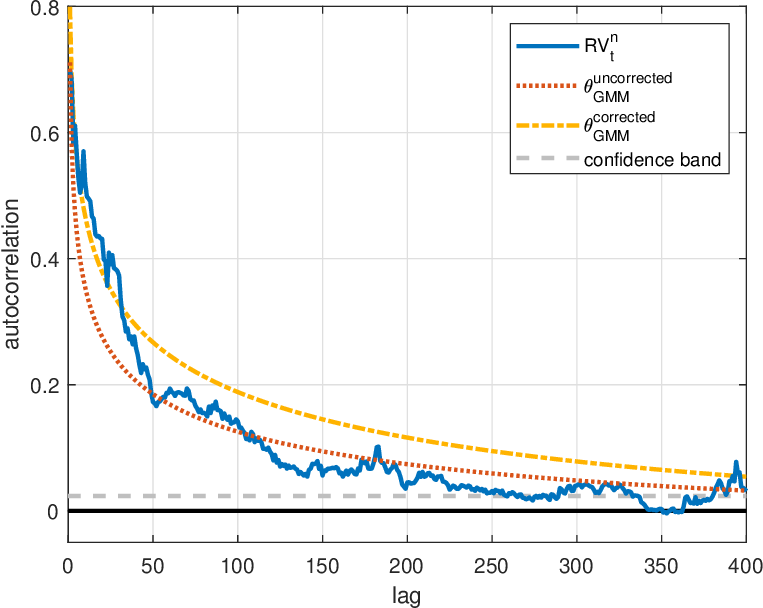}} \\
\end{tabular}
\begin{scriptsize}
\parbox{\textwidth}{\emph{Note.} In Panel A, we plot the realized variance of .SPX converted to standard deviation per annum. In Panel B, we show the sample acf of realized variance with a 95\% white noise confidence band. We compare this to the theoretical acf of the log-normal fSV model implied by the estimated parameter vector $\hat{ \theta}_{\text{GMM}}$, where the latter is reported without and with the bias correction from \eqref{equation:bias-correction}.}
\end{scriptsize}
\end{center}
\end{figure}

As an illustration of our findings we zoom in at .SPX, which represents the S\&P 500 index and is therefore related to developments in the US stock market. In Panel A of Figure \ref{figure:spx-realized-variance}, we show the realized variance of .SPX (the raw estimator has been converted to standard deviation per annum for convenience). It displays the customary volatility clustering present in most financial asset return series. In Panel B, we plot the first 400 lags of the associated acf of $RV_{t}^{n}$ together with a Bartlett one-sided 95\% white noise confidence band. It is suggestive of significant memory in integrated variance. The acf retains the impression of a hyperbolic decay, which is consistent with $H \neq 0.5$. The initial steep decline is indicative of roughness driving the serial correlation, in contrast to the much slower decay symptomatic of long-memory, where the latter entails a Hurst exponent above 0.5. As a comparison, we superimpose the model-implied acf recovered from the GMM parameter estimation, where the latter is shown both with and without the approximate bias correction in \eqref{equation:bias-correction}. The acf of the uncorrected estimator tracks the sample counterpart based on realized variance closely both at the short and long end. Meanwhile, the effect of the bias correction is to lift the acf higher, indicating a larger amount of memory in integrated variance. Note that this was to be expected, since the impact of measurement error in a time series is to attenuate the acf \citep*[e.g.,][]{hansen-lunde:14a}.\footnote{We also estimated the fSV model with a driving standard Brownian motion, i.e. pre-imposing $H = 0.5$. The remaining parameter estimates were $(\hat{ \xi}, \hat{ \lambda}, \hat{ \nu}) = (0.016,0.101,0.552)$, which broadly aligns with previous studies. Intuitively, to fit the sample acf of realized variance the GMM procedure has to select a larger mean-reversion parameter $\lambda$ to compensate for the extra memory induced by forcing $H$ to one-half.}

As a robustness check, we fetched from the NYSE Trade and Quote (TAQ) database a 5-minute transaction price series for the ticker symbol SPY; an exchange-traded fund tracking the S\&P 500 and equivalent to the .SPX index. The purpose is to verify the findings via our in-house data sources and to gauge their sensitivity with respect to price jumps. We pair SPY with the .SPX sample and construct realized variance, bipower variation and truncated realized variance. The latter are jump-robust estimators of integrated variance. In that ordering, the GMM procedure delivers the following $H$ estimates: 0.033, 0.038, 0.035. As expected, bipower variation and truncated realized variance return slightly larger measures of $H$, but the various estimates are practically speaking identical and consistent with the .SPX estimate from Table \ref{table:hurst-exponent}. The other parameter estimates are also close. Hence, the presence and degree of roughness is not dictated by the choice of noisy proxy, nor is it caused by application of the non-robust realized variance.

In sum, our empirical results point toward a very erratic volatility process in line with---or even exceeding---previous research. As these findings are not induced by microstructure noise nor discretization error, we are bound to conclude there is roughness in variance.

\section{Conclusion} \label{section:conclusion}

We propose a GMM framework for estimation of the log-normal SV model governed by a fractional Brownian motion. Our procedure is built from the dynamic properties of integrated variance, but it employs a realized measure of volatility computed from high-frequency as a noisy proxy. We explicitly account for the inherent measurement error in the selected estimator by adjusting an appropriate moment condition. We prove consistency and asymptotic normality our estimator in a classical long-span setting. A Monte Carlo study shows our routine is capable of recovering the parameters of the model across the entire memory spectrum. We implement the approach on high-frequency data from leading equity market indexes and confirm the presence of substantial roughness in the stochastic variance process, as consistent with recent findings in the literature.

\pagebreak

\appendix

\section{Proofs} \label{App:Proofs}

\subsection{Auxiliary Result}

To prove Theorem \ref{theo:MomentsIVGeneral}, we need the following auxiliary result that enables to express certain two-dimensional integrals in a one-dimensional form.

\begin{lemma} \label{lemma:integration}
Assume $f : [0, \infty) \rightarrow  \mathbb R$ is a continuous function and let $k \in \mathbb N$. Then,
\begin{equation*}
\int_{k-1}^{k} \int_{0}^{1} f(|s-t|) \mathrm{d}s \mathrm{d}t = \int_{0}^{1} (1-y) \big(f(|k-1-y|)+f(k-1+y) \big) \mathrm{d}y.
\end{equation*}
\end{lemma}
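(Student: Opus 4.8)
The plan is to reduce the double integral to a one-dimensional integral over the difference variable $s-t$, and then exploit the elementary fact that the difference of two independent uniform variates on $[0,1]$ has a triangular density on $[-1,1]$.

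First I would shift the outer variable: substituting $t = k-1+u$ with $u \in [0,1]$ turns the left-hand side into $\int_{0}^{1}\int_{0}^{1} f\big(|s-u-(k-1)|\big)\,\mathrm ds\,\mathrm du$. The key observation is that, writing $v = s-u \in [-1,1]$, the integrand depends on $(s,u)$ only through $v$, since $|s-u-(k-1)| = |v-(k-1)|$.

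Next I would invoke the identity
\[
\int_{0}^{1}\int_{0}^{1} h(s-u)\,\mathrm ds\,\mathrm du = \int_{-1}^{1}(1-|v|)\,h(v)\,\mathrm dv,
\]
valid for every $h$ that is integrable on $[-1,1]$; continuity of $f$ guarantees this for $h(v)=f(|v-(k-1)|)$, since then $f$ is evaluated on the compact set $[0,k]$. This identity is the substantive step, and it is proved by Fubini's theorem together with the change of variables $s = v+u$: for fixed $v \in [-1,1]$ the set of admissible $u$ is $[\max(0,-v),\min(1,1-v)]$, which has length $1-|v|$. (Equivalently, $s-u$ carries the triangular density $v \mapsto (1-|v|)$ on $[-1,1]$ when $s,u$ are independent and uniform on $[0,1]$.) Applying it yields $\int_{-1}^{1}(1-|v|)f\big(|v-(k-1)|\big)\,\mathrm dv$.

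Finally I would split this integral at $v=0$. On $[0,1]$, with $y=v$, one gets $\int_{0}^{1}(1-y)f\big(|k-1-y|\big)\,\mathrm dy$. On $[-1,0]$, with $y=-v$, one gets $\int_{0}^{1}(1-y)f\big(|{-y}-(k-1)|\big)\,\mathrm dy = \int_{0}^{1}(1-y)f(k-1+y)\,\mathrm dy$, using $k\ge 1$ so that $y+(k-1)\ge 0$. Summing the two pieces gives exactly the right-hand side. The only part with any content is the difference-variable identity displayed above; everything else is bookkeeping with the absolute values. One could alternatively prove the lemma by dissecting the square $[0,1]^2$ into the regions $\{s\ge t\}$ and $\{s<t\}$ (the distinction matters only for $k=1$), but the difference-variable route handles all $k\in\mathbb N$ uniformly and is the one I would present.
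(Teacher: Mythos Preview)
Your proof is correct. Both your argument and the paper's ultimately reduce the double integral to an integral over the difference variable $s-t$, but the executions differ. The paper applies a two-dimensional linear change of variables $(s,t)\mapsto(u,v)=(s-t,s+t)$ directly to the rectangle $[k-1,k]\times[0,1]$, determines the image region $B$, splits it into two triangles $B_1,B_2$, and evaluates each piece separately. You instead first translate $t\mapsto t-(k-1)$ to reduce to the unit square $[0,1]^2$, and then invoke the standard convolution identity $\int_0^1\int_0^1 h(s-u)\,\mathrm ds\,\mathrm du=\int_{-1}^1(1-|v|)h(v)\,\mathrm dv$ in one stroke. Your route is shorter and avoids the explicit region dissection; the paper's route is a more hands-on change of variables that some readers may find more transparent geometrically. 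Either way, the substantive content is the same triangular weight $1-|v|$ on the difference.
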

\begin{proof}
Write
\begin{equation*}
\int_{k-1}^k \int_{0}^{1} f(|s-t|) \mathrm{d}s \mathrm{d}t = \iint \limits_{[k-1,k] \times[0,1]} f(|s-t|) \mathrm{d}s \mathrm{d}t
\end{equation*}
and introduce the linear (bijective) change of variables:
\begin{equation*}
\begin{bmatrix}
s \\[0.10cm] t
\end{bmatrix}
=
\frac{1}{2} \begin{bmatrix}
(u+v) \\[0.10cm] (-u+v)
\end{bmatrix}
= \underbrace{ \frac{1}{2}
\begin{bmatrix}
1 & 1 \\[0.10cm] -1 & 1
\end{bmatrix}}_{ \equiv A}
\begin{bmatrix}
u \\[0.10cm] v
\end{bmatrix} \equiv
\begin{bmatrix} \varphi_{1}(u,v) \\[0.10cm] \varphi_{2}(u,v)
\end{bmatrix} \equiv \varphi(u,v).
\end{equation*}
Applying this to the inequalities $k-1 \leq s \leq k$ and $0 \leq t \leq 1$, we find they are equivalent to
\begin{equation*}
v \leq 2k - u, \quad v \geq 2(k-1)-u, \quad v \geq u, \quad \text{and} \quad v \leq u+2.
\end{equation*}
Therefore, the set
\begin{equation*}
B = \{ (u,v) \in \mathbb{R}^{2} : v \leq 2k - u, \quad v \geq 2(k-1)-u, \quad v \geq u, \quad \text{and} \quad  v \leq u+2\}
\end{equation*}
is mapped by $\varphi$ to $[k-1,k] \times[0,1]$. Note also that $B = B_{1} \cup B_{2}$, where
\begin{align*}
B_{1} &\equiv \{ (u,v) \in \mathbb{R}^{2} : k-2 \leq u < k-1 \quad \text{and} \quad 2(k-1)-u \leq v \leq u+2 \} \\[0.10cm]
B_{2} &\equiv \{ (u,v) \in \mathbb{R}^{2} : k-1 \leq u \leq k \quad \text{and} \quad u \leq v \leq 2k-u \}
\end{align*}
are disjoint. Now, the Jacobian $(D \varphi)(u,v)$ of $\varphi$ equals $A$ for any $(u,v) \in \mathbb{R}^2$, whereby
\begin{equation*}
| \det(D \varphi)(u,v)| = | \det(A)| = \frac{1}{2},
\end{equation*}
and since $s-t = \varphi_{1}(u,v) - \varphi_{2}(u,v) = \frac{1}{2} (u+v) - \frac{1}{2} (-u+v) = u$, we get by multivariate integration by substitution:
\begin{equation*}
\begin{split}
\iint \limits_{[k-1,k] \times[0,1]} f(|s-t|) \mathrm{d}s \mathrm{d}t &= \iint \limits_{ \varphi(B)} f(|s-t|) \mathrm{d}s \mathrm{d}t \\[0.10cm]
&= \iint \limits_{B} f(| \varphi_{1}(u,v) - \varphi_{2}(u,v)|) | \det(D \varphi)(u,v)| \mathrm{d}u \mathrm{d}v \\[0.10cm]
&= \frac{1}{2} \bigg( \iint \limits_{B_{1}} f(|u|) \mathrm{d}u \mathrm{d}v + \iint \limits_{B_{2}} f(|u|) \mathrm{d}u \mathrm{d}v \bigg).
\end{split}
\end{equation*}
Firstly,
\begin{align*}
\begin{split}
\iint \limits_{B_{1}} f(|u|) \mathrm{d}u \mathrm{d}v = \int_{k-2}^{k-1} \bigg( \int_{2(k-1)-u}^{u+2} f(|u|) \mathrm{d}v \bigg) \mathrm{d}u &= 2 \int_{k-2}^{k-1} \big(u-(k-2) \big) f(|u|) \mathrm{d}u \\[0.10cm]
&= 2 \int_{0}^{1} (1-y) f(|k-1-y|) \mathrm{d} y,
\end{split}
\end{align*}
via the substitution $y = k-1-u$. Secondly,
\begin{align*}
\begin{split}
\iint \limits_{B_{2}} f(|u|) \mathrm{d}u \mathrm{d}v = \int_{k-1}^{k} \bigg( \int_{u}^{2k-u} f(|u|) \mathrm{d}v \bigg) \mathrm{d}u &= 2 \int_{k-1}^{k}(k-u) f(u) \mathrm{d}u \\[0.10cm]
&= 2 \int_{0}^{1} (1-y) f(k-1+y) \mathrm{d}y,
\end{split}
\end{align*}
by substituting $y = u-(k-1)$ and noting $u \geq k-1 \geq 0$. Thus, the asserted formula follows. \qed
\end{proof}

\subsection{Proof of Theorem \ref{theo:MomentsIVGeneral}}

To prove the first part of the theorem, we note that since the variance process $( \sigma_{t}^{2})_{t \geq 0}$ is stationary, Fubini's theorem yields that
\begin{equation*}
\mathbb{E} [IV_{t}] = \int_{t-1}^{t} \mathbb{E} \big[ \sigma_{s}^{2} \big] \text{d}s = \mathbb{E} \big[ \sigma_{0}^{2} \big] = \xi.
\end{equation*}
We proceed with the second-order moments of $IV_{t}$ by noting that
\begin{align*}
\mathbb{E}[ \sigma_{t}^{2} \sigma_{s}^{2}] &= \xi^{2} \mathbb{E} \big[ \exp \big( Y_{t} + Y_{s}- \kappa(0) \big)] \\[0.10cm]
&= \xi^{2} \exp \big( \kappa(|t-s|) \big),
\end{align*}
where the last equation follows from $Y_{t} + Y_{s} \sim N(0, 2 \kappa(|t-s|) + 2 \kappa(0))$. We deduce that
\begin{align*}
\mathbb{E} [IV_{1} IV_{1+ \ell}] &= \int_{ \ell}^{ \ell+1} \int_{0}^{1} \mathbb{E} \big[ \sigma_{s}^{2} \sigma_{t}^{2} \big] \text{d}s \text{d}t \\[0.10cm]
&= \xi^{2} \int_{ \ell}^{ \ell+1} \int_{0}^{1} \exp( \kappa(|t-s|)) \text{d}s \text{d}t \\[0.10cm]
&= \xi^{2} \int_{0}^{1} (1-y) \big[ \exp( \kappa(| \ell-y|)) + \exp( \kappa( \ell+y)) \big] \text{d}y,
\end{align*}
as a consequence of Lemma \ref{lemma:integration}.

Next, we deal with the third moment of integrated variance:
\begin{equation*}
\mathbb{E}[ \sigma_{t}^{2} \sigma_{s}^{2} \sigma_{u}^{2}] = \xi^{3} \exp \Big(Y_{t}+Y_{s}+Y_{u}- \frac{3}{2} \kappa(0) \Big),
\end{equation*}
where $Y_{t}+Y_{s}+Y_{u}$ is Gaussian with mean zero and
\begin{equation*}
\text{var}(Y_{t}+Y_{s}+Y_{u}) = 3 \kappa(0) + 2 \big( \kappa(|t-s|)+ \kappa(|t-u| )+ \kappa(|s-u|) \big).
\end{equation*}
Then, it follows that
\begin{equation*}
\mathbb{E}[\sigma_t^2 \sigma_s^2 \sigma_u^2]=\xi^3 \exp \left( \kappa(|t-s| )+\kappa(|t-u| )+\kappa(|s-u| )   \right),
\end{equation*}
from which we deduce
\begin{align*}
\mathbb{E}[IV_{t}^{3}] &= \xi^{3} \int_{0}^{1} \int_{0}^{1} \int_{0}^{1} \mathbb{E}[  \sigma_{t}^{2} \sigma_{s}^{2} \sigma_{u}^{2}] \mathrm{d}u \mathrm{d}s \mathrm{d}t \\[0.10cm]
&= \xi^{3} \int_{0}^{1} \int_{0}^{1} \int_{0}^{1} \exp \big( \kappa(|t-s|)+ \kappa(|t-u|)+ \kappa(|s-u|) \big) \mathrm{d}u \mathrm{d}s \mathrm{d}t.
\end{align*}
As above, we exploit the symmetric structure of the problem. We start by reducing the three-dimensional integral applying the following two-dimensional change of variables:
\begin{equation*}
x = u-t, \quad y=s-t,
\end{equation*}
such that
\begin{align*}
\mathbb{E}[IV_{t}^{3}] &= \xi^{3} \int_{0}^{1} \int_{-t}^{1-t} \int_{-t}^{1-t} f(x,y) \mathrm{d}x \mathrm{d}y \mathrm{d}t = \xi^{3} \int_0^{1} \int_{-1}^{1} \int_{-1}^{1} 1_{\{-t \leq x,y \leq 1-t \}} f(x,y) \mathrm{d}x \mathrm{d}y \mathrm{d}t \\[0.10cm]
&= \xi^{3} \int_{-1}^{1} \int_{-1}^{1} c(x,y) f(x,y) \mathrm{d}x \mathrm{d}y,
\end{align*}
where $f(x,y) = \exp \big( \kappa(|x-y|) + \kappa(|x|) + \kappa(|y|) \big)$ and $c(x,y) = \int_{0}^{1} 1_{\{ -t \leq x, y \leq 1-t \}} \mathrm{d}t$. Note that $c(x,y)=c(-x,-y)$, which is seen from substituting $s=1-t$. Now, when $(x, y) \in [0,1] \times [0,1]$:
\begin{equation*}
c(x,y) = 1- \max(x,y).
\end{equation*}
Moreover, if $(x, y) \in [0,1] \times [-1,0]$, we conclude that $c(x,y) = \max(1-x+y,0)$. Inserting these terms and noting that $f(x,y) = f(-x,-y) = f(y,x)$:
\begin{align*}
\mathbb{E}[IV_{t}^{3}] &= 2 \xi^{3} \int_{0}^{1} \int_{0}^{1} (1- \max(x,y)) f(x,y) \mathrm{d}y \mathrm{d}x + 2 \xi^{3} \int_{0}^{1} \int_{-1}^{0} \max(1-x+y,0) f(x,y) \mathrm{d}y \mathrm{d}x \\[0.10cm]
&=4 \xi^{3} \int_{0}^{1} \int_{0}^{x} (1-x) f(x,y) \mathrm{d}y \mathrm{d}x + 2 \xi^{3} \int_{0}^{1} \int_{x-1}^{0} (1-x+y) f(x,y) \mathrm{d}y \mathrm{d}x \\[0.10cm]
&=6 \xi^{3} \int_{0}^{1} \int_{0}^{x} (1-x) f(x,y) \mathrm{d}y \mathrm{d}x.
\end{align*}
The last equality holds, since both double integrals in the above expression agree in value. To show this, we substitute $z=x-y$ and reexpress $f(x,z) = f(x, x-z)$, which yields
\begin{align*}
\int_{x-1}^{0} (1-x+y) f(x,y) \mathrm{d}y = \int_{x}^{1} (1-z) f(x,z) \mathrm{d}z.
\end{align*}
Finally, by exploiting $f(x,y)=f(y, x)$ again, we further conclude
\begin{align*}
\int_{0}^{1} \int_{0}^{x} (1-x) f(x,y) \mathrm{d}y \mathrm{d}x = \int_{0}^{1} \int_{x}^{1} (1-z) f(x,z) \mathrm{d}z \mathrm{d}x.
\end{align*}
To calculate the fourth moment of $IV_{t}$, we proceed as above:
\begin{equation*}
\mathbb{E}[ \sigma_{t}^{2} \sigma_{s}^{2} \sigma_{u}^{2} \sigma_{v}^{2}] = \xi^{4} \exp \big( \kappa(|t-s|)+ \kappa(|t-u|)+ \kappa(|s-u|)+ \kappa(|t-v|)+ \kappa(|s-v|)+ \kappa(|u-v|) \big).
\end{equation*}
Subsequently, to compute the four-dimensional integral we apply the change of variables:
\begin{equation*}
x=u-t, \quad y=s-t, \quad z=v-t,
\end{equation*}
which, by recalling the definition of $g$, implies
\begin{align*}
\mathbb{E}[IV_t^4] &=\xi^4 \int_0^1 \int_{-t}^{1-t} \int_{-t}^{1-t} \int_{-t}^{1-t} g(x,y,z)   d x dy dz dt \\
&=\xi^4 \int_{-1}^{1} \int_{-1}^{1} \int_{-1}^{1} d(x,y, z) g(x,y, z) d x dy dz,
\end{align*}
where $g(x,y,z) = \exp \big( \kappa(|x-y|)+ \kappa(|x-z|)+ \kappa(|y-z|)+ \kappa(|x|)+ \kappa(|y|)+ \kappa(|z|) \big)$ and $d(x,y, z)= \int_{0}^{1} 1_{\{ -t \leq x,y,z \leq 1-t \}} \mathrm{d}t$.

To compute this term, we split the cube $[-1, 1] \times [-1, 1] \times [-1, 1]$ into eight quadrants determined by the signs of the variables. And since $g(x,y,z) = g(-x,-y,-z)$ and $d(x,y,z) = d(-x,-y,-z)$, we concentrate our efforts to four quadrants, which are handled case-by-case analogously to the derivation of the third moment.

1. $(x,y,z) \in [0,1] \times [0,1] \times [0,1]$:

Here $d(x,y,z) = 1- \max(x,y,z)$. Since $g$ and $d$ are invariant under permutation of its three variables, it suffices to compute the integral for $z \leq y \leq x$. As a result,
\begin{align*}
\int_{0}^{1} \int_{0}^{1} \int_{0}^{1} d(x,y,z) g(x,y,z) \mathrm{d}z \mathrm{d}y \mathrm{d}x = 6 \int_{0}^{1} \int_{0}^{x} \int_{0}^{y} (1-x) g(x,y,z) \mathrm{d}z \mathrm{d}y \mathrm{d}x.
\end{align*}

2. $(x,y,z) \in [0,1] \times [0,1] \times [-1,0]$:

Assume $y \leq x$. Then, $d(x,y,z) = \max(1-x+z,0)$ and
\begin{align*}
\int_{0}^{1} \int_{0}^{1} \int_{-1}^{0} d(x,y,z) g(x,y,z) \mathrm{d}z \mathrm{d}y \mathrm{d}x &= 2 \int_{0}^{1} \int_{0}^{x} \int_{-1}^{0} \max(1-x+z,0) g(x,y,z) \mathrm{d}z \mathrm{d}y \mathrm{d}x \\[0.10cm]
&= 2 \int_{0}^{1} \int_{0}^{x} \int_{x-1}^{0} (1-x+z) g(x,y,z) \mathrm{d}z \mathrm{d}y \mathrm{d}x.
\end{align*}

3. $(x,y,z) \in [0,1] \times [-1,0] \times [0,1]$:

Assume $z \leq x$. Then, $d(x,y,z) = \max(1-x+y,0)$ and
\begin{equation*}
\int_{0}^{1} \int_{0}^{1} \int_{-1}^{0} d(x,y,z) g(x,y,z) \mathrm{d}y \mathrm{d}z \mathrm{d}x = 2 \int_{0}^{1} \int_{0}^{x} \int_{-1}^{0} \max(1-x+y,0) g(x,y,z) \mathrm{d}y \mathrm{d}z \mathrm{d}x,
\end{equation*}
which reduces to the integral over the second region, because $g(x,y,z) = g(x,z,y)$.

4. $(x,y,z) \in [-1,0] \times [0,1] \times [0,1]$:

Assume $z \leq y$. Then, $d(x,y,z) = \max(1-y+x, 0)$ and
\begin{equation*}
\int_{0}^{1} \int_{0}^{1} \int_{-1}^{0} d(x,y,z) g(x,y,z) \mathrm{d}x \mathrm{d}z \mathrm{d}y = 2 \int_{0}^{1} \int_{0}^{y} \int_{-1}^{0} \max(1-y+x,0) g(x,y,z) \mathrm{d}x \mathrm{d}z \mathrm{d}y,
\end{equation*}
which is identical to the integral over the third region due to $g(x,y,z)=g(y,x,z)$.

Next, we show that the second integral (and by implication the third and fourth) is equal to the first one. To this end, in the inner two integrals over the second region, we substitute $u=x-y$, $v=x-z$ and employ the identity $g(x,y,z) = g(x,x-z,x-y)$. This leads to
\begin{equation*}
\int_{0}^{x} \int_{x-1}^{0} (1-x+z) g(x,y,z) \mathrm{d}z \mathrm{d}y = \int_{0}^{x} \int_{x}^{1} (1-v) g(x,u,v) \mathrm{d}v \mathrm{d}u,
\end{equation*}
and hence
\begin{equation*}
\int_{0}^{1} \int_{0}^{x} \int_{x}^{1} (1-v) g(x,u,v) \mathrm{d}v \mathrm{d}u \mathrm{d}x = \int_{0}^{1} \int_{0}^{v} \int_{0}^{x} (1-v) g(v,x,u) \mathrm{d}u \mathrm{d}x \mathrm{d}v,
\end{equation*}
which concludes the proof of the claimed equality of integrals. Summing up the terms, we get a factor $24 = 2 \times 12$ in front of the integral.

As for the second part of Theorem  \ref{theo:MomentsIVGeneral}, note that from condition (a) there exists $\ell_{0} > 0$ such that $| \kappa(u)| \leq 1$ for any $u \geq \ell_{0}-1$. Denoting $\gamma_{ \ell+1,1} = \mathbb{E}[IV_{t} IV_{t+ \ell}]- \xi^{2}$, we thus find that
\begin{equation*}
\gamma_{ \ell+1,1} = \xi^{2} \int_{0}^{1} (1-y) \big( \exp \big( \kappa(| \ell-y|) \big) - 1 + \exp \big( \kappa( \ell+y) \big)-1 \big) \text{d}y.
\end{equation*}
Introducing $r(x) \equiv \exp(x) - 1 - x$, $x \in \mathbb{R}$ allows to further write
\begin{equation*}
\frac{ \gamma_{ \ell+1,1}}{ \xi^{2} \kappa( \ell)} = \underbrace{ \int_{0}^{1} (1-y) \bigg( \frac{ \kappa( \ell-y)}{ \kappa( \ell)} + \frac{ \kappa( \ell+y)}{ \kappa( \ell)} \bigg) \text{d}y}_{\equiv I_{1}} + \underbrace{ \int_{0}^{1} (1-y) \bigg( \frac{r( \kappa( \ell-y))}{ \kappa( \ell)} + \frac{r( \kappa( \ell+y))}{ \kappa( \ell)} \bigg) \text{d}y}_{\equiv I_{2}},
\end{equation*}
for any $\ell \geq \ell_{0}$.

As $|r(x)| \leq 3x^{2}$, $x \in [0,1]$, it follows that
\begin{equation*}
|I_{2}| \leq 3 \sup_{y \in [-1,1]} \bigg| \frac{ \kappa( \ell+y)}{ \kappa( \ell)} \bigg| \int_{0}^{1} \underbrace{(1-y)(| \kappa(| \ell-y|)|+| \kappa( \ell+y)|)}_{ \equiv v_{ \ell}(y)} \text{d}y,
\end{equation*}
where for any $y \in [0,1] : 0 \leq v_{ \ell}(y) \leq 1$, while $v_{ \ell}(y) \rightarrow 0$, as $\ell \rightarrow \infty$ by \eqref{item:lem-limit}. Applying the dominated convergence theorem and condition (c) implies that:
\begin{equation*}
\limsup_{ \ell \rightarrow \infty}|I_{2}| \leq 3 \limsup_{ \ell \rightarrow \infty} \sup_{\bar{y} \in [-1,1]} \bigg| \frac{ \kappa( \ell+\bar{y})}{ \kappa( \ell)} \bigg| \lim_{ \ell \rightarrow \infty} \int_{0}^{1} v_{ \ell}(y) \text{d} y = 0.
\end{equation*}
Finally, for $y \in [0,1]$ the integrand $u_{ \ell}(y) \equiv (1-y) \big( \frac{ \kappa( \ell-y)}{ \kappa( \ell)} + \frac{ \kappa( \ell+y)}{ \kappa( \ell)} \big)$ in $I_{1}$ is bounded uniformly in $\ell$ by some constant from condition (c), while
\begin{equation*}
\lim_{ \ell \rightarrow \infty} u_{ \ell}(y) = (1-y)( \phi(-y) + \phi(y)), \quad y \in [0,1],
\end{equation*}
by condition (b). Thus, by dominated convergence
\begin{equation*}
\lim_{ \ell \rightarrow \infty} I_{1} = \int_{0}^{1} (1-y)( \phi(-y) + \phi(y)) \text{d}y = \int_{-1}^1 (1-|y|) \phi(y) \text{d}y,
\end{equation*}
which concludes the proof. \qed

\subsection{Proof of Theorem \ref{consistency:gmmcorrection}}

We apply Theorem 2.1 of \citet*{hansen:82a}, the sufficient conditions of which are implied by our Assumptions \ref{assum:gauss} -- \ref{assumption:identification}. It remains to verify $G_{c}( \theta)$ is continuous in $\theta$, which also renders the random function $\widehat{m}_T( \theta)$ continuous in $\theta$. Next, note that the moduli of continuity of $\widehat{m}_{T}( \theta)$ and $G_{c}( \theta)$ coincide, so $G_{c}( \theta)$ being continuous readily implies the so-called first moment continuity of $\widehat{m}_T( \theta)$, see Definition 2.1 in \citet*{hansen:82a}.

To establish continuity of $G_{c}( \theta)$ in $\theta=( \xi, \phi)$, note that $c( \theta)$ is continuous by Assumption \ref{assum:error}, whereby it suffices to prove the continuity of $G(\theta)$. The first component of $G(\theta)$ is $g^{(1)}_0(\theta) = \xi$, which is evidently continuous, while the remaining components are given in integral form in Theorem \ref{theo:MomentsIVGeneral}. Their continuity is then a consequence of the dominated convergence theorem, given condition (ii) of Assumption \ref{assum:gauss}. \qed

\subsection{Proof of Theorem \ref{theorem:GMM}}

We introduce the notation:
\begin{align*}
\widetilde{Q}_{n,T}( \theta) &= \widetilde{m}_{n,T}( \theta)^{ \intercal} \mathbb{W}_{T} \widetilde{m}_{n,T}( \theta), \\[0.10cm]
Q( \theta) &= m( \theta)^{ \intercal} \mathbb{W} m( \theta).
\end{align*}
where $m( \theta) = G( \theta_{0}) - G( \theta)$ and $\mathbb{W} = A^{ \intercal}A$. The claim then follows under the conditions of Theorem 2.1 of \citet*{newey-mcfadden:94a}:
\begin{itemize}
\item[(i)] $Q( \theta)$ is uniquely minimized at $\theta_{0}$,
\item[(ii)] $\Theta$ is compact,
\item[(iii)] $\theta \rightarrow Q( \theta)$ is continuous, and
\item[(iv)] $\sup_{ \theta \in \Theta} | \widetilde{Q}_{n,T}( \theta) - Q( \theta)| \cp 0$.
\end{itemize}
We note condition (i) is implied by Assumption \ref{assumption:identification}, since for $\theta \neq \theta_{0}$:
\begin{equation*}
Q( \theta) = (A m( \theta))' A m( \theta) > 0 = Q( \theta_0).
\end{equation*}
Condition (ii) is immediate. We already showed condition (iii) in the proof of Theorem \ref{consistency:gmmcorrection}. Now, we pass to the last condition (iv). In view of the Cauchy-Schwarz inequality,
\begin{align*}
\big| \widetilde{Q}_{n,T}( \theta)-Q( \theta) \big| &\leq \big|( \widetilde{m}_{n,T}( \theta)-m( \theta))^{ \intercal} \mathbb{W}_{T} ( \widetilde{m}_{n,T}( \theta)-m( \theta)) \big| + \big|m( \theta)^{ \intercal} (\mathbb{W}_{T}+ \mathbb{W}_{T}^{ \intercal})( \widetilde{m}_{n,T}( \theta)-m( \theta)) \big| \\[0.10cm]
&+ \big|m( \theta)^{ \intercal} ( \mathbb{W}_{T}- \mathbb{W}) m( \theta) \big| \\[0.10cm]
&\leq \| \widetilde{m}_{n,T}( \theta)-m( \theta) \|^{2} \, \| \mathbb{W}_{T} \| + 2 \, \|m( \theta) \| \, \| \mathbb{W}_{T} \| \, \| \widetilde{m}_{n,T}( \theta)-m( \theta) \| \\[0.10cm]
&+ \|m( \theta) \|^{2} \, \| \mathbb{W}_{T}- \mathbb{W} \|.
\end{align*}
Then, in view of Assumption \ref{assum:weight-mat-limit}, it suffices to prove that
\begin{equation*}
\sup_{ \theta \in \Theta} \| \widetilde{m}_{n,T}-m( \theta) \| \cp 0, \quad \text{as } T \rightarrow \infty \text{ and } n \rightarrow \infty.
\end{equation*}
Let ${m}_{T}( \theta) = T^{-1} \sum_{t=1}^{T} \big[ \mathbb{IV}_{t} - G( \theta) \big]$. Since the convergence $\sup_{ \theta \in \Theta} \|m_{T}( \theta)-m( \theta) \| \cp 0$ was already covered by the proof of Theorem \ref{consistency:gmmcorrection} (setting $\varepsilon_{t} = c( \theta) = 0$), it remains to show
\begin{equation*}
\sup_{\theta \in \Theta} \| \widetilde{m}_{n,T}( \theta)-{m}_{T}( \theta) \| \cp 0, \quad \text{as } T \rightarrow \infty \text{ and } n \rightarrow \infty.
\end{equation*}
To this end, we observe that
\begin{align*}
\| \widetilde{m}_{n,T}( \theta)-{m}_{T}( \theta) \| &\leq \frac{1}{T} \sum_{t=1}^{T} \| \mathbb{V}_{t}^{n}- \mathbb{IV}_{t} \| \\[0.10cm]
&\leq \frac{1}{T} \sum_{t=1}^{T} \bigg[ |V_{t}^{n}-IV_{t}| + \sum_{j=0}^{k} | V_{t}^{n} V_{t-j}^{n} - IV_{t} IV_{t-j}| \bigg] \\[0.10cm]
&\leq \frac{1}{T} \sum_{t=1}^{T} |V_{t}^{n}-IV_{t}| (1+|V_{t}^{n}|+IV_{t})\\[0.10cm]
&+ \frac{1}{T} \sum_{t=1}^{T} \sum_{j=1}^{k} |V_{t}^{n}-IV_{t}| |V_{t-j}^{n}| + IV_{t}  |V_{t-j}^{n}-IV_{t-j}|.
\end{align*}
From Assumption \ref{assum:doubleasymp} and the Cauchy-Schwarz inequality, we deduce that:
\begin{equation*}
\mathbb{E} \bigg[ \sup_{ \theta \in \Theta} \| \widetilde{m}_{n,T}( \theta)-{m}_{T}( \theta) \| \bigg] \rightarrow 0, \quad \text{as } T \rightarrow \infty \text{ and } n \rightarrow \infty,
\end{equation*}
which was to be shown. \qed

\subsection{Verifying Assumption \ref{assum:doubleasymp} for realized variance} \label{remark:uniform-bound}

Suppose that $\sup_{s \in \mathbb{R}} \mathbb{E}[ \mu_{s}^{4}] + \sup_{s \in \mathbb{R}} \mathbb{E}[ \sigma_{s}^{4}] < \infty$. Then, there exists a constant $C$ such that
\begin{equation*}
\sup_{t \in \mathbb{Z}} \mathbb{E} \big[(RV_{t}^{n}-IV_{t})^{2} \big] \leq \frac{C}{n}.
\end{equation*}
To see this, we apply It\^{o}'s Lemma to get
\begin{equation*}
\big(X_{t-1+ \frac{i}{n}}-X_{t-1+ \frac{i-1}{n}} \big)^{2} = 2 \int_{t-1+ \frac{i-1}{n}}^{t-1+ \frac{i}{n}} (X_{s}-X_{t-1+ \frac{i-1}{n}}) \text{d} X_{s} + \int_{t-1+ \frac{i-1}{n}}^{t-1+ \frac{i}{n}} \sigma_{s}^{2} \text{d}s.
\end{equation*}
Consequently,
\begin{align*}
RV_{t}^{n} - IV_{t} &= 2 \sum_{i=1} \int_{t-1+ \frac{i-1}{n}}^{t-1+ \frac{i}{n}} (X_{s}-X_{t-1+ \frac{i-1}{n}}) \text{d}X_{s} \\[0.10cm]
&= 2 \sum_{i=1}^{n} \int_{t-1+ \frac{i-1}{n}}^{t-1+ \frac{i}{n}}(X_{s}-X_{t-1+ \frac{i-1}{n}}) \mu_{s} \text{d}s + 2 \sum_{i=1}^{n} \int_{t-1+ \frac{i-1}{n}}^{t-1+ \frac{i}{n}}(X_{s}-X_{t-1+ \frac{i-1}{n}}) \sigma_{s} \text{d}W_{s}.
\end{align*}
In turn, this combined with Cauchy–Schwarz and Jensen's inequality leads to
\begin{align*}
\mathbb{E} \big[(RV_{t} - IV_{t})^{2}] &\leq 4 \sum_{i=1}^{n} \int_{ \frac{i-1}{n}}^{ \frac{i}{n}} \mathbb{E}[(X_{s}-X_{ \frac{i-1}{n}})^{2} \mu_{s}^{2}] \text{d}s + 4 \sum_{i=1}^{n} \int_{ \frac{i-1}{n}}^{ \frac{i}{n}} \mathbb{E}[(X_{s}-X_{ \frac{i-1}{n}})^{2} \sigma_{s}^{2}] \text{d}s \\[0.10cm]
& \leq \frac{C}{n},
\end{align*}
where in the last inequality we exploited $\sup_{s \geq 0} \mathbb{E}[ \mu_{s}^{4}] + \sup_{s \geq 0 } \mathbb{E}[ \sigma_{s}^{4}] < \infty$ along with Burkholder's inequality: $\displaystyle \sup_{s \in \big[ \frac{i-1}{n}, \frac{i}{n} \big]} \mathbb{E} \big[(X_{s}-X_{ \frac{i-1}{n}})^{4} \big] \leq \frac{C}{n^{2}}$. \qed

\subsection{Proof of Proposition \ref{prop:clt}} \label{app:prop-clt}

The proof relies on a martingale approximation central limit theorem for stationary and ergodic processes \citep*[Section 4.2]{merlevede-peligrad-utev:19a}, and it requires some preparation. First, we state and prove a couple of generic, elementary lemmas.

\begin{lemma} \label{lem:exp-monotonicity}
Suppose $X$ is a random variable such that $\mathbb{E}[X^{2}] < \infty$ and let $\mathcal{F}$ and $\mathcal{G}$ be $\sigma$-algebras such that $\mathcal{F} \subset \mathcal{G}$. Then,
\begin{equation*}
\| \mathbb{E}[X \mid \mathcal{F}] \|_{L^{2}( \mathbb{P})} \leq \| \mathbb{E}[X \mid \mathcal{G}] \|_{L^{2}( \mathbb{P})}.
\end{equation*}
\end{lemma}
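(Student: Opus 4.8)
The plan is to combine the tower property of conditional expectation with the $L^{2}$-contractivity of the conditional expectation operator. Since $\mathcal{F} \subset \mathcal{G}$, the tower property gives $\mathbb{E}[X \mid \mathcal{F}] = \mathbb{E}\big[ \mathbb{E}[X \mid \mathcal{G}] \mid \mathcal{F} \big]$ almost surely. Writing $Z := \mathbb{E}[X \mid \mathcal{G}]$, the hypothesis $\mathbb{E}[X^{2}] < \infty$ ensures $\mathbb{E}[Z^{2}] \leq \mathbb{E}[X^{2}] < \infty$, so all the quantities below are finite and the manipulations are legitimate.

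Next I would apply the conditional Jensen inequality to the convex function $\varphi(x) = x^{2}$ and the $\sigma$-algebra $\mathcal{F}$, obtaining $\big( \mathbb{E}[Z \mid \mathcal{F}] \big)^{2} \leq \mathbb{E}\big[ Z^{2} \mid \mathcal{F} \big]$ almost surely. Taking expectations on both sides and invoking the tower property once more yields $\mathbb{E}\big[ ( \mathbb{E}[X \mid \mathcal{F}] )^{2} \big] = \mathbb{E}\big[ ( \mathbb{E}[Z \mid \mathcal{F}] )^{2} \big] \leq \mathbb{E}[Z^{2}] = \mathbb{E}\big[ ( \mathbb{E}[X \mid \mathcal{G}] )^{2} \big]$, and taking square roots gives the asserted inequality. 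Equivalently, one can argue by Hilbert space geometry: $\mathbb{E}[\,\cdot \mid \mathcal{F}]$ and $\mathbb{E}[\,\cdot \mid \mathcal{G}]$ are the orthogonal projections of $L^{2}(\mathbb{P})$ onto the closed subspaces $L^{2}(\mathcal{F}) \subset L^{2}(\mathcal{G})$, hence they compose as $\mathbb{E}[\,\cdot \mid \mathcal{F}] = \mathbb{E}[\,\cdot \mid \mathcal{F}] \circ \mathbb{E}[\,\cdot \mid \mathcal{G}]$, and orthogonal projections are norm-non-increasing, so $\| \mathbb{E}[X \mid \mathcal{F}] \|_{L^{2}(\mathbb{P})} \leq \| \mathbb{E}[X \mid \mathcal{G}] \|_{L^{2}(\mathbb{P})}$.

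There is no substantive obstacle here; the statement is a standard property of conditional expectations, and the only point requiring (minor) care is the integrability bookkeeping, which is handled once and for all by the assumption $\mathbb{E}[X^{2}] < \infty$. This monotonicity lemma is presumably the tool used later to bound the mixingale coefficients against the memory conditions in Assumption \ref{assum:error-clt} when setting up the martingale-approximation CLT.
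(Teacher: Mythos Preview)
Your proof is correct and follows essentially the same route as the paper: tower property to write $\mathbb{E}[X \mid \mathcal{F}] = \mathbb{E}[\mathbb{E}[X \mid \mathcal{G}] \mid \mathcal{F}]$, conditional Jensen for $x \mapsto x^{2}$, then take expectations. The additional Hilbert-space projection remark is a nice alternative viewpoint not present in the paper, but the core argument matches.
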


\begin{proof}
Since $\mathcal{F} \subset \mathcal{G}$, we get by the tower property of conditional expectations,
\begin{equation*}
\| \mathbb{E}[X \mid \mathcal{F}] \|_{L^{2}(\mathbb{P})}^{2} = \mathbb{E} \big[ \mathbb{E}[X \mid \mathcal{F}]^{2} \big] = \mathbb{E} \big[ \mathbb{E}[ \mathbb{E}[X \mid \mathcal{G}] \mid \mathcal{F}]^{2} \big].
\end{equation*}
Applying Jensen's inequality for conditional expectations,
\begin{equation*}
\mathbb{E} \big[ \mathbb{E}[X \mid \mathcal{G}] \mid \mathcal{F} \big]^{2} \leq \mathbb{E} \big[ \mathbb{E}[X \mid \mathcal{G}]^{2} \mid \mathcal{F} \big].
\end{equation*}
Hence,
\begin{equation*}
\mathbb{E} \big[ \mathbb{E}[ \mathbb{E}[X \mid \mathcal{G}]| \mathcal{F}]^{2} \big] \leq \mathbb{E} \big[ \mathbb{E}[ \mathbb{E}[X \mid \mathcal{G}]^{2}| \mathcal{F}] \big] = \mathbb{E} \big[ \mathbb{E}[X \mid \mathcal{G}]^{2} \big] = \| \mathbb{E}[X \mid \mathcal{G}] \|_{L^{2}( \mathbb{P})}^{2}.
\end{equation*} \qed
\end{proof}

\begin{lemma} \label{lem:lognormal}
Suppose that $X \sim N( \mu, \lambda^{2})$ for some $\mu \in \mathbb{R}$ and $\lambda >0$. Then,
\begin{equation*}
\mathbb{E} \big[(e^{X}-1)^{2} \big] \leq \big(e^{ \mu + \lambda^{2}}+1 \big)^{2} \big(8| \mu|+6 \lambda^{2} \big).
\end{equation*}
\end{lemma}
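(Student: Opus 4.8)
The plan is to evaluate the expectation in closed form using the Gaussian moment generating function and then control it by an elementary algebraic decomposition. Since $X \sim N(\mu,\lambda^2)$ we have $\mathbb{E}[e^{tX}] = e^{t\mu + t^2\lambda^2/2}$, so expanding the square gives
\[
\mathbb{E}\big[(e^X-1)^2\big] = \mathbb{E}[e^{2X}] - 2\mathbb{E}[e^X] + 1 = e^{2\mu+2\lambda^2} - 2e^{\mu+\lambda^2/2} + 1 .
\]
The crucial observation is that this rearranges (verify by multiplying out) into a sum of two nonnegative pieces,
\[
\mathbb{E}\big[(e^X-1)^2\big] = \big(e^{\mu+\lambda^2}-1\big)^2 + 2e^{\mu+\lambda^2/2}\big(e^{\lambda^2/2}-1\big),
\]
and I would bound each piece by $(e^{\mu+\lambda^2}+1)^2$ times a quantity that is \emph{linear} in $|\mu|$ and $\lambda^2$.

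For the second piece, the elementary inequality $e^y - 1 \le y e^y$ for $y\ge 0$ applied with $y=\lambda^2/2$ yields $e^{\lambda^2/2}-1 \le \tfrac{\lambda^2}{2}e^{\lambda^2/2}$, hence $2e^{\mu+\lambda^2/2}(e^{\lambda^2/2}-1) \le \lambda^2 e^{\mu+\lambda^2} \le \lambda^2(e^{\mu+\lambda^2}+1)^2$, using $e^{\mu+\lambda^2} \le e^{\mu+\lambda^2}+1 \le (e^{\mu+\lambda^2}+1)^2$ (the prefactor is $\ge 1$). For the first piece I would use the identity $\frac{e^y-1}{e^y+1} = \tanh(y/2)$ with $y = \mu+\lambda^2$, so that $(e^{\mu+\lambda^2}-1)^2 = (e^{\mu+\lambda^2}+1)^2\,\tanh^2\!\big(\tfrac{\mu+\lambda^2}{2}\big)$, and it then suffices to show $\tanh^2\!\big(\tfrac{\mu+\lambda^2}{2}\big) \le 8|\mu| + 5\lambda^2$; adding the $\lambda^2$ from the second piece then gives the claimed constant $8|\mu|+6\lambda^2$.

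To prove $\tanh^2\!\big(\tfrac{\mu+\lambda^2}{2}\big) \le 8|\mu|+5\lambda^2$ I would split on whether $8|\mu|+5\lambda^2 \ge 1$: if so, the bound is immediate from $\tanh^2 \le 1$; otherwise $|\mu| < 1/8$ and $\lambda^2 < 1/5$, and then $\tanh^2\!\big(\tfrac{\mu+\lambda^2}{2}\big) \le \tfrac14(\mu+\lambda^2)^2 \le \tfrac12(\mu^2+\lambda^4) \le \tfrac12(|\mu|+\lambda^2) \le 8|\mu|+5\lambda^2$, using $|\tanh z|\le|z|$, $(a+b)^2\le 2a^2+2b^2$, and $\mu^2\le|\mu|$, $\lambda^4\le\lambda^2$ in this regime. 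Combining the two bounds finishes the proof.

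The only real subtlety — the part I expect to be the main obstacle — is obtaining dependence that is \emph{linear} rather than quadratic in $\mu$: the naive estimates (e.g. $|e^x-1|\le|x|e^{|x|}$ under the expectation, or a Taylor bound) produce a factor $\mu^2$, which fails to be absorbed by the $(e^{\mu+\lambda^2}+1)^2$ prefactor precisely in the large-$|\mu|$ regime. The case split on whether $8|\mu|+5\lambda^2\ge 1$ resolves this cleanly, exploiting that $\tanh^2$ is globally bounded by $1$ yet quadratically small near the origin.
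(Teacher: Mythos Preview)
Your proof is correct and shares the paper's overall strategy: compute $\mathbb{E}[(e^X-1)^2]=e^{2\mu+2\lambda^2}-2e^{\mu+\lambda^2/2}+1$ explicitly, then split into a small-parameter regime (where a Taylor-type bound gives the linear term $8|\mu|+6\lambda^2$) and a large-parameter regime (where $8|\mu|+6\lambda^2\ge 1$ and the trivial bound $(e^{\mu+\lambda^2}+1)^2$ suffices). The technical execution differs: the paper rearranges the expectation as $\big(e^{2(\mu+\lambda^2)}-1\big)+2\big(1-e^{\mu+\lambda^2/2}\big)$ and applies $|e^x-1|\le 2|x|$ for $|x|\le 1$ directly, splitting on $|\mu|+\lambda^2<\tfrac12$; you instead use the decomposition $(e^{\mu+\lambda^2}-1)^2+2e^{\mu+\lambda^2/2}(e^{\lambda^2/2}-1)$ and the identity $(e^y-1)/(e^y+1)=\tanh(y/2)$, splitting on $8|\mu|+5\lambda^2\ge 1$. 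The paper's route is a touch shorter since it avoids the $\tanh$ detour, while your decomposition has the minor advantage of isolating the $\lambda^2$ contribution cleanly from the second piece before handling the first. Either way the substance is the same.
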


\begin{proof}
Note that
\begin{equation*}
\mathbb{E} \big[(e^{X}-1)^{2} \big] = e^{2( \mu + \lambda^{2})} - 2e^{ \mu + \frac{1}{2} \lambda^{2}} + 1 \leq e^{2( \mu + \lambda^{2})} + 2e^{ \mu + \lambda^{2}} + 1 = (e^{ \mu + \lambda^{2}} +1)^{2},
\end{equation*}
while
\begin{equation*}
e^{2( \mu + \lambda^{2})} - 2e^{ \mu + \frac{1}{2} \lambda^{2}} + 1 = e^{2( \mu + \lambda^{2})}-1 + 2(1-e^{ \mu + \frac{1}{2} \lambda^{2}}) \leq 8| \mu|+ 6 \lambda^{2} \leq \underbrace{(e^{ \mu+ \lambda^{2}} +1)^{2}}_{ \geq 1} (8| \mu|+ 6 \lambda^{2}),
\end{equation*}
for $| \mu|+ \lambda^{2} < \frac{1}{2}$ due to the elementary inequality $|e^{x} -1| \leq 2|x|$, for $|x| \leq 1$. However, if $| \mu|+ \lambda^{2} \geq \frac{1}{2}$, then $8| \mu|+ 6 \lambda^{2} \geq 1$, so the inequality holds also unconditionally. \qed
\end{proof}

Secondly, for convenience we formulate here a multivariate version of the martingale approximation central limit theorem, as a corollary of the results from Section 4.2 in \citet*{merlevede-peligrad-utev:19a}. However, we state the result using the concept of a mixingale.

\begin{definition}
A square-integrable (possibly multivariate) stochastic process $(\zeta_{t})_{t \in \mathbb{Z}}$, which is adapted to a filtration $( \mathcal{G}_{t})_{t \in \mathbb{Z}}$, is called an \emph{$L^{2}$-mixingale} of size $- \varphi_{0} \leq 0$ with respect to $( \mathcal{G}_{t})_{t \in \mathbb{Z}}$ if there exists $\varphi > \varphi_{0}$ such that for any $t \in \mathbb{Z}$,
\begin{equation} \label{eq:mixingale-def}
\| \mathbb{E}[ \zeta_{t+r} \mid \mathcal{G}_{t}] - \mathbb{E}[ \zeta_{t}] \|_{L^{2}( \mathbb{P})} = O(r^{- \varphi}), \quad \text{as } r \rightarrow \infty.
\end{equation}
Notably, if $( \zeta_{t})_{t \in \mathbb{Z}}$ is stationary and $( \mathcal{G}_{t})_{t \in \mathbb{Z}}$ is the natural filtration, i.e. $\mathcal{G}_{t} = \mathcal{F}^{ \zeta}_{t} \equiv \sigma \{ \zeta_{t}, \zeta_{t-1}, \ldots \}$ for any $t \in \mathbb{Z}$, then condition \eqref{eq:mixingale-def} reduces to
\begin{equation} \label{eq:mixingale-def-2}
\| \mathbb{E}[ \zeta_{r} \mid \mathcal{G}_{0}] - \mathbb{E}[ \zeta_{0}] \|_{L^{2}( \mathbb{P})} = O(r^{- \varphi}), \quad \text{as } r \rightarrow \infty.
\end{equation}
\end{definition}
The next lemma states the relevant mixingale CLT.

\begin{lemma} \label{lemma:mixingale-clt}
Suppose $(\zeta_{t})_{t \in \mathbb{Z}}$ is a multivariate, stationary and ergodic process, which is additionally an $L^{2}$-mixingale of size $- \frac{1}{2}$ in reference to $( \mathcal{F}^{ \zeta}_{t})_{t \in \mathbb{Z}}$. Then, as $T \rightarrow \infty$,
\begin{equation} \label{equation:mixingale-CLT}
\frac{1}{ \sqrt{T}} \sum_{t=1}^{T} \big( \zeta_{t} - \mathbb{E}[ \zeta_{0}] \big) \overset{d}{ \longrightarrow} N(0, \Sigma_{ \zeta}),
\end{equation}
where $\Sigma_{ \zeta} \equiv \sum_{ \ell = - \infty}^{ \infty} \mathbb{E} \big[( \zeta_{1} - \mathbb{E}[ \zeta_{0}])( \zeta_{1+ \ell} - \mathbb{E}[ \zeta_{0}])^{ \intercal} \big]$.
\end{lemma}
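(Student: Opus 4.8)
The plan is to reduce the multivariate statement to a one-dimensional CLT via the Cram\'er--Wold device and then invoke the martingale-approximation central limit theorem of \citet*[Section 4.2]{merlevede-peligrad-utev:19a} (their Theorem 4.10). Fix $a \in \mathbb{R}^{d}$ and put $\eta_{t} = a^{\intercal}\big(\zeta_{t} - \mathbb{E}[\zeta_{0}]\big)$, $t \in \mathbb{Z}$. Being a fixed measurable function of $\zeta_{t}$, the process $(\eta_{t})_{t \in \mathbb{Z}}$ inherits stationarity and ergodicity from $(\zeta_{t})_{t \in \mathbb{Z}}$, is mean zero, and is square integrable. It then suffices to show $T^{-1/2}\sum_{t=1}^{T}\eta_{t} \overset{d}{\longrightarrow} N\big(0, a^{\intercal}\Sigma_{\zeta}a\big)$ for every such $a$, since the right-hand side is the one-dimensional projection of $N(0,\Sigma_{\zeta})$.

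First I would check that $(\eta_{t})_{t \in \mathbb{Z}}$ is again an $L^{2}$-mixingale of size $-\tfrac12$ relative to its own natural filtration $(\mathcal{F}^{\eta}_{t})_{t \in \mathbb{Z}}$. By Cauchy--Schwarz, $\big\|\mathbb{E}[\eta_{r}\mid\mathcal{F}^{\zeta}_{0}]\big\|_{L^{2}(\mathbb{P})} \le \|a\|\,\big\|\mathbb{E}[\zeta_{r}-\mathbb{E}[\zeta_{0}]\mid\mathcal{F}^{\zeta}_{0}]\big\|_{L^{2}(\mathbb{P})} = O(r^{-\varphi})$ for the exponent $\varphi > \tfrac12$ furnished by the mixingale hypothesis on $(\zeta_{t})_{t \in \mathbb{Z}}$. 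Since $\mathcal{F}^{\eta}_{0} \subseteq \mathcal{F}^{\zeta}_{0}$, Lemma \ref{lem:exp-monotonicity} yields $\big\|\mathbb{E}[\eta_{r}\mid\mathcal{F}^{\eta}_{0}]\big\|_{L^{2}(\mathbb{P})} \le \big\|\mathbb{E}[\eta_{r}\mid\mathcal{F}^{\zeta}_{0}]\big\|_{L^{2}(\mathbb{P})} = O(r^{-\varphi})$, which is exactly condition \eqref{eq:mixingale-def-2} (this is precisely why Lemma \ref{lem:exp-monotonicity} is recorded above).

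Next I would translate the size-$(-\tfrac12)$ property into the projective (Maxwell--Woodroofe-type) condition under which Theorem 4.10 of \citet*{merlevede-peligrad-utev:19a} applies. Writing $S_{T}^{\eta} = \sum_{t=1}^{T}\eta_{t}$ and using stationarity together with the contraction property of conditional expectations, $\big\|\mathbb{E}[S_{T}^{\eta}\mid\mathcal{F}^{\eta}_{0}]\big\|_{L^{2}(\mathbb{P})} \le \sum_{r=1}^{T}\big\|\mathbb{E}[\eta_{r}\mid\mathcal{F}^{\eta}_{0}]\big\|_{L^{2}(\mathbb{P})}$, which is $O(1)$ when $\varphi > 1$, $O(\log T)$ when $\varphi = 1$, and $O(T^{1-\varphi})$ when $\tfrac12 < \varphi < 1$; in each case $\sum_{T \ge 1}T^{-3/2}\big\|\mathbb{E}[S_{T}^{\eta}\mid\mathcal{F}^{\eta}_{0}]\big\|_{L^{2}(\mathbb{P})} < \infty$ exactly because $\varphi > \tfrac12$. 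Hence \citet*[Theorem 4.10]{merlevede-peligrad-utev:19a} delivers a martingale approximation $S_{T}^{\eta} = M_{T} + R_{T}$, with $M$ the partial-sum process of a square-integrable stationary ergodic martingale difference sequence adapted to $(\mathcal{F}^{\eta}_{t})$ and $\|R_{T}\|_{L^{2}(\mathbb{P})} = o(\sqrt{T})$, so that $T^{-1/2}S_{T}^{\eta} \overset{d}{\longrightarrow} N(0,\sigma_{a}^{2})$ where $\sigma_{a}^{2} = \lim_{T\to\infty}T^{-1}\mathrm{Var}(S_{T}^{\eta})$ exists and is finite.

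Finally I would identify $\sigma_{a}^{2}$ with $a^{\intercal}\Sigma_{\zeta}a$. Expanding, $T^{-1}\mathrm{Var}(S_{T}^{\eta}) = \sum_{|\ell|<T}\big(1-|\ell|/T\big)\,a^{\intercal}\Gamma_{\zeta}(\ell)a$ with $\Gamma_{\zeta}(\ell) = \mathbb{E}\big[(\zeta_{1}-\mathbb{E}[\zeta_{0}])(\zeta_{1+\ell}-\mathbb{E}[\zeta_{0}])^{\intercal}\big]$, and, since $\eta_{0}$ is $\mathcal{F}^{\eta}_{0}$-measurable, $|a^{\intercal}\Gamma_{\zeta}(\ell)a| = \big|\mathbb{E}\big[\eta_{0}\,\mathbb{E}[\eta_{|\ell|}\mid\mathcal{F}^{\eta}_{0}]\big]\big| \le \|\eta_{0}\|_{L^{2}(\mathbb{P})}\,O(|\ell|^{-\varphi})$; in the settings to which Proposition \ref{prop:clt} is applied, Assumptions \ref{assumption:memory} and \ref{assum:error-clt} make the covariance series decay faster than $|\ell|^{-1}$ and hence absolutely summable, so dominated convergence gives $T^{-1}\mathrm{Var}(S_{T}^{\eta}) \to \sum_{\ell=-\infty}^{\infty}a^{\intercal}\Gamma_{\zeta}(\ell)a = a^{\intercal}\Sigma_{\zeta}a$. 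Cram\'er--Wold then assembles the scalar limits into \eqref{equation:mixingale-CLT}. I expect the only delicate point to be lining up the hypotheses: passing from the mixingale-size bookkeeping to the precise projective condition of \citet*{merlevede-peligrad-utev:19a} and confirming that their limiting variance coincides with the covariance series $\Sigma_{\zeta}$; the genuinely hard analytic ingredient — the martingale approximation under the Maxwell--Woodroofe condition — is imported wholesale from that reference.
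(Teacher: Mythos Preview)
Your proof is correct and follows essentially the same route as the paper: Cram\'er--Wold reduction to a scalar process $\eta_{t}=a^{\intercal}(\zeta_{t}-\mathbb{E}[\zeta_{0}])$, passage to the smaller filtration $\mathcal{F}^{\eta}_{0}\subset\mathcal{F}^{\zeta}_{0}$ via Lemma~\ref{lem:exp-monotonicity}, and an appeal to Theorem~4.10 of \citet*{merlevede-peligrad-utev:19a}. The only cosmetic difference is that the paper invokes the summability condition $\sum_{r\ge1}r^{-1/2}\|\mathbb{E}[\eta_{r}\mid\mathcal{F}^{\eta}_{0}]\|_{L^{2}}<\infty$ directly (via their Remark~4.13), whereas you take the extra step of bounding $\|\mathbb{E}[S_{T}^{\eta}\mid\mathcal{F}^{\eta}_{0}]\|_{L^{2}}$ and verifying the Maxwell--Woodroofe series $\sum_{T}T^{-3/2}\|\mathbb{E}[S_{T}^{\eta}\mid\mathcal{F}^{\eta}_{0}]\|_{L^{2}}<\infty$ explicitly; these are equivalent and your case split on $\varphi$ is correct. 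Your closing caveat on the variance identification is honest: a size-$(-\tfrac12)$ mixingale gives only $|a^{\intercal}\Gamma_{\zeta}(\ell)a|=O(|\ell|^{-\varphi})$ with $\varphi>\tfrac12$, which need not be absolutely summable, so in the abstract lemma $\Sigma_{\zeta}$ should be read as the Ces\`aro limit $\lim_{T}T^{-1}\mathrm{Var}(S_{T})$; the paper glosses over this too, and in the actual application (Proposition~\ref{prop:clt}) Assumption~\ref{assumption:memory} supplies $\gamma>1$ so the series converges absolutely anyway.
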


\begin{proof}
By the Cram\'{e}r--Wold device, we can reduce the multivariate convergence \eqref{equation:mixingale-CLT} to a univariate one by linear projection. To this end, assume that $(\zeta_{t})_{t \in \mathbb{Z}}$ is $d$-dimensional for some $d \in \mathbb{N}$ and let $a \in \mathbb{R}^d$ be arbitrary. Defining $\xi_{t} \equiv a^{ \intercal}( \zeta_{t} - \mathbb{E}[\zeta_{0}])$, $t \in \mathbb{Z}$, we deduce that $\mathcal{F}_{t}^{ \xi} = \sigma \{ \xi_{t}, \xi_{t-1}, \ldots \} \subset \mathcal{F}_{t}^{ \zeta}$ for any $t \in \mathbb{Z}$, whereby it follows that
\begin{equation*}
\big\| \mathbb{E}[ \xi_{r} \mid \mathcal{F}_{0}^{ \xi}] \big\|_{L^{2}( \mathbb{P})} \leq \|a\|_{ \mathbb{R}^{d}} \big\| \mathbb{E}[ \zeta_{r} \mid \mathcal{F}_{0}^{ \zeta} ] - \mathbb{E}[ \zeta_{0}] \big\|_{L^{2}( \mathbb{P})} = O(r^{- \varphi}),
\end{equation*}
as $r \rightarrow \infty$ for some $\varphi > 1/2$. Thus,
\begin{equation*}
\sum_{r=1}^{ \infty} r^{-1/2} \big\| \mathbb{E}[ \xi_{r} \mid \mathcal{F}_{0}^{ \xi} ] \big\|_{L^{2}( \mathbb{P})} < \infty,
\end{equation*}
so that, by Theorem 4.10 and Remark 4.13 of \citet{merlevede-peligrad-utev:19a},
\begin{equation*}
a^{ \intercal} \frac{1}{ \sqrt{T}} \sum_{t=1}^{T}( \zeta_{t} - \mathbb{E}[ \zeta_{0}]) = \frac{1}{ \sqrt{T}} \sum_{t=1}^{T} \xi_{t} \overset{d}{ \longrightarrow} N \bigg(0, \sum_{ \ell = - \infty}^{ \infty} \mathbb{E}[ \xi_{1} \xi_{1+ \ell}] \bigg)
\end{equation*}
as $T \rightarrow \infty$. It remains to note that
\begin{equation*}
\sum_{ \ell = - \infty}^{ \infty} \mathbb{E}[ \xi_{1} \xi_{1+ \ell}] = \sum_{ \ell = -\infty}^{ \infty} \mathbb{E}[a^{ \intercal}( \zeta_{1} - \mathbb{E}[ \zeta_{0}])( \zeta_{1+ \ell} - \mathbb{E}[ \zeta_{0}])^{ \intercal}a] = a^{ \intercal} \Sigma_{ \zeta}a,
\end{equation*}
and the claim follows. \qed
\end{proof}

To prove Proposition \ref{prop:clt}, building on Lemma \ref{lemma:mixingale-clt}, we still need the following technical lemma that estimates the memory of integrated variance.

\begin{lemma}\label{lem:cond-exp-decay}
Suppose that Assumption \ref{assum:gauss} and \ref{assumption:memory} hold. Moreover, suppose that $( \mathcal{F}_{t})_{t \in \mathbb{R}}$ is a filtration such that $B$ is adapted and has independent increments with respect to it (cf. condition (iii) in Assumption \ref{assum:error-clt}). Then, for any $p > 0$, $0 \leq s \leq t$ and $0 \leq s' \leq t'$,
\begin{itemize}
\item[(i)] $\Big\| \mathbb{E}_{ \theta_{0}} \Big[ \int_{s+r}^{t+r} \sigma_{u}^{p} \mathrm{d}u \mid \mathcal{F}_{0} \Big] - \mathbb{E}_{ \theta_{0}} \Big[ \int_{s}^{t} \sigma_{u}^{p} \mathrm{d}u \Big] \Big\|_{L^{2}( \mathbb{P}_{ \theta_{0}})} = O(r^{- \gamma+1/2})$,
\item[(ii)] $ \Big\| \mathbb{E}_{ \theta_{0}} \Big[ \int_{s+r}^{t+r} \sigma_{u}^{p} \mathrm{d}u \int_{s'+r}^{t'+r} \sigma_{u'}^{p} \mathrm{d}u' \mid \mathcal{F}_{0} \Big]- \mathbb{E}_{ \theta_{0}} \Big[ \int_{s}^{t} \sigma_{u}^{p} \mathrm{d}u \int_{s'}^{t'} \sigma_{u'}^{p} \text{d}u' \Big] \Big\|_{L^{2}( \mathbb{P}_{ \theta_{0}})}  = O(r^{- \gamma+1/2})$,
\end{itemize}
as $r \rightarrow \infty$.
\end{lemma}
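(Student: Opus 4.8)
The plan is to exploit the causal moving-average representation \eqref{equation:moving-average} to split $Y_u$ into an $\mathcal{F}_0$-measurable ``remote past'' component, which is small when $u$ is large, plus a component independent of $\mathcal{F}_0$, and then to quantify the resulting error with Lemma \ref{lem:lognormal}. Since $\sigma_u^p = \xi^{p/2}\exp\!\big(\tfrac{p}{2}Y_u - \tfrac{p}{4}\kappa(0)\big)$ and, for $u \geq 0$, $Y_u = R_u + M_u$ with $R_u = \int_{-\infty}^0 K(u-v)\,\mathrm{d}B_v$ and $M_u = \int_0^u K(u-v)\,\mathrm{d}B_v$, the first thing I would check is that $R_u$ is $\mathcal{F}_0$-measurable (as $B$ is $(\mathcal{F}_t)$-adapted) while $M_u$ is independent of $\mathcal{F}_0$ (as $B$ has independent increments with respect to $(\mathcal{F}_t)$). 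Here $R_u \sim N(0,v_u)$ and $M_u \sim N(0,\kappa(0)-v_u)$, with $v_u := \int_u^\infty K(w)^2\,\mathrm{d}w$.

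Next I would compute the conditional expectation explicitly: conditioning on $\mathcal{F}_0$ freezes $R_u$ and leaves $M_u$ Gaussian and independent, so a Gaussian moment-generating-function computation gives $\mathbb{E}_{\theta_0}[\sigma_u^p \mid \mathcal{F}_0] = \mathbb{E}_{\theta_0}[\sigma_u^p]\exp\!\big(\tfrac{p}{2}R_u - \tfrac{p^2}{8}v_u\big)$, and therefore
\[
\mathbb{E}_{\theta_0}[\sigma_u^p \mid \mathcal{F}_0] - \mathbb{E}_{\theta_0}[\sigma_u^p] = \mathbb{E}_{\theta_0}[\sigma_u^p]\big(e^{X_u}-1\big), \qquad X_u := \tfrac{p}{2}R_u - \tfrac{p^2}{8}v_u \sim N\!\big(-\tfrac{p^2}{8}v_u,\ \tfrac{p^2}{4}v_u\big).
\]
Applying Lemma \ref{lem:lognormal} to $X_u$---here $|\mu|+\lambda^2 = O(v_u)\to 0$, so the prefactor $(e^{\mu+\lambda^2}+1)^2$ stays bounded---yields $\big\|\mathbb{E}_{\theta_0}[\sigma_u^p \mid \mathcal{F}_0] - \mathbb{E}_{\theta_0}[\sigma_u^p]\big\|_{L^2(\mathbb{P}_{\theta_0})} = O(v_u^{1/2})$, uniformly in $u$. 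Assumption \ref{assumption:memory} then gives $v_u \leq C u^{1-2\gamma}$ for all large $u$, so this norm is $O(u^{1/2-\gamma})$, uniformly for $u$ over any fixed-length interval $[s+r,t+r]$.

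To get (i), I would move the expectation inside the integral by conditional Fubini, observe that $\mathbb{E}_{\theta_0}\big[\int_{s+r}^{t+r}\sigma_u^p\,\mathrm{d}u\big] = \mathbb{E}_{\theta_0}\big[\int_s^t \sigma_u^p\,\mathrm{d}u\big]$ by stationarity, and bound the $L^2$-norm of $\int_{s+r}^{t+r}\big(\mathbb{E}_{\theta_0}[\sigma_u^p \mid \mathcal{F}_0] - \mathbb{E}_{\theta_0}[\sigma_u^p]\big)\,\mathrm{d}u$ by Minkowski's integral inequality, which is at most $(t-s)\sup_{u\in[s+r,t+r]}\big\|\,\cdot\,\big\|_{L^2(\mathbb{P}_{\theta_0})} = O(r^{-\gamma+1/2})$ (using $u \geq r$ and $\tfrac12-\gamma < 0$). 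For (ii) the same argument is run on $\sigma_u^p\sigma_{u'}^p = \xi^p\exp\!\big(\tfrac{p}{2}(Y_u+Y_{u'}) - \tfrac{p}{2}\kappa(0)\big)$: the $\mathcal{F}_0$-measurable part of $Y_u+Y_{u'}$ is centred Gaussian with variance at most $2(v_u+v_{u'}) = O(r^{1-2\gamma})$, the unconditional moment $\mathbb{E}_{\theta_0}[\sigma_u^p\sigma_{u'}^p]$ is bounded uniformly because $\kappa(|u-u'|)\leq\kappa(0)$, so Lemma \ref{lem:lognormal} again yields an $O(r^{-\gamma+1/2})$ pointwise bound, and a two-fold Minkowski inequality over $[s+r,t+r]\times[s'+r,t'+r]$ completes the estimate.

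The step I expect to demand the most care is the splitting $Y_u = R_u + M_u$ together with the measurability and independence claims attached to it, since this is exactly where the hypothesis on $(\mathcal{F}_t)_{t\in\mathbb{R}}$ enters; beyond that, the routine but essential bookkeeping is to keep every $O(\cdot)$ constant uniform over the bounded ranges of $u$ (and $u'$), so that the pointwise estimate survives integration.
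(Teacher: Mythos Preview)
Your proposal is correct and follows essentially the same route as the paper's proof: split the moving-average integral for $Y_u$ (or $Y_u+Y_{u'}$) at time $0$ into an $\mathcal{F}_0$-measurable piece and an independent piece, compute the conditional expectation via the Gaussian MGF to express the difference as $\mathbb{E}_{\theta_0}[\sigma_u^p](e^{X_u}-1)$ with $X_u$ Gaussian of variance $O(v_u)$, apply Lemma~\ref{lem:lognormal}, and then integrate the pointwise bound. The only cosmetic differences are that the paper writes out case~(ii) and leaves~(i) as analogous (you do the reverse), and it passes from the integral to the pointwise estimate via Tonelli plus Jensen rather than Minkowski's integral inequality; the resulting bounds are the same.
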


\begin{proof}
We only prove (ii) as the proof of (i) is analogous. In explicit terms, Assumption \ref{assumption:memory} says that there exist constants $u_{0} \geq 0$ and $c > 0$ such that
\begin{equation} \label{eq:kernel-bound}
|K(u)| \leq c u^{- \gamma}, \quad u \geq u_{0}.
\end{equation}
Without loss of generality, assume $r \geq u_{0}$ from now on. By Tonelli's theorem,
\begin{align} \label{eq:first-tonelli}
\begin{split}
\mathbb{E}_{ \theta_{0}} \bigg[ \int_{s+r}^{t+r} \sigma_{u}^{p} \mathrm{d}u \int_{s'+r}^{t'+r} \sigma_{u'}^{p} \mathrm{d}u' \mid \mathcal{F}_{0} \bigg] &- \mathbb{E}_{ \theta_{0}} \bigg[ \int_{s}^{t} \sigma_{u}^{p} \text{d}u \int_{s'}^{t'} \sigma_{u'}^{p} \mathrm{d}u' \bigg] \\[0.10cm]
&= \int_{s}^{t} \int_{s'}^{t'} \big( \mathbb{E}_{ \theta_{0}}[ \sigma_{u+r}^{p} \sigma_{u'+r}^{p} \mid \mathcal{F}_{0}] - \mathbb{E}_{ \theta_{0}}[ \sigma_{u}^{p} \sigma_{u'}^{p}] \big) \mathrm{d}u \mathrm{d}u',
\end{split}
\end{align}
where
\begin{equation*}
\sigma_{v}^{p} \sigma_{v'}^{p} = \xi^{p} e^{- \frac{p}{2} \kappa(0)} \exp \bigg( \int_{- \infty}^{ \infty} K^{+}(v,v', \tau) \mathrm{d}B_{ \tau} \bigg)
\end{equation*}
with $\displaystyle K^{+}(v,v', \tau) = \frac{p}{2} \big[K(v- \tau) + K(v'- \tau) \big]$ for any $v,v' \geq 0$ (recall we set $K(v)=0$ for any $v \leq 0$). Subsequently,
\begin{equation*}
\mathbb{E}_{ \theta_{0}}[ \sigma_{u}^{p} \sigma_{u'}^{p}] = \xi^{p} e^{- \frac{p}{2} \kappa(0)} \exp \bigg( \frac{1}{2} \int_{- \infty}^{ \infty} K^{+}(u,u', \tau)^{2} \mathrm{d} \tau \bigg),
\end{equation*}
while, by the assumed properties of the Brownian motion $B$,
\begin{align*}
\mathbb{E}_{ \theta_{0}}&[ \sigma_{u+r}^{p} \sigma_{u'+r}^{p} \mid \mathcal{F}_{0}] \\[0.10cm]
&= \xi^{p} e^{- \frac{p}{2} \kappa(0)} \exp \bigg( \int_{- \infty}^{0} K^{+}(u+r,u'+r, \tau) \mathrm{d}B_{ \tau} \bigg) \mathbb{E} \bigg[ \exp \bigg( \int_{0}^{ \infty} K^{+}(u+r,u'+r, \tau) \mathrm{d} B_{ \tau} \bigg) \bigg] \\[0.10cm]
&= \xi^{p} e^{- \frac{p}{2} \kappa(0)} \exp \bigg( \int_{- \infty}^{0} K^{+}(u,u', \tau-r) \mathrm{d}B_{ \tau}  + \frac{1}{2} \int_{0}^{ \infty} K^{+}(u,u', \tau-r)^{2} \mathrm{d} \tau \bigg),
\end{align*}
using the property $K^{+}(u+r,u'+r, \tau) = K^{+}(u,u', \tau-r)$.

Therefore,
\begin{equation} \label{eq:corr-diff-1}
\mathbb{E}_{ \theta_{0}}[ \sigma_{u+r}^{p} \sigma_{u'+r}^{p} \mid \mathcal{F}_{0}] - \mathbb{E}_{ \theta_{0}}[ \sigma_{u}^{p} \sigma_{u'}^{p}]
= \xi^{2} e^{- \frac{p}{2} \kappa(0)} \exp \bigg( \frac{1}{2} \int_{- \infty}^{ \infty} K^{+}(u,u', \tau)^{2} \mathrm{d} \tau \bigg) \big( \exp(Y_{r}^{u,u'}) - 1 \big),
\end{equation}
with
\begin{equation} \label{eq:corr-diff-2}
Y_{r}^{u,u'} = \int_{- \infty}^{0} K^{+}(u,u', \tau-r) \mathrm{d}B_{ \tau} - \frac{1}{2} \overline{K}^{u,u'}(r) \sim N \bigg(- \frac{1}{2} \overline{K}^{u,u'}(r),\, \overline{K}^{u,u'}(r) \bigg)
\end{equation}
and $\overline{K}^{u,u'}(r) = \int_{- \infty}^{-r}  K^{+}(u,u', \tau)^{2} \mathrm{d} \tau = \int_{- \infty}^{0} K^{+}(u,u', \tau-r)^{2} \mathrm{d} \tau$. Applying Tonelli's theorem and Jensen's inequality to \eqref{eq:first-tonelli}, we conclude that:
\begin{align*}
\mathbb{E}_{ \theta_{0}} \Bigg[ \bigg( \mathbb{E}_{ \theta_{0}} \bigg[ \int_{s+r}^{t+r} \sigma_{u}^{p} \mathrm{d}u & \int_{s'+r}^{t'+r} \sigma_{u'}^{p} \mathrm{d}u' \mid \mathcal{F}_{0} \bigg] - \mathbb{E}_{ \theta_{0}} \bigg[ \int_{s}^{t} \sigma_{u}^{p} \mathrm{d}u \int_{s'}^{t'} \sigma_{u'}^{p} \mathrm{d}u' \bigg] \bigg)^{2} \Bigg] \\[0.10cm]
&\leq (t-s)(t'-s') \int_{s}^{t} \int_{s'}^{t'} \mathbb{E}_{ \theta_{0}} \big[( \mathbb{E}_{ \theta_{0}}[ \sigma_{u+r}^{p} \sigma_{u'+r}^{p} \mid \mathcal{F}_{0}] - \mathbb{E}_{ \theta_{0}}[ \sigma_{u}^{p} \sigma_{u'}^{p}])^{2} \big] \mathrm{d}u \mathrm{d}u',
\end{align*}
where, by \eqref{eq:corr-diff-1} -- \eqref{eq:corr-diff-2} and Lemma \ref{lem:lognormal},
\begin{align*}
\mathbb{E}_{ \theta_{0}} \big[( \mathbb{E}_{ \theta_{0}}[ \sigma_{u+r}^{p} \sigma_{u'+r}^{p} &\mid \mathcal{F}_{0}] - \mathbb{E}_{ \theta_{0}}[ \sigma_{u}^{p} \sigma_{u'}^{p}])^{2} \big] \\[0.10cm]
&= \xi^{2p} e^{-p \kappa(0)} \exp \bigg( \int_{- \infty}^{ \infty} K^{+}(u,u', \tau)^{2} \mathrm{d} \tau \bigg) \mathbb{E}_{ \theta_{0}} \Big[( \exp(Y_{r}^{u,u'})-1)^{2} \Big] \\[0.10cm]
&\leq 14 \xi^{2p} e^{-p \kappa(0)} \exp \bigg( \int_{- \infty}^{ \infty} K^{+}(u,u', \tau)^{2} \mathrm{d} \tau \bigg) \Big(e^{ \frac{3}{2} \overline{K}^{u,u'}(r)} + 1 \Big)^{2} \overline{K}^{u,u'}(r) \\[0.10cm]
& \leq 14 \xi^{2p} e^{\big( \frac{p^{2}}{2} - p\big) \kappa(0)} \Big(e^{ \frac{3p^{2}}{4} \kappa(0)}+1 \Big)^{2} \overline{K}^{u,u'}(r),
\end{align*}
after observing that
\begin{equation*}
\overline{K}^{u,u'}(r) \leq \int_{- \infty}^{ \infty} K^{+}(u,u', \tau)^{2} \mathrm{d} \tau \leq \frac{p^{2}}{2} \int_{0}^{ \infty} K( \tau)^{2} \mathrm{d} \tau = \frac{p^{2}}{2} \kappa(0).
\end{equation*}
Moreover, if $\tau \leq -r$ then $-\tau \geq r \geq u_{0}$, whereby $u- \tau \geq u_{0}$ and $u'- \tau \geq u_{0}$ since $u \geq s \geq 0$ and $u' \geq s' \geq 0$. Thus, by \eqref{eq:kernel-bound},
\begin{align*}
\overline{K}^{u,u'}(r) &= \frac{p^{2}}{4} \int_{- \infty}^{-r} \big(K(u- \tau) + K(u'- \tau) \big)^{2} \mathrm{d} \tau \\[0.10cm]
& \leq \frac{c^{2} p^{2}}{2} \int_{- \infty}^{-r} \Big( (u- \tau)^{-2 \gamma} +(u'- \tau)^{-2 \gamma} \Big) \mathrm{d} \tau \\[0.10cm]
& \leq c^{2}p^{2} \int_{r}^{ \infty} \tau^{-2 \gamma} \mathrm{d} \tau = \frac{c^{2} p^{2}}{1-2 \gamma} r^{-2 \gamma +1}.
\end{align*}
Consequently,
\begin{align*}
\bigg\| \mathbb{E}_{ \theta_{0}} \bigg[ \int_{s+r}^{t+r} \sigma_{u}^{p} \mathrm{d}u \int_{s'+r}^{t'+r} \sigma_{u'}^{p} \mathrm{d}u' \mid \mathcal{F}_{0} \bigg] &- \mathbb{E}_{ \theta_{0}} \bigg[ \int_{s}^{t} \sigma_{u}^{p} \mathrm{d}u \int_{s'}^{t'} \sigma_{u'}^{p} \mathrm{d}u' \bigg] \bigg\|_{L^{2}( \mathbb{P}_{ \theta_{0}})} \\[0.10cm]
&\leq (t-s)(t'-s') \bigg( \frac{14}{1-2 \gamma} \bigg)^{1/2} \xi^{p}  e^{ \big( \frac{p^{2}}{4}- \frac{p}{2} \big) \kappa(0)} \big(e^{ \frac{3p^{2}}{4} \kappa(0)}+1 \big) cpr^{- \gamma +1/2} \\[0.10cm]
&= O(r^{- \gamma +1/2}),
\end{align*}
as $r \rightarrow \infty$, which concludes the proof of (ii). \qed
\end{proof}

\begin{proof}[Proof of Proposition \ref{prop:clt}]
Thanks to Lemma \ref{lemma:mixingale-clt}, it suffices to show that the stationary process $(\widehat{ \mathbb{IV}}_{t})_{t \in \mathbb{Z}}$ is an $L^{2}$-mixingale of size $-\frac{1}{2}$ with respect to $(\mathcal{F}^{ \widehat{ \mathbb{IV}}}_{t})_{t \in \mathbb{Z}}$.

Let $r \geq 1$. First, we consider:
\begin{equation*}
\mathbb{E}_{ \theta_{0}} \Big[ \widehat{IV}_{r} \mid \mathcal{F}^{ \widehat{ \mathbb{IV}}}_{0} \Big] - g_{0}^{(1)}( \theta_{0}) = \mathbb{E}_{ \theta_{0}} \Big[IV_{r} \mid \mathcal{F}^{ \widehat{ \mathbb{IV}}}_{0} \Big] - \mathbb{E}_{ \theta_{0}}[IV_{1}] +  \mathbb{E}_{ \theta_{0}} \Big[ \varepsilon_{r} \mid \mathcal{F}^{ \widehat{ \mathbb{IV}}}_{0} \Big],
\end{equation*}
where
\begin{equation*}
\mathbb{E}_{ \theta_{0}} \Big[ \varepsilon_{r} \mid \mathcal{F}^{ \widehat{ \mathbb{IV}}}_{0} \Big] = \mathbb{E}_{ \theta_{0}} \Big[ \mathbb{E}_{ \theta_{0}}[ \varepsilon_{r} \mid \mathcal{F}^{ \sigma, \varepsilon}_{r-1}] \mid \mathcal{F}^{ \widehat{ \mathbb{IV}}}_{0} \Big] = 0
\end{equation*}
by Assumption \ref{assum:error} and the tower property of conditional expectations, which is applicable since $\mathcal{F}^{ \widehat{ \mathbb{IV}}}_{0} \subset \mathcal{F}^{ \sigma, \varepsilon}_{r-1}$. Therefore,
\begin{align} \label{eq:r-conv-1}
\begin{split}
\big\| \mathbb{E}_{ \theta_{0}} \Big[ \widehat{IV}_{r} \big| \mathcal{F}^{ \widehat{ \mathbb{IV}}}_{0} \Big] - g_{0}^{(1)}( \theta_{0}) \big\|_{L^{2}( \mathbb{P}_{ \theta_0})} &= \| \mathbb{E}_{ \theta_{0}} \Big[IV_{r} \mid \mathcal{F}^{ \widehat{ \mathbb{IV}}}_{0} \Big] - \mathbb{E}_{ \theta_{0}}[IV_{1}] \|_{L^{2}( \mathbb{P}_{ \theta_{0}})} \\[0.10cm]
&\leq \| \mathbb{E}_{ \theta_{0}} \Big[IV_{r} \mid \mathcal{F}^{W, \varepsilon}_{0} \Big] - \mathbb{E}_{ \theta_{0}}[IV_{1}] \|_{L^{2}( \mathbb{P}_{ \theta_{0}})} \\[0.10cm]
&= O(r^{- \gamma+1/2}), \quad r \rightarrow \infty, \\[0.10cm]
\end{split}
\end{align}
which follows by Lemma \ref{lem:exp-monotonicity}, again since $\mathcal{F}^{ \widehat{ \mathbb{IV}}}_{0} \subset \mathcal{F}^{W, \varepsilon}_{0}$, and Lemma \ref{lem:cond-exp-decay}(i).

Secondly,
\begin{align*}
\mathbb{E}_{ \theta_{0}} \Big[ \widehat{IV}_{r}^{2} \mid \mathcal{F}^{ \widehat{ \mathbb{IV}}}_{0} \Big]& - g_{0}^{(2)}( \theta_{0}) - c( \theta_{0}) \\[0.10cm]
&= \mathbb{E}_{ \theta_{0}} \Big[IV_{r}^{2} \mid \mathcal{F}^{ \widehat{ \mathbb{IV}}}_{0} \Big] - \mathbb{E}_{ \theta_{0}}[IV_{1}^{2}] + 2 \mathbb{E}_{ \theta_{0}} \Big[ \varepsilon_{r} IV_{r} \mid \mathcal{F}^{ \widehat{ \mathbb{IV}}}_{0} \Big] + \mathbb{E}_{ \theta_{0}} \Big[ \varepsilon_{r}^{2} \mid \mathcal{F}^{ \widehat{ \mathbb{IV}}}_{0} \Big] - \mathbb{E}_{ \theta_{0}}[ \varepsilon_{1}^{2}],
\end{align*}
where
\begin{equation*}
\mathbb{E}_{ \theta_{0}} \Big[ \varepsilon_{r}IV_{r} \mid \mathcal{F}^{ \widehat{ \mathbb{IV}}}_{0} \Big] = \mathbb{E}_{ \theta_{0}} \Big[ \mathbb{E}_{ \theta_{0}} \big[ \varepsilon_{r} \mid \mathcal{F}^{ \sigma, \varepsilon}_{r-1} \big] IV_{r} \mid \mathcal{F}^{ \widehat{ \mathbb{IV}}}_{0} \Big] = 0,
\end{equation*}
by the tower property, since $\mathcal{F}^{ \widehat{ \mathbb{IV}}}_{0} \subset \mathcal{F}^{ \sigma, \varepsilon}_{r-1}$, and Assumption \ref{assum:error}. By virtue of condition \eqref{item:epsilon-memory} in Assumption \ref{assum:error-clt} and Minkowski's inequality:
\begin{align} \label{eq:r-conv-2}
\begin{split}
\big\| \mathbb{E}_{ \theta_{0}} \Big[ \widehat{IV}_{r}^{2} \mid \mathcal{F}^{ \widehat{ \mathbb{IV}}}_{0} \Big] - g_{0}^{(2)}( \theta_{0}) - c( \theta_{0}) \big\|_{L^{2}( \mathbb{P}_{ \theta_{0}})} &= \| \mathbb{E}_{ \theta_{0}} \Big[IV_{1}^{2} \mid \mathcal{F}^{ \widehat{ \mathbb{IV}}}_{0} \Big] - \mathbb{E}_{ \theta_{0}}[IV_{1}^{2}] \|_{L^{2}( \mathbb{P}_{ \theta_{0}})} + O(r^{- \gamma+1/2}) \\[0.10cm]
&\leq \| \mathbb{E}_{ \theta_{0}} \Big[IV_{r}^{2} \mid \mathcal{F}^{W, \varepsilon}_{0} \Big] - \mathbb{E}_{ \theta_{0}}[IV_{1}^{2}]  \|_{L^{2}( \mathbb{P}_{ \theta_{0}})} + O(r^{- \gamma+1/2}) \\[0.10cm]
&= O(r^{- \gamma+1/2}),
\end{split}
\end{align}
as $r \rightarrow \infty$, by Lemma \ref{lem:exp-monotonicity} and \ref{lem:cond-exp-decay}(ii).

Lastly, for $\ell = 1,\ldots,k$ (assuming $r > k$ without loss of generality):
\begin{align*}
\mathbb{E}_{ \theta_{0}} \Big[ \widehat{IV}_{r} \widehat{IV}_{r- \ell} \mid \mathcal{F}^{ \widehat{ \mathbb{IV}}}_{0} \Big] - g_{ \ell}( \theta_{0}) &= \mathbb{E}_{ \theta_{0}} \Big[IV_{r} IV_{r- \ell} \mid \mathcal{F}^{ \widehat{ \mathbb{IV}}}_{0} \Big] - \mathbb{E}_{ \theta_{0}}[IV_{1} IV_{1- \ell}] \\[0.10cm]
&+ \mathbb{E}_{ \theta_{0}} \Big[ \varepsilon_{r} IV_{r- \ell} \mid \mathcal{F}^{ \widehat{ \mathbb{IV}}}_{0} \Big] + \mathbb{E}_{ \theta_{0}} \Big[IV_{r} \varepsilon_{r- \ell} \mid \mathcal{F}^{ \widehat{ \mathbb{IV}}}_{0} \Big] + \mathbb{E}_{ \theta_{0}} \Big[ \varepsilon_{r} \varepsilon_{r- \ell} \mid \mathcal{F}^{ \widehat{ \mathbb{IV}}}_{0} \Big],
\end{align*}
where
\begin{align*}
\mathbb{E}_{ \theta_{0}} \Big[ \varepsilon_{r} IV_{r- \ell} \mid \mathcal{F}^{ \widehat{ \mathbb{IV}}}_{0} \Big] &= \mathbb{E}_{ \theta_{0}}  \Big[ \mathbb{E}_{ \theta_{0}} \big[ \varepsilon_{r} \mid \mathcal{F}^{ \sigma, \varepsilon}_{r-1} \big]IV_{r- \ell} \mid \mathcal{F}^{ \widehat{ \mathbb{IV}}}_{0} \Big] = 0, \\[0.10cm]
\mathbb{E}_{ \theta_{0}} \Big[IV_{r} \varepsilon_{r- \ell} \mid \mathcal{F}^{ \widehat{ \mathbb{IV}}}_{0} \Big] &= \mathbb{E}_{ \theta_{0}} \Big[IV_{r} \mathbb{E}_{ \theta_{0}} \big[ \varepsilon_{r- \ell} \mid \mathcal{F}^{ \sigma, \varepsilon}_{r- \ell-1} \big] \mid \mathcal{F}^{ \widehat{ \mathbb{IV}}}_{0} \Big] = 0, \\[0.10cm]
\mathbb{E}_{ \theta_{0}} \Big[ \varepsilon_{r} \varepsilon_{r- \ell} \mid \mathcal{F}^{ \widehat{ \mathbb{IV}}}_{0} \Big] &= \mathbb{E}_{ \theta_{0}} \Big[ \mathbb{E}_{ \theta_{0}} \big[ \varepsilon_{r} \mid \mathcal{F}^{ \sigma, \varepsilon}_{r-1} \big] \varepsilon_{r- \ell} \mid \mathcal{F}^{ \widehat{ \mathbb{IV}}}_{0} \Big] =0,
\end{align*}
from tower property, because $\mathcal{F}^{ \widehat{\mathbb{IV}}}_{0} \subset \mathcal{F}^{ \sigma, \varepsilon}_{r- \ell-1} \subset \mathcal{F}^{ \sigma, \varepsilon}_{r}$, and Assumption \ref{assum:error}. Thus, applying yet again Lemma \ref{lem:exp-monotonicity}, we get
\begin{align} \label{eq:r-conv-3}
\begin{split}
\big\| \mathbb{E}_{ \theta_{0}} \Big[ \widehat{IV}_{r} \widehat{IV}_{r- \ell} \mid \mathcal{F}^{ \widehat{ \mathbb{IV}}}_{0} \Big] - g_{ \ell}( \theta_{0}) \big\|_{L^{2}( \mathbb{P}_{ \theta_0})} &= \| \mathbb{E}_{ \theta_{0}} \Big[IV_{r} IV_{r- \ell} \mid \mathcal{F}^{ \widehat{ \mathbb{IV}}}_{0} \Big] - \mathbb{E}_{ \theta_{0}}[IV_{1} IV_{1- \ell}]  \|_{L^{2}( \mathbb{P}_{ \theta_{0}})} \\[0.10cm]
&\leq \| \mathbb{E}_{ \theta_{0}} \Big[IV_{r} IV_{r- \ell} \mid \mathcal{F}^{W, \varepsilon}_{0} \Big] - \mathbb{E}_{ \theta_{0}}[IV_{1} IV_{1- \ell}] \|_{L^{2}( \mathbb{P}_{ \theta_{0}})} \\[0.10cm] &= O(r^{- \gamma+1/2}),
\end{split}
\end{align}
as $r \rightarrow \infty$, due to Lemma \ref{lem:cond-exp-decay}(ii).

Combining \eqref{eq:r-conv-1} -- \eqref{eq:r-conv-3}, we deduce that
\begin{equation*}
\Big\| \mathbb{E}_{ \theta_{0}} \Big[ \widehat{ \mathbb{IV}}_{r} \mid \mathcal{F}^{ \widehat{ \mathbb{IV}}}_{0} \Big] - \mathbb{E}_{ \theta_{0}}\Big[ \widehat{ \mathbb{IV}}_{r}  \Big]  \Big\|_{L^{2}( \mathbb{P}_{ \theta_{0}})} = \big\| \mathbb{E}_{ \theta_{0}} \Big[ \widehat{ \mathbb{IV}}_{r} \mid \mathcal{F}^{ \widehat{ \mathbb{IV}}}_{0} \Big] - G_{c}( \theta_{0})  \big\|_{L^{2}( \mathbb{P}_{ \theta_{0}})} =  O(r^{- \gamma+1/2}),
\end{equation*}
as $r \rightarrow \infty$. Since $\gamma > 1$, this implies that $(\widehat{ \mathbb{IV}}_{t})_{t \in \mathbb{Z}}$ is an $L^{2}$-mixingale of size $-\frac{1}{2}$. \qed
\end{proof}

\begin{proposition} \label{prop:exm-memory}
Suppose that Assumptions \ref{assum:gauss} and \ref{assumption:memory} hold. Then condition \emph{(ii)} of Assumption \ref{assum:error-clt} applies in Examples \ref{exm:CLT} -- \ref{exm:bipower}.
\end{proposition}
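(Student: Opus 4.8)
\noindent The plan is to reduce condition (ii) of Assumption \ref{assum:error-clt} to the memory estimate for integrated quarticity already contained in Lemma \ref{lem:cond-exp-decay}(i), applied with $p = 4$. In Examples \ref{exm:CLT} and \ref{exm:bipower} the error term has the product form $\varepsilon_t = (c_n IQ_t)^{1/2} Z_t$, with $c_n = 2/n$ or $c_n = (\pi^2/4 + \pi - 3)/n$, $IQ_t = \int_{t-1}^t \sigma_s^4 \, ds$, and $(Z_t)_{t \in \mathbb{Z}}$ i.i.d.\ standard normal, mutually independent and independent of $\mathcal{F}^\sigma$ (extended to negative indices by stationarity). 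Hence $\varepsilon_r^2 = c_n IQ_r Z_r^2$.

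First I would handle the conditional expectation. Since $\widehat{IV}_t = IV_t + \varepsilon_t$ and both $IV_t$ and $IQ_t$ are $\mathcal{F}^\sigma$-measurable, one has $\mathcal{F}_0^{\widehat{\mathbb{IV}}} \subset \mathcal{F}^\sigma \vee \sigma\{Z_t : t \leq 0\}$, so for $r \geq 1$ the variable $Z_r$ is independent of the join $\mathcal{F}^\sigma \vee \mathcal{F}_0^{\widehat{\mathbb{IV}}}$. Conditioning first on $\mathcal{F}^\sigma \vee \mathcal{F}_0^{\widehat{\mathbb{IV}}}$ and then applying the tower property removes the factor $Z_r^2$, giving $\mathbb{E}_{\theta_0}[\varepsilon_r^2 \mid \mathcal{F}_0^{\widehat{\mathbb{IV}}}] = c_n\, \mathbb{E}_{\theta_0}[IQ_r \mid \mathcal{F}_0^{\widehat{\mathbb{IV}}}]$, while $\mathbb{E}_{\theta_0}[\varepsilon_1^2] = c_n \mathbb{E}_{\theta_0}[IQ_1]$ and, by stationarity, $\mathbb{E}_{\theta_0}[IQ_r] = \mathbb{E}_{\theta_0}[IQ_1]$.

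Next I would bound the $L^2$-norm. By the moving average representation \eqref{equation:moving-average}, $IV_t$ and $IQ_t$ for $t \leq 0$ are measurable with respect to $\sigma\{B_u : u \leq 0\}$, and $\varepsilon_t$ for $t \leq 0$ is measurable with respect to $\mathcal{F}_0^{B,\varepsilon}$ by construction, so $\mathcal{F}_0^{\widehat{\mathbb{IV}}} \subset \mathcal{F}_0^{B,\varepsilon}$. Applying Lemma \ref{lem:exp-monotonicity} to $X = IQ_r - \mathbb{E}_{\theta_0}[IQ_1]$,
\begin{align*}
\big\| \mathbb{E}_{\theta_0}[\varepsilon_r^2 \mid \mathcal{F}_0^{\widehat{\mathbb{IV}}}] - \mathbb{E}_{\theta_0}[\varepsilon_1^2] \big\|_{L^2(\mathbb{P}_{\theta_0})} &= c_n \big\| \mathbb{E}_{\theta_0}[IQ_r \mid \mathcal{F}_0^{\widehat{\mathbb{IV}}}] - \mathbb{E}_{\theta_0}[IQ_1] \big\|_{L^2(\mathbb{P}_{\theta_0})} \\
&\leq c_n \big\| \mathbb{E}_{\theta_0}[IQ_r \mid \mathcal{F}_0^{B,\varepsilon}] - \mathbb{E}_{\theta_0}[IQ_1] \big\|_{L^2(\mathbb{P}_{\theta_0})}.
\end{align*}
By condition (iii) of Assumption \ref{assum:error-clt}, $B$ has independent increments with respect to $(\mathcal{F}_t^{B,\varepsilon})_{t \in \mathbb{Z}}$, so this filtration satisfies the hypotheses of Lemma \ref{lem:cond-exp-decay}. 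Writing $IQ_r = \int_{r-1}^r \sigma_u^4 \, du$ and invoking part (i) of that lemma with $p = 4$, $s = 0$, $t = 1$ and shift $r-1$ in place of $r$, the right-hand side is $O((r-1)^{-\gamma+1/2}) = O(r^{-\gamma+1/2})$ as $r \to \infty$, which is precisely condition (ii). The argument for bipower variation (Example \ref{exm:bipower}) is identical, only the constant $c_n$ changes.

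The main obstacle is bookkeeping rather than any genuine analytic difficulty: one must verify that $Z_r$ is independent of the \emph{whole} $\sigma$-algebra $\mathcal{F}^\sigma \vee \mathcal{F}_0^{\widehat{\mathbb{IV}}}$ (not merely of each piece separately), so that the tower property cleanly eliminates the $Z_r^2$ factor, and one must align the time window of $IQ_r$ with the form $\int_{s+r}^{t+r}$, $0 \leq s \leq t$, demanded by Lemma \ref{lem:cond-exp-decay} — a harmless index shift. Everything else is an immediate assembly of the two auxiliary lemmas together with the moving average representation \eqref{equation:moving-average}.
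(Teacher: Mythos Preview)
Your treatment of Examples \ref{exm:CLT} and \ref{exm:bipower} follows the same route as the paper: pull out the $Z_r^2$ factor by conditional independence, enlarge the $\sigma$-algebra via Lemma \ref{lem:exp-monotonicity}, and invoke Lemma \ref{lem:cond-exp-decay}(i) with $p=4$. One small logical slip: you justify the independent-increments hypothesis of Lemma \ref{lem:cond-exp-decay} by citing condition (iii) of Assumption \ref{assum:error-clt}, but the proposition only assumes Assumptions \ref{assum:gauss} and \ref{assumption:memory}. In these concrete examples the property holds directly because the $Z_t$ are i.i.d.\ and independent of $B$ by construction, so $\mathcal{F}_t^{B,\varepsilon}$ coincides with $\mathcal{F}_t^{B,Z}\equiv\sigma\{Z_s:s\le t\}\vee\sigma\{B_u:u\le t\}$ and the increments of $B$ beyond $t$ are independent of it; that is what you should say instead of appealing to an assumption you are not given.

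The genuine gap is that you have omitted Example \ref{exm:indep} altogether. The statement covers Examples \ref{exm:CLT}--\ref{exm:bipower}, and the middle one has a structurally different error term,
\[
\varepsilon_t=\sum_{i=1}^n (Z_{t,i}^2-1)\int_{t-1+\frac{i-1}{n}}^{t-1+\frac{i}{n}}\sigma_u^2\,\mathrm{d}u,
\]
so $\varepsilon_r^2$ is a double sum over pairs $(i,j)$, and after taking the conditional expectation the cross terms vanish (since $\mathbb{E}[(Z_{r,i}^2-1)(Z_{r,j}^2-1)]=2\delta_{ij}$), leaving $2\sum_{i=1}^n \mathbb{E}_{\theta_0}\big[(\int_{r-1+(i-1)/n}^{r-1+i/n}\sigma_u^2\,\mathrm{d}u)^2\mid\mathcal{F}_0^{B,Z}\big]$. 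Bounding the difference from its mean then requires Minkowski's inequality together with Lemma \ref{lem:cond-exp-decay}(ii) (the product form with $s=s'$, $t=t'$, $p=2$), not part (i). This case is not covered by your product-form reduction $\varepsilon_t^2=c_n IQ_t Z_t^2$, and the proof is incomplete without it.
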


\begin{proof}
The error terms in Examples \ref{exm:CLT} and \ref{exm:bipower} differ only by scaling factor, so it suffices to look at the former. Then,
\begin{equation*}
\varepsilon_{t} = \bigg( \frac{2}{n} \int_{t-1}^{t} \sigma_{u}^{4} \mathrm{d}u \bigg)^{1/2} Z_{t}, \quad t \in \mathbb{Z},
\end{equation*}
so that, using the filtration $\mathcal{F}^{B,Z}_{t} = \sigma \{Z_{t},Z_{t-1}, \ldots \} \vee \sigma \{ B_{u} : u \leq t \}$, $t \in \mathbb{Z}$:
\begin{equation*}
\mathbb{E}_{ \theta_{0}} \Big[ \varepsilon_{r}^{2} \mid \mathcal{F}_{0}^{B,Z} \Big]- \mathbb{E}_{ \theta_{0}}[ \varepsilon_{1}^{2}] = \frac{2}{n} \bigg( \mathbb{E}_{ \theta_{0}} \bigg[ \int_{r-1}^{r} \sigma_{u}^{4} \mathrm{d}u \mid \mathcal{F}_{0}^{B,Z} \bigg] - \mathbb{E}_{ \theta_{0}} \bigg[ \int_{0}^{1} \sigma_{u}^{4} \text{d}u \bigg] \bigg), \quad r \geq 1,
\end{equation*}
since $\int_{r-1}^r \sigma_{u}^{4} \mathrm{d}u$ and $Z_{r}$ are conditionally independent on $\mathcal{F}_{0}^{B,Z}$. We can then apply Lemma \ref{lem:cond-exp-decay}(i) and \ref{lem:exp-monotonicity} to show the conjecture of this part.

In Example \ref{exm:indep},
\begin{equation*}
\varepsilon_{t} = \sum_{i=1}^{n} (Z_{t,i}^{2} -1) \int_{t-1+ \frac{i-1}{n}}^{t-1+ \frac{i}{n}} \sigma_{u}^{2} \mathrm{d}u, \quad t \in \mathbb{Z},
\end{equation*}
whereby for any $r \geq 1$,
\begin{equation*}
\mathbb{E}_{ \theta_{0}} \Big[ \varepsilon_{r}^{2} \mid \mathcal{F}_{0}^{B,Z} \Big]- \mathbb{E}_{ \theta_{0}}[ \varepsilon_{1}^{2}] = 2 \sum_{i=1}^n \Bigg( \mathbb{E}_{ \theta_{0}} \bigg[ \bigg( \int_{r-1+ \frac{i-1}{n}}^{r-1- \frac{i}{n}} \sigma_{u}^{2} \mathrm{d}u \bigg)^{2} \mid \mathcal{F}_{0}^{B,Z} \bigg] - \mathbb{E}_{ \theta_{0}} \bigg[ \bigg( \int_{ \frac{i-1}{n}}^{ \frac{i}{n}} \sigma_{u}^{2} \mathrm{d}u \bigg)^{2} \bigg] \Bigg).
\end{equation*}
Applying Minkowski's inequality and Lemmas \ref{lem:cond-exp-decay}(i) and \ref{lem:exp-monotonicity} concludes the proof. \qed
\end{proof}

\subsection{Proof of Theorem \ref{Thm:GMMCLTCorrection}}

We set $Q_{T} = \widehat{m}_{T}( \theta)^{ \intercal} \mathbb{W}_{T} \widehat{m}_{T}( \theta)$ and note that $Q_{T}$ attains its minimum value at $\widehat{ \theta}_{T}$. Combining this with the mean value theorem yields that
\begin{equation*}
0 = \nabla_{ \theta} Q_{T}( \widehat{ \theta}_{T}) = \nabla_{ \theta} Q_{T}( \theta_{0}) + \nabla_{\theta \theta}^{2} Q_{T}( \bar{ \theta}_{T})( \widehat{ \theta}_{T}- \theta_{0}),
\end{equation*}
where $\bar{\theta}_{T}$ lies between $\widehat{ \theta}_{T}$ and $\theta_{0}$. Now,
\begin{align*}
\nabla_{ \theta} Q_{T}( \theta) &= 2 \nabla_{ \theta} \widehat{m}_{T}( \theta)^{ \intercal} \mathbb{W}_{T} \widehat{m}_{T}( \theta), \\[0.10cm]
\nabla_{ \theta \theta}^{2} Q_{T}( \theta) &= 2 \nabla_{ \theta \theta} \widehat{m}_{T}( \theta)^{ \intercal} \mathbb{W}_{T} \widehat{m}_{T}( \theta)
+ 2 \nabla_{ \theta \theta} \widehat{m}_{T}( \theta)^{ \intercal} \mathbb{W}_{T} \nabla_{ \theta} \widehat{m}_{T}( \theta)
\end{align*}
This leads to:
\begin{equation}
\sqrt{T}( \widehat{ \theta}_{T}- \theta_{0})= \big( \nabla_{\theta \theta}^{2} Q_{T} (\bar \theta_{T}) \big)^{-1} 2 \nabla_{ \theta} \widehat{m}_{T}( \theta_{0})^{ \intercal} \mathbb{W}_{T} \sqrt{T} \widehat{m}_{T}( \theta_{0}).
\end{equation}
Invoking the assumptions of the theorem, it follows that $\bar{ \theta}_{T} \cp \theta_{0}$ as $T \rightarrow \infty$. In addition, and recalling Proposition \ref{prop:clt}, we deduce that as $T \rightarrow \infty$:
\begin{align*}
\sqrt{T} \widehat{m}_{T}( \theta_{0}) &\cd N \big(0, \Sigma_{ \widehat{ \mathbb{IV}}} \big), \\[0.10cm]
\nabla_{ \theta} \widehat{m}_{T}( \theta_{0}) & \cp J, \\[0.10cm]
\nabla_{ \theta \theta}^{2} Q_{T} ( \bar \theta_{T}) &\cp J^{ \intercal} \mathbb{W} J,
\end{align*}
where the last part uses that $\widehat{m}_{T}( \theta_{0}) \cp 0$. Then, Slutsky's theorem finishes the proof. \qed

\subsection{Proof of Theorem \ref{Thm:GMMCLTHAC}}
By the Cram\'{e}r--Wold device, we can reduce the proof to a univariate setting. Then, denoting $\Sigma_{T} = \Gamma(0) + 2 \sum_{ \ell=1}^{T-1} (1- \ell/T) \Gamma( \ell)$, it is enough to show that $\widehat{ \Sigma}_{T} - \Sigma_{T} \overset{ \mathbb{P}}{ \longrightarrow} 0$. This assertion follows from Theorem 2.1 of \citet*{davidson:20a}. To sketch the main ideas, we write
\begin{equation*}
\widehat{ \Sigma}_{T} / \Sigma_{T} - 1 = A_{1} + A_{2},
\end{equation*}
where
\begin{align*}
A_{1} &= \frac{1}{ \Sigma_{T}} \left[ \widehat{ \Gamma}(0)- \Gamma(0)+2 \frac{1}{T} \sum_{ \ell=1}^{T-1} w( \ell/L) \sum_{t=1}^{T- \ell} \left[ \big( \widehat{ \mathbb{IV}}_{t} - G_{c} \big) \big( \widehat{ \mathbb{IV}}_{t+ \ell} - G_{c} \big)- \Gamma( \ell) \right] \right], \\[0.10cm]
A_{2} &= \frac{1}{ \Sigma_{T}} 2 \sum_{ \ell=1}^{T-1} \big(w ( \ell/L  ) -1 \big) (1- \ell/T) \Gamma( \ell).
\end{align*}

Note that the term $A_{2}$ is deterministic. Moreover, it follows from the assumptions imposed on $w$ that $A_{2} = o(1)$.

To deal with the term $A_{1}$, we first note that based on Lemma \ref{lem:cond-exp-decay} and the proof of Proposition \ref{prop:clt}, for each $\ell \geq 0$,
the sequence $\{ \big( \widehat{ \mathbb{IV}}_{t} - G_{c} \big) \big( \widehat{ \mathbb{IV}}_{t+ \ell} - G_{c} \big)- \Gamma( \ell) \}$, is an $L^{2}$-mixingale of size $-\frac{1}{2}$. Then, Corollary 16.10 in \citet*{davidson:94a} implies that
\begin{align*}
\mathbb{E} \left[ \left( \sum_{t=1}^{T- \ell} \Big( \big( \widehat{ \mathbb{IV}}_{t} - G_{c} \big) \big( \widehat{ \mathbb{IV}}_{t+ \ell}-G_{c} \big)-\Gamma( \ell) \Big) \right)^{2} \right] \leq C T.
\end{align*}
Along with the convergence
\begin{equation*}
\frac{1}{L} \sum_{ \ell = 0}^{T-1} |w ( \ell/L  )| \longrightarrow \int_{0}^{ \infty} |w(x)| \mathrm{d}x,
\end{equation*}
we deduce that $\mathbb{E}[|A_{1}|] = O(L T^{-1/2}) = o(1)$ based on the assumption on $L$. \qed

\subsection{Proof of Theorem \ref{Thm:GMMCLTdouble}}

First, we provide a result for a general weight matrix as in (the proof of) Theorem \ref{Thm:GMMCLTCorrection}. Proceeding as in the proof there, and denoting $\widetilde{Q}_{n,T} = \widetilde{m}_{n,T}( \theta)^{ \intercal} \mathbb{W}_{n, T} \widetilde{m}_{n,T}( \theta)$, we find that
\begin{equation*}
\sqrt{T}( \widetilde{ \theta}_{n,T}- \theta_{0}) = \big( \nabla_{ \theta \theta}^{2} \widetilde{Q}_{n,T} ( \check{ \theta}_{n,T}) \big)^{-1}
2 \nabla_{ \theta} \widetilde{m}_{n,T}( \theta_{0})^{ \intercal} \mathbb{W}_{n, T} \sqrt{T} \widetilde{m}_{n,T}( \theta_0).
\end{equation*}
 Then, as $T \rightarrow \infty$ and $n \rightarrow \infty$,
\begin{align*}
\sqrt{T} \widetilde{m}_{n,T}( \theta_{0}) &\cd N(0, \Sigma_{\mathbb{IV}}), \\[0.10cm]
\nabla_{ \theta} \widetilde{m}_{n,T}( \theta_{0}) &\cp \widetilde{J}, \\[0.10cm]
\nabla_{ \theta \theta}^{2} \widetilde{Q}_{n,T}( \check{ \theta}_{n,T}) &\cp \widetilde{J}^{ \intercal} \mathbb{W} \widetilde{J},
\end{align*}
To wrap this up, we again exploit Slutsky's theorem.

To reach the final conclusion involving the estimator $\widehat{\Sigma}_{n,T}$, it is enough to further show the convergence $\widehat{\Sigma}_{n,T} \overset{ \mathbb{P}}{ \longrightarrow} \Sigma_{\mathbb{IV}}$ as $T \rightarrow \infty$ and $n \rightarrow \infty$. As in the proof of Theorem \ref{Thm:GMMCLTHAC}, this follows by noting that the error $V_{t}^{n} - IV_{t}$ is negligible in front of Assumption \ref{assum:doubleasympCLT}. \qed

\pagebreak

\section{Identification} \label{appendix:identification}

Assumption \ref{assumption:identification} is difficult to check, as it involves solving a system of nonlinear equations, which cannot be expressed in closed-form. In lieu of this, it is worthwhile to look briefly at a related question: Does equality of \emph{all} first and second-order moments, i.e.
\begin{equation} \label{eq:allmoments}
g^{(1)}_{0}( \theta_{1}) = g^{(1)}_{0}( \theta_{2}) \quad \text{and} \quad g_{ \ell}( \theta_{1}) = g_{ \ell}( \theta_{2}), \quad \text{for all $\ell \in \mathbb{N} \cup \{0 \}$},
\end{equation}
for $\theta_{1}, \theta_{2} \in \Theta$ hold if and only if $\theta_{1} = \theta_{2}$; the ``if'' direction being trivial. While an affirmative answer falls short of settling the original question, it does provide circumstantial evidence to suggest that there is no immediate issue with identification, so long as sufficiently many moments are included in the GMM estimation.

Note that since $g^{(1)}_{0}( \theta)= \xi$, the parameter $\xi$ is always identifiable.

In the fSV model, where the parameter vector is $\theta = ( \xi, \lambda, \nu,H)$, we only get a partial identification result. We first observe that, on the one hand, \eqref{eq:allmoments} implies
\begin{equation*}
\lim_{ \ell \rightarrow \infty} \frac{ \mathbb{E}_{ \theta_{1}} \big[(IV_{t} - \xi_{1})(IV_{t+l} - \xi_{1}) \big]}{ \mathbb{E}_{ \theta_{2}} \big[(IV_{t} - \xi_{2})(IV_{t+l} - \xi_{2})]} = \lim_{ \ell \rightarrow \infty} \frac{g_{ \ell}( \theta_{1}) - g^{(2)}_{0}( \theta_{1})^{2}}{g_{ \ell}( \theta_{2}) - g^{(2)}_{0}( \theta_{2})^{2}} = 1.
\end{equation*}
On the other hand, when $H_{1} \neq 0.5 \neq H_{2}$, Theorem \ref{theo:MomentsIVGeneral} and Equation (8) in \citet*{garnier-solna:18a} imply that
\begin{equation*}
\lim_{ \ell \rightarrow \infty} \frac{\mathbb{E}_{ \theta_{1}} \big[(IV_{t} - \xi_{1})(IV_{t+l} - \xi_{1}) \big]}{ \mathbb{E}_{ \theta_{2}} \big[(IV_{t} - \xi_{2})(IV_{t+l} - \xi_{2}) \big]} = \lim_{ \ell \rightarrow \infty} \frac{ \displaystyle \frac{ \xi_{1}^{2} H_{1}(2H_{1} + 1) \nu_{1}^{2}}{ \lambda_{1}^{2}} \ell^{2H_{1} - 2}}{ \displaystyle \frac{ \xi_{2}^{2} H_{2}(2H_{2} + 1) \nu_{2}^{2}}{ \lambda_{2}^{2}} \ell^{2H_{2} - 2}} = \lim_{ \ell \rightarrow \infty} \underbrace{ \frac{ \displaystyle \frac{ \xi_{1}^{2} H_{1}(2H_{1} + 1) \nu_{1}^{2}}{ \lambda_{1}^{2}}}{ \displaystyle \frac{ \xi_{2}^{2} H_{2}(2H_{2} + 1) \nu_{2}^{2}}{ \lambda_{2}^{2}}}}_{>0} \ell^{2(H_{1} - H_{2})},
\end{equation*}
where the limiting value is strictly positive and finite only if $H_{1} = H_{2}$.\footnote{If either $H_1 = 0.5$ or $H_2 = 0.5$ then either the numerator or denominator decays exponentially, so the limit of the ratio cannot equal one.} In that case, the limit is further equal to one only if
\begin{equation*}
\frac{ \nu_{1}}{ \lambda_{1}} = \frac{ \nu_{2}}{ \lambda_{2}},
\end{equation*}
since $\xi_1 = \xi_2$ by \eqref{eq:allmoments}.

We conjecture that a closer analysis of $g^{(2)}_{0}( \theta)$ and $g_{ \ell}( \theta)$, $\ell \in \mathbb{N}$, may help to uniquely identify also $\lambda$ and $\nu$, but as it does not appear completely trivial, we do not pursue this avenue further.

\pagebreak

\section{The Gamma-$\mathcal{BSS}$ process} \label{appendix:gamma-bss}

In this section, we assume that $Y$ is a Brownian semistationary ($\mathcal{BSS}$) SV process, i.e. a Gaussian process constructed with a serial correlation that is locally equivalent to a fBm, whereas the long-range dependence structure can differ a lot. We derive the expressions required to estimate the parameters of a restricted version of this model via our GMM approach.

The $\mathcal{BSS}$ process is defined as:
\begin{equation}
Y_{t} = \nu \int_{- \infty}^{t} h(t-s) \text{\upshape{d}} B_{s}, \quad t \geq 0,
\end{equation}
where $\nu > 0$ and $h: (0, \infty) \rightarrow \mathbb{R}$ is a kernel (subject to some regularity conditions).

A convenient choice is the gamma kernel $h(x) = x^{ \alpha} e^{- \lambda x}$ with $\alpha > -1/2$ and $\lambda>0$ resulting in the Gamma-$\mathcal{BSS}$. This model has local properties that are equivalent to the fSV model, and while not formally long-memory it does allow for substantial persistence in the time series. Its covariance structure was derived in \citet*{bennedsen-lunde-pakkanen:22a}, which we include in \eqref{equation:kappa-gamma-bss} for self-containedness. The autocovariance function of integrated variance in \eqref{equation_iv-gamma-bss} is new.

\begin{lemma} \label{lemma:acf-gamma-bss}
If $Y$ follows the Gamma-$\mathcal{BSS}$ process, $\kappa( \ell)$ has the form:
\begin{align} \label{equation:kappa-gamma-bss}
\begin{split}
	\kappa(0) &= \frac{ \nu^{2}}{(2 \lambda)^{2 \alpha+1}} \Gamma(2 \alpha+1), \\[0.10cm]
	\kappa( \ell) &= \frac{ \nu^{2} \Gamma( \alpha+1)}{ \sqrt{ \pi}} \left( \frac{ \ell}{2 \lambda} \right)^{ \alpha+ \frac{1}{2}} K_{ \alpha+1/2}( \lambda \ell), \quad \ell > 0,
\end{split}
\end{align}
where $K_{a}(x)$ is the Bessel function of the second kind. In addition, as $\ell \rightarrow \infty$, it follows that
\begin{equation} \label{equation_iv-gamma-bss}
\mathbb{E} \big[(IV_{t} - \xi)(IV_{t+ \ell}- \xi) \big] \sim \frac{ \nu^{2} \xi^{2} \Gamma( \alpha+1)( \exp( \lambda)-1)^2}{2^{ \alpha+1} \lambda^{ \alpha+2}} \ell^{ \alpha} \exp(- \lambda( \ell+1)).
\end{equation}
\end{lemma}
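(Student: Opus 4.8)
The plan is to establish the two claims separately, treating \eqref{equation:kappa-gamma-bss} as essentially a cited computation and then deriving \eqref{equation_iv-gamma-bss} by combining the last part of Theorem \ref{theo:MomentsIVGeneral} with the known asymptotics of the Bessel function. For the first display, since $Y_t = \nu \int_{-\infty}^t (t-s)^\alpha e^{-\lambda(t-s)}\,\mathrm{d}B_s$, stationarity and the It\^{o} isometry give $\kappa(\ell) = \nu^2 \int_0^\infty u^\alpha (u+\ell)^\alpha e^{-\lambda u} e^{-\lambda(u+\ell)}\,\mathrm{d}u$ for $\ell \ge 0$; the case $\ell = 0$ is the standard Gamma integral yielding $\Gamma(2\alpha+1)/(2\lambda)^{2\alpha+1}$, and for $\ell > 0$ the substitution $u = \ell v$ (or directly the integral representation $K_\nu(z) = \tfrac{\sqrt{\pi}}{\Gamma(\nu+1/2)}(\tfrac z2)^\nu \int_0^\infty e^{-z\cosh t}\sinh^{2\nu}t\,\mathrm{d}t$, or more conveniently the form $\int_0^\infty u^{\alpha}(u+\ell)^{\alpha}e^{-2\lambda u - \lambda \ell}\,\mathrm{d}u$) reduces to the Bessel-$K$ expression. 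This is exactly the computation carried out in \citet*{bennedsen-lunde-pakkanen:21a}, so here I would simply state it and give the one-line integral check, citing that reference for the details.

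For the asymptotic equivalence \eqref{equation_iv-gamma-bss}, the strategy is to verify that the Gamma-$\mathcal{BSS}$ covariance satisfies conditions (a)--(c) of Theorem \ref{theo:MomentsIVGeneral} and identify the limiting function $\phi$. From the classical large-argument expansion $K_a(z) \sim \sqrt{\pi/(2z)}\, e^{-z}$ as $z \to \infty$, one obtains
\begin{equation*}
\kappa(\ell) \sim \frac{\nu^2 \Gamma(\alpha+1)}{\sqrt{\pi}} \Big(\frac{\ell}{2\lambda}\Big)^{\alpha+1/2} \sqrt{\frac{\pi}{2\lambda\ell}}\, e^{-\lambda\ell} = \frac{\nu^2 \Gamma(\alpha+1)}{2^{\alpha+1}\lambda^{\alpha+1}}\, \ell^{\alpha}\, e^{-\lambda \ell}, \qquad \ell \to \infty.
\end{equation*}
Hence $\kappa$ conforms to \eqref{equation:covariance-asymptotic} with $\beta = -\alpha$ (reading a nonpositive exponent as a polynomial factor $\ell^\alpha$), $\rho = \lambda > 0$, and a slowly varying $L$ converging to the positive constant $\nu^2\Gamma(\alpha+1)/(2^{\alpha+1}\lambda^{\alpha+1})$; this is precisely the situation discussed after \eqref{equation:covariance-asymptotic}, so conditions (a)--(c) hold with $\phi(y) = e^{-\rho y} = e^{-\lambda y}$. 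Actually one must be careful that condition (c) — the $\limsup$ of $|\kappa(\ell+y)/\kappa(\ell)|$ over $y \in [-1,1]$ — is finite; since $\kappa(\ell+y)/\kappa(\ell) \to e^{-\lambda y}$ uniformly on $[-1,1]$ by the uniform convergence theorem for slowly varying functions \citep*[Theorem 1.5.2]{bingham-goldie-teugels:89a} applied to $\ell^\alpha$, the ratio is bounded for large $\ell$, so (b) and (c) both follow.

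Given $\phi(y) = e^{-\lambda y}$, the last display of Theorem \ref{theo:MomentsIVGeneral} yields
\begin{equation*}
\mathbb{E}\big[(IV_t - \xi)(IV_{t+\ell} - \xi)\big] \sim \xi^2 \kappa(\ell) \int_{-1}^{1}(1-|y|)e^{-\lambda y}\,\mathrm{d}y,
\end{equation*}
and a direct evaluation of the elementary integral gives $\int_{-1}^1 (1-|y|)e^{-\lambda y}\,\mathrm{d}y = \lambda^{-2}(e^{\lambda} - 1)(1 - e^{-\lambda}) = \lambda^{-2}e^{-\lambda}(e^\lambda - 1)^2$. Substituting the asymptotic form of $\kappa(\ell)$ found above,
\begin{equation*}
\mathbb{E}\big[(IV_t-\xi)(IV_{t+\ell}-\xi)\big] \sim \xi^2 \cdot \frac{\nu^2\Gamma(\alpha+1)}{2^{\alpha+1}\lambda^{\alpha+1}}\ell^{\alpha}e^{-\lambda\ell} \cdot \frac{(e^\lambda-1)^2}{\lambda^2}e^{-\lambda} = \frac{\nu^2\xi^2\Gamma(\alpha+1)(e^\lambda-1)^2}{2^{\alpha+1}\lambda^{\alpha+2}}\ell^\alpha e^{-\lambda(\ell+1)},
\end{equation*}
which is the claim. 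The main obstacle is the bookkeeping in verifying conditions (b)--(c) cleanly when $\alpha > 0$ (so that $\kappa$ has a genuinely growing polynomial factor and \eqref{equation:covariance-asymptotic} must be read with a negative $\beta$); the cleanest route is to invoke the uniform convergence theorem for slowly varying functions directly rather than re-deriving uniformity of $K_{\alpha+1/2}(\lambda(\ell+y))/K_{\alpha+1/2}(\lambda\ell)$ by hand, though the latter is also available from the monotonicity and asymptotics of the Bessel function. Everything else is a short integral evaluation.
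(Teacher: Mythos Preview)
Your proposal is correct and follows essentially the same route as the paper: derive $\kappa(0)$ and $\kappa(\ell)$ from the convolution integral (the paper cites \citet*{gradshteyn-ryzhik:07a}, formula 3.383(8), for the Bessel-$K$ identity), then verify that $\kappa$ fits the template \eqref{equation:covariance-asymptotic} so that the last part of Theorem~\ref{theo:MomentsIVGeneral} applies with $\phi(y)=e^{-\lambda y}$, and finally evaluate $\int_{-1}^{1}(1-|y|)e^{-\lambda y}\,\mathrm{d}y = \lambda^{-2}e^{-\lambda}(e^{\lambda}-1)^{2}$ and combine. The only substantive difference is that the paper obtains the asymptotic form of $\kappa(\ell)$ by citing Remark~4.4 of \citet*{bennedsen-lunde-pakkanen:21a}, whereas you derive it directly from the large-argument expansion $K_{a}(z)\sim\sqrt{\pi/(2z)}\,e^{-z}$; your observation about the sign of $\beta$ when $\alpha>0$ is a fair caveat that the paper suppresses by outsourcing the verification to that reference.
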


\begin{proof}
Recall that $\kappa( \ell) = \text{cov}(Y_{t}, Y_{t+ \ell})$, where for the Gamma-$\mathcal{BSS}$ process:
\begin{align*}
\kappa( \ell) = \nu^{2} \int_{0}^{ \infty} h(x) h(x+ \ell) \text{d}x.
\end{align*}
Inserting the Gamma kernel, $h(x) = x^{ \alpha}e^{- \lambda x}$, we deduce the identity:
\begin{align*}
\kappa(0) &= \nu^{2} \int_{0}^{ \infty} x^{2 \alpha} \exp(-2 \lambda x) \text{d}x \\[0.10cm]
&= \nu^{2} (2 \lambda)^{-2 \alpha-1} \int_{0}^{ \infty} z^{2 \alpha} \exp(-2z) \text{d}z \\[0.10cm]
&= \nu^{2} (2 \lambda)^{-2 \alpha-1} \Gamma(2 \alpha+1).
\end{align*}
Now, for each $\ell>0$,
\begin{align*}
\kappa( \ell) &= \nu^{2} \exp(- \lambda \ell) \int_{0}^{ \infty} x^{ \alpha} (x+ \ell)^{ \alpha} \exp(-2 \lambda x) \text{d}x \\[0.10cm]
&= \frac{ \nu^{2} \Gamma( \alpha+1)}{ \sqrt{ \pi}} \left( \frac{ \ell}{2 \lambda} \right)^{ \alpha+ \frac{1}{2}} K_{ \alpha+1/2}( \lambda \ell),
\end{align*}
where the last equality follows from \citet*{gradshteyn-ryzhik:14a}, formula 3.383 (8).

As $\kappa( \ell)$ adheres to \eqref{equation:covariance-asymptotic} with $\beta = \alpha$ and $\rho = \lambda$ by Remark 4.4 in \citet*{bennedsen-lunde-pakkanen:22a}, Theorem \ref{theo:MomentsIVGeneral} applies with
\begin{equation*}
\int_{-1}^{1}(1-|y|) \phi(y) \text{d}y = \int_{-1}^1(1-|y|) \exp(- \lambda y) \text{d} y = \frac{ \exp(- \lambda)( \exp( \lambda)-1)^{2}}{ \lambda^{2}},
\end{equation*}
so it follows that
\begin{equation*}
\gamma_{ \ell + 1,1} \sim F( \ell; \alpha, \lambda,v, \xi), \quad \ell \rightarrow \infty,
\end{equation*}
where
\begin{equation*}
F(\ell; \alpha, \lambda,v, \xi) \equiv \underbrace{ \frac{v^{2} \xi^{2} \Gamma( \alpha+1)( \exp( \lambda)-1)^{2}}{2^{ \alpha+1} \lambda^{ \alpha+2}}}_{>0} \ell^{ \alpha} \exp(- \lambda( \ell+1))
\end{equation*}
for $\ell>0$, $\alpha > - \frac{1}{2}$, $\lambda>0$, $v>0$ and $\xi>0$. \qed
\end{proof}

The Gamma-$\mathcal{BSS}$ process achieves Assumption \ref{assumption:memory} in its entire parameter space. Condition \emph{(i)} has been established for general $\mathcal{BSS}$ processes in \citet*[Proposition 2.2]{bennedsen-lunde-pakkanen:17b}. Condition \emph{(ii)} follows by noting that the modified Bessel function of the second kind, $K_{ \alpha}$, that appears in its covariance function is continuous. Moreover, the process is already expressed in the form of \eqref{equation:moving-average}.

There is a more complete result about identification for the Gamma-$\mathcal{BSS}$ process.
\begin{proposition}
In the Gamma-$\mathcal{BSS}$ process, the moment equality \eqref{eq:allmoments} holds if and only if $\theta_{1} = ( \xi_{1}, \nu_{1}, \lambda_{1}, \alpha_{1})$ equals $\theta_{2} = ( \xi_{2}, \nu_{2}, \lambda_{2}, \alpha_{2})$.
\end{proposition}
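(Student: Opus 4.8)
The ``if'' direction is immediate, so the plan is to prove the converse: assuming \eqref{eq:allmoments} holds for $\theta_{1} = (\xi_{1}, \nu_{1}, \lambda_{1}, \alpha_{1})$ and $\theta_{2} = (\xi_{2}, \nu_{2}, \lambda_{2}, \alpha_{2})$, I would deduce $\theta_{1} = \theta_{2}$ componentwise. First, since $g^{(1)}_{0}(\theta) = \xi$ by Theorem \ref{theo:MomentsIVGeneral}, the first-moment equality in \eqref{eq:allmoments} gives $\xi_{1} = \xi_{2} =: \xi$. Writing $\gamma^{(i)}_{\ell} = \mathbb{E}_{\theta_{i}}[(IV_{t}-\xi)(IV_{t+\ell}-\xi)] = g_{\ell}(\theta_{i}) - \xi^{2}$, the remaining equalities in \eqref{eq:allmoments} then translate into $\gamma^{(1)}_{\ell} = \gamma^{(2)}_{\ell}$ for every $\ell \in \mathbb{N}$.

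The main tool is the tail asymptotics recorded in Lemma \ref{lemma:acf-gamma-bss}: by \eqref{equation_iv-gamma-bss}, $\gamma^{(i)}_{\ell} \sim C_{i}\, \ell^{\alpha_{i}}\, e^{-\lambda_{i}(\ell+1)}$ as $\ell \to \infty$, with $C_{i} = \nu_{i}^{2}\xi^{2}\Gamma(\alpha_{i}+1)(e^{\lambda_{i}}-1)^{2}/(2^{\alpha_{i}+1}\lambda_{i}^{\alpha_{i}+2}) > 0$. In particular the $\gamma^{(i)}_{\ell}$ are eventually strictly positive, so I can form the ratio $\gamma^{(1)}_{\ell}/\gamma^{(2)}_{\ell} \equiv 1$ and let $\ell \to \infty$ to obtain
\begin{equation*}
1 = \lim_{\ell \to \infty} \frac{C_{1}}{C_{2}}\, \ell^{\alpha_{1}-\alpha_{2}}\, e^{-(\lambda_{1}-\lambda_{2})(\ell+1)}.
\end{equation*}

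From here I would peel off the parameters in the order in which they dominate this limit. The exponential factor is decisive first: if $\lambda_{1} \neq \lambda_{2}$ the right-hand side tends to $0$ or $\infty$, forcing $\lambda_{1} = \lambda_{2} =: \lambda$. With the exponential neutralised, the polynomial factor forces $\alpha_{1} = \alpha_{2} =: \alpha$ by the same reasoning. The surviving identity $C_{1} = C_{2}$ then becomes an equation in $\nu_{1}, \nu_{2}$ alone, since all other quantities now coincide, and because $\nu \mapsto \nu^{2}\xi^{2}\Gamma(\alpha+1)(e^{\lambda}-1)^{2}/(2^{\alpha+1}\lambda^{\alpha+2})$ is strictly increasing on $(0,\infty)$, this yields $\nu_{1} = \nu_{2}$, completing the proof.

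I do not anticipate a serious obstacle: the argument uses only the tail of the autocovariance sequence, and it works precisely because $\lambda$, $\alpha$, $\nu$ enter the limiting expression at genuinely different scales (exponential, polynomial, constant), so they can be separated one at a time. The one point worth stating carefully is that the asymptotic equivalence in \eqref{equation_iv-gamma-bss} makes each $\gamma^{(i)}_{\ell}$ eventually positive, which legitimises taking the ratio. It is also instructive to contrast this with the fSV model of Appendix \ref{appendix:identification}, where $\rho = 0$ in \eqref{equation:covariance-asymptotic}: there the exponential factor is absent and only $\nu/\lambda$ can be recovered, whereas here the genuine exponential decay at rate $\lambda$ is exactly what disentangles $\nu$ from $\lambda$ and delivers full identification.
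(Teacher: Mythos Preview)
Your proposal is correct and follows essentially the same route as the paper: both arguments form the ratio of the integrated-variance autocovariances, invoke the tail asymptotics from Lemma \ref{lemma:acf-gamma-bss}, and then read off $\lambda_{1}=\lambda_{2}$, $\alpha_{1}=\alpha_{2}$, and finally $\nu_{1}=\nu_{2}$ from the exponential, polynomial, and constant factors in turn (with $\xi_{1}=\xi_{2}$ coming from the first moment). Your write-up is in fact a touch more careful than the paper's, since you explicitly justify why the ratio is well-defined via the eventual positivity of $\gamma^{(i)}_{\ell}$.
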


\begin{proof} Assume \eqref{eq:allmoments}, then by Lemma \ref{lemma:acf-gamma-bss}:
\begin{equation} \label{eq:ratio-lim}
\begin{split}
\lim_{ \ell \rightarrow \infty} \frac{ \mathbb{E}_{ \theta_{1}} \big[(IV_{t} - \xi_{1})(IV_{t+l} - \xi_{1}) \big]}{ \mathbb{E}_{ \theta_{2}} \big[(IV_{t} - \xi_{2})(IV_{t+l} - \xi_{2}) \big]} &= \lim_{ \ell \rightarrow \infty} \frac{ \displaystyle \frac{ \nu_{1}^{2} \xi_{1}^{2} \Gamma( \alpha_{1}+1)( \exp( \lambda_{1}) - 1)^{2}}{2^{ \alpha_{1} + 1} \lambda_{1}^{ \alpha_{1} + 2}} \ell^{ \alpha_{1}} \exp(- \lambda_{1}( \ell+1))}{ \displaystyle \frac{ \nu_{2}^{2} \xi_{2}^{2} \Gamma( \alpha_{2} + 1)( \exp( \lambda_{2}) - 1)^{2}}{2^{ \alpha_{2} + 1} \lambda_{2}^{ \alpha_{2} + 2}} \ell^{ \alpha_{2}} \exp(- \lambda_{2}( \ell + 1))} \\[0.10cm]
&= \lim_{ \ell \rightarrow \infty} \underbrace{ \frac{ \displaystyle \frac{ \nu_{1}^{2} \xi_{1}^{2} \Gamma( \alpha_{1} + 1)( \exp( \lambda_{1}) - 1)^{2}}{2^{ \alpha_{1} + 1}}}{ \displaystyle \frac{ \nu_{2}^{2} \xi_{2}^{2} \Gamma( \alpha_{2} + 1)( \exp( \lambda_{2}) - 1)^{2}}{2^{ \alpha_{2}+1}}}}_{>0} \ell^{ \alpha_{1} - \alpha_{2}} \exp(-( \lambda_{1} - \lambda_{2}) \ell).
\end{split}
\end{equation}
In order for this limit to be strictly positive and finite, it is necessary that $\lambda_{1} = \lambda_{2}$. In this vein, we also deduce that $\alpha_{1} = \alpha_{2}$ is necessary. Finally, since we already know that $\xi_{1} = \xi_{2}$ from the equality of the first moment, it remains to note that the limit \eqref{eq:ratio-lim} can only be equal to one if $\nu_{1} = \nu_{2}$.
\end{proof}

\pagebreak


\renewcommand{\baselinestretch}{1.0}
\small
\bibliographystyle{rfs}
\bibliography{userref}

\end{document}